\documentclass[12pt]{article}
\usepackage{graphicx}
\usepackage{natbib}
\usepackage[margin=1in]{geometry}
\usepackage{hyperref}
\hypersetup{
    colorlinks=true,
    linkcolor=blue,
    urlcolor=blue,
    citecolor=blue
}
\usepackage{parskip}
\usepackage{enumitem}
\usepackage{amsmath,amssymb}
\usepackage{amsthm}
\usepackage{booktabs}
\usepackage{multicol}
\usepackage{multirow}
\usepackage[font=small, labelfont=bf]{caption}
\usepackage{subcaption}
\usepackage{comment}
\usepackage{tikz}
\usetikzlibrary{positioning, shapes, arrows.meta, fit, decorations.pathreplacing}
\usepackage{setspace}
\newcommand{\R}{\mathbb{R}}

\newcommand{\E}{\mathbb{E}}
\newcommand{\Var}{\mathrm{Var}}
\newcommand{\Cov}{\mathrm{Cov}}
\newcommand{\tr}{\mathrm{tr}}
\newcommand{\argmax}{\operatorname*{arg\,max}}

\DeclareMathOperator{\vech}{vech}
\DeclareMathOperator{\softplus}{softplus}
\DeclareMathOperator{\polylog}{polylog}
\DeclareMathOperator{\chol}{chol}
\DeclareMathOperator{\ssign}{ssign}      % for "smooth softsign"

\renewcommand{\vec}[1]{\boldsymbol{#1}}
\newcommand{\mat}[1]{\boldsymbol{\mathbf{#1}}}
\newcommand{\ones}{\vec{1}}
\makeatletter
\newcommand*\bigcdot{\mathpalette\bigcdot@{.5}}
\newcommand*\bigcdot@[2]{\mathbin{\vcenter{\hbox{\scalebox{#2}{$\m@th#1\bullet$}}}}}
\makeatother

\newcommand{\numalt}{K}

\newcommand{\dimparam}{p}

\newcommand{\paramspace}{\Theta}
\newcommand{\utilcovspace}{\mathcal{X}_\numalt}
\newcommand{\utilcovspaceex}{\overline{\mathcal{X}}_\numalt}
\newcommand{\utilcovspaceexsubset}{\overline{\mathcal{X}}_\numalt^*}

\newcommand{\param}{\vec{\theta}}
\newcommand{\paramtrue}{\vec{\theta}_0}
\newcommand{\paramest}{\hat{\vec{\theta}}_n}

\newcommand{\parammle}{\tilde{\vec{\theta}}_n}
\newcommand{\parampseudo}{\vec{\theta}_0^{\dagger}}
\newcommand{\parampseudon}{\vec{\theta}_n^{\dagger}}
\newcommand{\netparam}{\vec{\phi}}

\newcommand{\gradparam}{\nabla_{\param}}

\newcommand{\util}{\vec{v}}
\newcommand{\utilstar}{\vec{v}^*}
\newcommand{\cov}{\mat{\Sigma}}
\newcommand{\covstar}{\mat{\Sigma}^*}
\newcommand{\Cstar}{C^*}
\newcommand{\centermat}{\mat{M}}

\newcommand{\choiceprob}{P}
\newcommand{\choiceprobhat}{\hat{P}}

\newcommand{\loglikobs}{m}
\newcommand{\loglikobshat}{\hat{\loglikobs}}
\newcommand{\loglikobshatn}{\loglikobshat_n}
\newcommand{\loglik}{\ell_n}
\newcommand{\loglikhat}{\hat{\ell}_n}
\newcommand{\score}{\psi}
\newcommand{\scorehat}{\hat{\score}}
\newcommand{\scorehatn}{\scorehat_n}

\newcommand{\fisher}{\mat{J}}
\newcommand{\Amat}{\mat{A}}
\newcommand{\Bmat}{\mat{B}}
\newcommand{\sandwichmat}{\mat{V}}

\newcommand{\invarutil}{g}
\newcommand{\exceptset}{\mathcal{E}}
\newcommand{\exceptsetutil}{\mathcal{B}}
\newcommand{\exceptsetutilex}{\overline{\mathcal{B}}}

\newcommand{\permgroupsub}{S_{\numalt-1}^{(j)}}
\newcommand{\permgroupsubsub}{S_{\numalt-2}^{(1,j)}}

\newcommand{\convp}{\xrightarrow{p}}
\newcommand{\convd}{\xrightarrow{d}}

\newcommand{\Prob}{\mathrm{Pr}}

\theoremstyle{plain}
\newtheorem{theorem}{Theorem}
\newtheorem{proposition}{Proposition}
\newtheorem{lemma}{Lemma}
\newtheorem{corollary}{Corollary}

\theoremstyle{definition}
\newtheorem{definition}{Definition}
\newtheorem{assumption}{Assumption}
\theoremstyle{remark}
\newtheorem{remark}{Remark}

\usepackage{titling}
\title{Amortized Inference for Correlated Discrete Choice Models via Equivariant Neural Networks}

\author{
Easton Huch\thanks{Postdoctoral Fellow, Johns Hopkins Carey Business School, ehuch@jhu.edu.}
\and
Michael Keane\thanks{Carey Distinguished Professor, Johns Hopkins Carey Business School, mkeane14@jhu.edu.}
}

\date{August 2026}

\begin{document}

\maketitle

\begin{abstract}
Discrete choice models are fundamental tools in management science, economics, and marketing for understanding and predicting decision-making. Logit-based models are dominant in applied work, largely due to their convenient closed-form expressions for choice probabilities. However, they impose restrictive assumptions on the stochastic utility component, constraining our ability to capture realistic substitution patterns. We propose an amortized inference approach that relies on a neural network emulator to approximate choice probabilities for general error distributions, including those with correlated errors. We develop a specialized neural network architecture 
%(and accompanying training procedures) 
designed to respect the invariance properties of discrete choice models. We provide group-theoretic foundations for the architecture, including a proof of universal approximation given a minimal set of invariant features. Once trained, the emulator enables rapid likelihood evaluation and gradient computation. We use Sobolev training, augmenting the likelihood loss with a gradient-matching penalty, so that the emulator learns both choice probabilities and their derivatives. We show that emulator-based maximum likelihood estimators are consistent and asymptotically normal under mild approximation conditions, and we provide sandwich standard errors that remain valid for a psuedo-true parameter even with imperfect likelihood approximation. Simulations show significant gains over the GHK simulator in accuracy and speed.

\vspace{1.0em}
\noindent\textbf{Keywords:} amortized inference; DeepSet; discrete choice; invariant theory; multinomial probit; neural network emulator; permutation equivariance; Sobolev training\vspace{0.3em}
\end{abstract}

\section{Introduction}
\label{sec:intro}

Discrete choice models are a widely used tool in management science, economics, marketing, and other fields for understanding how individuals and organizations make decisions among finite sets of alternatives \citep{mcfadden1974conditional}. These models provide a structural framework that delivers interpretable parameters---such as willingness-to-pay, demand elasticities, and measures of perceived similarity between products---while also enabling predictions about effects of policy changes, such as price changes and product introductions/deletions.

The dominant discrete choice model in applied work is the multinomial logit (MNL), which assumes the utility that consumer $i$ obtains from alternative $j$ takes the form $U_j = v_j + \epsilon_j$ where $v_j$ is the deterministic component of utility, typically written as a linear function $\vec{x}_j^{\top} \vec{\beta}$ of alternative $j$'s attributes $\vec{x}_j$, and $\epsilon_j$ is a Type I extreme value (Gumbel) error that is independent and identically distributed (iid) across alternatives $j = 1, \ldots, \numalt$. This setup yields simple closed-form choice probabilities: the probability of choosing alternative $j$ is given by the ``softmax'' function $\exp(v_j) / \sum_{k=1}^{\numalt} \exp(v_k)$. The resulting computational convenience has made MNL the default choice in countless applications.  

However, this convenience comes at a well-known cost. The iid Gumbel error assumption implies the restrictive independence of irrelevant alternatives (IIA) property: the odds ratio between any two alternatives is unaffected by the presence or attributes of other alternatives, which can lead to unrealistic substitution patterns. 
A popular generalization of MNL is the mixed logit model (mixed-MNL), in which the utility weight $\vec{\beta}$ is allowed to be heterogeneous across consumers. Estimation is straightforward via simulation methods \citep{train2009discrete}. The mixed-MNL relaxes IIA at the aggregate level, but IIA still holds at the level of consumer $i$'s individual choices given the corresponding random utility weight, $\vec{\beta}_i$.

The multinomial probit (MNP) is a well-known alternative to MNL that relaxes IIA and allows for flexible substitution patterns by assuming $\epsilon_1,\ldots,\epsilon_{\numalt}$ come from a multivariate normal distribution with cross-alternative correlations.
Despite this advantage, MNP has seen relatively limited adoption, largely due to its computational demands. MNP choice probabilities have no closed form and require evaluating multivariate normal rectangle probabilities. Estimation relies on more sophisticated simulation methods than needed for mixed-MNL: the GHK simulator in a classical setting \citep{geweke1989bayesian, hajivassiliou1990simulation, keane1994solution} or MCMC in a Bayesian setting \citep{McCullochRossi1994}.

We propose a fundamentally different approach: Rather than simulating choice probabilities anew for each likelihood evaluation, we train a neural network emulator to directly approximate the choice probability function. This strategy---known as amortized inference---shifts the computational burden from inference time to a one-time training phase. Once trained, the emulator provides rapid deterministic approximations to choice probabilities via simple function calls. The amortized inference framework has proven highly successful in multiple scientific domains for inference in computationally intensive simulation models \citep{lueckmann2019likelihood,cranmer2020frontier}. Here, we adapt it to discrete choice models, extending beyond MNL and MNP to general correlated error distributions. One especially interesting case is an extension of MNL to allow for correlated Gumbel errors, thus relaxing IIA at the individual level. We refer to this as the correlated-Gumbel model.

In the econometrics literature, \citet{Norets2012} employs a related strategy, approximating the expected value function in dynamic discrete choice models via a neural network. He applies this strategy to a specific parametric model, so changes in the model require modifying and retraining the network. In contrast, our emulator operates on the generic choice problem, given only the deterministic utilities and the error distribution, enabling modifications to the parametric model (e.g., the functional form of utility) without retraining the emulator.

Our methodology relies on several new contributions. First, we develop a neural network architecture specifically designed for choice models. It respects their fundamental invariance properties, which we formalize as symmetries (group actions) of the choice-probability map.\footnote{If these properties are not embedded in the architecture, the NN must learn them during training. As a continuum of models generate equivalent choice probabilities, this would slow learning considerably.} Choice probabilities are invariant to affine utility transformations (location and scale shifts), and equivariant with respect to permutations of choice alternatives. Our architecture incorporates a preprocessing transformation that reduces the size of the feature space by enforcing location and scale invariance. The processed features are then passed to a per-alternative encoder module based on the DeepSet architecture \citep{zaheer2017deep}.
The output is then concatenated across alternatives and processed through equivariant layers that %impose a sum-to-one constraint on the output probabilities while respecting 
respect the permutation equivariance property.

Second, we establish  the theoretical foundations of our architecture. We prove that it can universally approximate choice probabilities on compact subsets of the parameter space, outside a measure-zero exceptional set. This result connects our architecture to the theory of orbit separation under group actions, extending the universal approximation results of \citet{blumsmith2025machine} on symmetric matrices to the joint space of utility vectors and centered covariance matrices.  \citet{blumsmith2025machine} employ Galois theory to prove generic orbit separation, while we provide a direct proof based on an invariant reconstruction argument.

Third, we establish the statistical properties of maximum likelihood estimators (MLEs) formed using an emulator approximation of the true discrete choice probabilities. We show that if the emulator approximates the true log likelihood sufficiently well---specifically, if the average approximation error is $o_p(n^{-1})$---then the emulator-based estimator inherits the consistency and asymptotic normality of the exact MLE. When this condition is not met, we show that valid inference can still be obtained for a pseudo-true parameter via sandwich standard errors, treating the emulator as a working model in the quasi-ML framework. 

By careful design, our preprocessing transformations and architecture enforce smoothness of choice probabilities with respect to the inputs: the deterministic utilities and corresponding scale (covariance) matrix. We use Sobolev training \citep{czarnecki2017sobolev} to match emulator gradients to those of the log choice probabilities. Together, these design choices enable reliable use of automatic differentiation for downstream model-fitting and inference tasks.

With a pretrained emulator, generalizing from logit to probit (or another error distribution) requires only the replacement of closed-form softmax probabilities with emulator evaluations of choice probabilities when forming the likelihood.
In the probit case, given a fixed computational budget, we show via simulations that an ML estimator using our amortized inference procedure 
matches or exceeds the performance of one using the GHK algorithm in terms of estimation error and coverage rates. Furthermore, our approach can easily handle models other than MNP where an efficient simulation algorithm like GHK is unavailable.

For concreteness, much of our exposition focuses on MNP. However, the emulator approach and supporting theory are largely agnostic to the assumed parametric form of the errors. Generalizing to other error specifications (e.g., Gumbel or multivariate-$t$) is straightforward, requiring only modest changes to the training data generation and emulator inputs.

The remainder of this paper is organized as follows. Section~\ref{sec:literature} summarizes related literature on choice modeling. Section~\ref{sec:setup} describes the family of correlated discrete choice models we consider and their properties. Section~\ref{sec:design} presents the emulator architecture and training procedure. Section~\ref{sec:theory} establishes the theoretical properties of the architecture and emulator-based estimators. Section~\ref{sec:simulation} presents simulation results, and Section~\ref{sec:conclusion} concludes.

\vspace{-6pt}
\section{Related Literature}
\label{sec:literature}

The literature on applying machine learning methods to discrete choice can be divided into two broad streams. The first stream maintains the classic random utility model (RUM) of choice, in which choice probabilities are generated by a population of rational consumer types with different (but stable) preference orderings over the universe of choice objects, as explained by \citet{BlochMarschak1960} and \citet{McfaddenRichter1990}.\footnote{Suppose there are $\bar{\numalt}$ objects in the universe and a consumer is presented with a choice set (or ``assortment'') that contains $\numalt \le \bar{\numalt}$ elements. A key implication of RUM is that the utility a consumer derives from product $k$ is invariant to the choice set, ruling out context or assortment effects.}  MNL, MNP, and mixed-MNL are all members of the RUM class provided one maintains the RUM assumption that the utility of an option depends only on its own characteristics. 
Key papers in this strand start from this basic MNL structure and use neural networks to generalize the functional form of utility: \citet{Bentz2000}, \citet{Sifringer2020}, \citet{WangMoZhao2020}, \citet{han2022}, and \citet{singh2023choice}. Two important recent papers extend this work: \citet{Aouad2025} develop an architecture that implements the mixed logit model with a flexible distribution of taste heterogeneity (RUMnet), and \citet{Bagheri2025} develop another architecture that generalizes the Gumbel error assumption (RUM-NN).
Both papers use softmax-smoothed sample averages to approximate choice probabilities within the loss function, resulting in increased computation time relative to pure logit-based models.

\citet{Aouad2025} build on the generic approximation property of mixed-MNL models shown in \citet{mcfadden2000mixed}, which relies on a flexible basis expansion of the utility function and a nonparametric mixing distribution. In practice, these model attributes are unknown to analysts. Furthermore, this approximation property is not unique to mixed-MNL. Rather, as noted in both \citet{Aouad2025} and \citet{mcfadden2000mixed}, it holds more generally for a larger class of hierarchical discrete choice models, including mixed-MNP models.  \citet{mcfadden2000mixed} use this approximation property to justify adoption of mixed-MNL with a parametric mixing distribution as a computationally convenient alternative to other models lacking closed-form choice probabilities, such as MNP. But this justification of mixed-MNL becomes less compelling given a practical, general-purpose alternative like the emulators we propose here.

The second stream dispenses with the RUM structure, often motivated by a desire to relax restrictive MNL assumptions like IIA and to allow more flexible substitution patterns.
Some papers in this stream view discrete choice as a general classification problem that is amenable to machine learning methods. This is exemplified by \citet{WangRoss2018}, \citet{Lheritier2019}, \citet{Rosenfeld2020}, \citet{ChenMisic2022}, and \citet{Chen2025}. Others maintain an MNL structure at the top level---that is, choice probabilities are determined by a vector of alternative-specific utilities that enter a softmax function---but a neural net is used to construct the alternative-specific utility functions in flexible ways that deviate from RUM assumptions. For instance, the utility of alternative $j$ is allowed to depend on attributes of other alternatives to generate context effects. This is exemplified by   
\citet{WangMoZhao2021}, \citet{Wong2021}, \citet{Cai2022}, \citet{Pfannschmidt2022}, and \citet{Berbeglia2025}.

These two streams of research present a fundamental tension between model flexibility and model interpretability: While RUM models are favored for their interpretability and grounding in economic theory, the dominant RUM models in empirical work (linear MNL and mixed-MNL) impose strong and potentially unrealistic constraints on deterministic utilities and substitution patterns; in particular, they assume IIA at the level of individual choices.
The first stream seeks to address this problem by adding flexibility to RUM models. But many of these proposals maintain the assumption of independent logit errors, and thus they still constrain substitution patterns to obey IIA.
The exceptions, namely, RUMnet and RUM-NN, allow for more general substitution patterns via flexible error distributions. But they do so at the expense of computational efficiency, as their approach requires expensive sample-average approximations to the likelihood function (a problem that our approach avoids, as we discuss below).
The second stream abandons the RUM structure altogether. While this leads to very flexible models, it makes it difficult or impossible to obtain reliable inferences for many economically meaningful quantities, such as consumer welfare, willingness-to-pay measures, demand elasticities, and substitution effects.

MNP models constitute a notable exception to the above tradeoff as they allow for flexible substitution patterns via \emph{interpretable} covariance relationships.
Moreover, if additional flexibility is desired, they can be extended to allow the deterministic utilities to follow the functional form of a neural network \citep{Hruschka2007}.
Conversely, if researchers desire a more parsimonious or interpretable model, analysts can constrain the covariance structure via penalization methods as in \citet{Jiang2025} or via factor structures as we illustrate in Section \ref{sec:simulation}.
Despite these virtues, empirical applications of MNP are relatively sparse in the literature due, in large part, to the difficulty of evaluating MNP choice probabilities.

More broadly, MNP models are just one member of a large class of RUM models with correlated errors.
Other models in this class can be generated by assuming errors of the form $\vec{\epsilon} = \cov^{1/2} \vec{\epsilon}^*$, where $\vec{\epsilon}^*$ is a vector of exchangeable errors, $\cov$ is a scale matrix, and $\cov^{1/2}$ satisfies $\cov^{1/2}(\cov^{1/2})^{\top} = \cov$. This is the general class of models we study. 
For example, taking $\epsilon_j^* \overset{iid}{\sim} \text{Gumbel}(0, 1)$ provides a natural way to generalize MNL and mixed-MNL to allow correlated errors.
This ``correlated-Gumbel'' model relaxes IIA at the individual level. 
As another example, we could assume $\vec{\epsilon}^* \sim \text{Multivariate-}t(\vec{0}, \mat{I}, \nu)$ to capture heavy-tailed behavior via the degrees-of-freedom parameter $\nu$.
\citet{Bagheri2025} also consider correlated error distributions formed in this manner, but they need to resort to expensive simulation methods to approximate the choice probabilities because an efficient algorithm (GHK) is available only in the Gaussian case. Our approach allows one to circumvent this computational barrier using a neural net emulator for general discrete choice probabilities.

The primary contribution of this work is an amortized inference framework, based on our emulator, that produces accurate, reusable, and computationally efficient estimates of choice probabilities for RUM models with general error distributions, including those featuring nontrivial correlation structures.
The framework presents a potential resolution to the flexibility--interpretability tradeoff highlighted above, enabling practical estimation of flexible choice models without sacrificing the economically meaningful insights and parsimony of RUM models.
The framework is supported by strong theoretical justification, including a universal approximation guarantee and asymptotic inference results under mild approximation conditions.
Moreover, the framework is largely complementary to recent advances in machine learning methods, including the two streams discussed above.
In many cases, these methods could be enhanced with emulator-based likelihood evaluations, resulting in flexible models with interpretable substitution patterns and manageable computational demands.

\vspace{-12pt}
\section{Problem Setup}
\label{sec:setup}

This section summarizes the problem setup, including the invariance properties of discrete choice models and our inferential goals.

\vspace{-12pt}
\subsection{Discrete Choice Models}

We consider a decision-maker choosing among $\numalt$ mutually exclusive alternatives. The decision-maker assigns a latent utility $U_j$ to each alternative $j \in \{1, \ldots, \numalt\}$ and selects the alternative with the highest utility:
\[
Y = \underset{{j \in \{1, \ldots, \numalt\}}}{\argmax}\; U_j.
\]
The latent utilities decompose into deterministic and stochastic components:
\vspace{-1pt}
\[
U_j = v_j + \epsilon_j, \quad j = 1, \ldots, \numalt,
\vspace{-1pt}
\]
where $v_j$ is the deterministic (systematic) utility that depends on observable characteristics, and $\epsilon_j$ is a random error capturing unobserved factors. As explained in Section \ref{sec:intro}, the deterministic utility typically takes a linear form $v_j = \vec{x}_j^\top \vec{\beta}$, where $\vec{x}_j$ is a vector of alternative-specific attributes and $\vec{\beta}$ is a parameter vector to be estimated.

We introduce a scale matrix, $\cov$, with the following structure:\footnote{Since only utility differences affect choices, MNP is often parameterized in differences relative to a base alternative \citep{keane1992}. This is convenient because Gaussianity is preserved under differencing. For non-Gaussian errors this generally fails, so we instead use the level representation in \eqref{eq:Sigma-structure}.}
\begin{equation}\label{eq:Sigma-structure}
\cov = \begin{bmatrix}
    1 & \vec{0}_{\numalt-1}^{\top} \\
    \vec{0}_{\numalt-1} & \mat{\Omega},
\end{bmatrix}
\end{equation}
where $\mat{\Omega} \in \R^{(\numalt -1) \times (\numalt - 1)}$ and $\vec{0}_{\numalt-1}$ is a zero vector of dimension $\numalt -1$.
The scale matrix transforms the errors as 
\[\vec{\epsilon} = (\epsilon_1, \ldots, \epsilon_\numalt)^\top = \cov^{1/2} \vec{\epsilon}^*,\]
where $\vec{\epsilon}^*$ is a vector of exchangeable errors and $\cov^{1/2}$ is chosen according to a predefined rule such that $\cov^{1/2} (\cov^{1/2})^{\top} = \cov$.
For general distributions, we specify that $\cov^{1/2}$ is the (unique) symmetric positive semidefinite square-root matrix.
However, in special cases described in Definition \ref{def:factorization-invariant}, the choice of factor does not matter, so we are free to select another rule, such as the lower Cholesky factor.

Under the above setup, we obtain the MNP family with $\vec{\epsilon}^* \sim \mathcal{N}(\vec{0}_{\numalt}, \mat{I}_{\numalt})$ and $\mat{\Omega}$ unconstrained, and we obtain the MNL as the special case $\mat{\Omega} = \mat{I}_{\numalt-1}$ and $\epsilon_j^* \overset{iid}{\sim}\text{Gumbel}(a, 1)$, where $a \in \R$ is arbitrary.\footnote{The constant $a$ cancels when we consider the distribution of utility differences. To obtain mean-zero errors, we can set $a$ equal to the negative of the Euler--Mascheroni constant (typically denoted $\gamma$).} We refer to the model obtained by allowing $\mat{\Omega}$ to differ from
$\mat{I}_{\numalt-1}$ while maintaining iid Gumbel $\epsilon_j^*$ as the
``correlated-Gumbel'' model. This model relaxes IIA at the individual level.
We impose the following assumption on $\vec{\epsilon}^*$.

\vspace{6pt}
\begin{assumption}\label{assumption:epsilon-star}
The error vector $\vec{\epsilon}^*$ admits a density, $f_{\vec{\epsilon}^*}: \R^{\numalt} \to [0, \infty)$, with respect to Lebesgue measure.
Moreover, $\Cov(\vec{\epsilon}^*) = C\, \mat{I}_{\numalt}$ for some $C > 0$, and the elements of $\vec{\epsilon}^*$ are exchangeable so that $f_{\vec{\epsilon}^*}(\mat{P}_{\pi} \vec{\epsilon}') = f_{\vec{\epsilon}^*}(\vec{\epsilon}')$ for all $\vec{\epsilon}' \in \R^{\numalt}$ and any permutation matrix, $\mat{P}_{\pi}$.
\end{assumption}

In the case of the MNP model, $f_{\vec{\epsilon}^*}$ represents the probability density function (PDF) of the $\mathcal{N}(\vec{0}_{\numalt}, \mat{I}_{\numalt})$ distribution.
Assumption \ref{assumption:epsilon-star} places weak assumptions on $f_{\vec{\epsilon}^*}$, allowing for a wide variety of error distributions, including Gumbel, multivariate-$t$, and many others.

We denote the choice probability for alternative $j$ as
\[
\choiceprob_j(\util, \cov^{1/2}) = \Prob(U_j \geq U_k \text{ for all } k \neq j) = \Prob(\epsilon_k - \epsilon_j \leq v_j - v_k \text{ for all } k \neq j),
\]
where $\util = (v_1, \ldots, v_\numalt)^\top$.
In practice, we may also choose to include additional parameters governing the distribution of $\vec{\epsilon}^*$, such as the degrees-of-freedom parameter, $\nu$, described in Section \ref{sec:literature}, but these parameters are suppressed in the notation for simplicity.
The choice probability can be expressed as the following integral:
\begin{align}
\choiceprob_j(\util, \cov^{1/2}) &= \int_{\mathcal{R}_j} f_{\vec{\epsilon}^*}(\vec{\epsilon}^*) \, d\vec{\epsilon}^*,\label{eq:choice-integral}\\
\mathcal{R}_j &= \big\{\vec{\epsilon}^* \in \R^{\numalt} : v_j + \big[\cov^{1/2}\big]_{j\bigcdot} \vec{\epsilon}^* \geq v_k + \big[\cov^{1/2}\big]_{k\bigcdot} \vec{\epsilon}^* \text{ for all } k \neq j\big\},\nonumber
\end{align}
where $\mathcal{R}_j$ is the region of the error space where alternative $j$ is chosen and $\big[\cov^{1/2}\big]_{j\bigcdot}$ denotes the $j$th row of $\cov^{1/2}$ so that $\big[\cov^{1/2}\big]_{j\bigcdot} \vec{\epsilon}^* = \epsilon_j$.
In general, this integral is not analytically tractable.
In MNP models, in particular, it does not have a closed-form solution for $\numalt \geq 3$, necessitating numerical methods, such as the GHK algorithm, which simulates these probabilities via recursive conditioning. For each evaluation, GHK requires $R$ simulation draws, with the approximation error decreasing at rate $O(R^{-1/2})$.

\vspace{-9pt}
\subsection{Invariance Properties}\label{sec:invariance-properties}

Choice probabilities satisfy three fundamental invariance properties.

\textbf{Location invariance.} Adding a constant to all utilities does not change choice probabilities:
\begin{equation}\label{eq:location-invariance}
\choiceprob_j(\util + C\ones_\numalt, \cov^{1/2}) = \choiceprob_j(\util, \cov^{1/2})
\end{equation}
where $\ones_\numalt$ is the $\numalt$-vector of ones and $C \in \R$ may be random. This follows because choices depend only on utility differences.

\textbf{Scale invariance.} Scaling $\util$ and $\cov^{1/2}$ preserves choice probabilities:
\begin{equation}\label{eq:scale-invariance}
\choiceprob_j(\alpha \util, \alpha \cov^{1/2}) = \choiceprob_j(\util, \cov^{1/2}) \quad \text{for all } \alpha > 0.
\end{equation}
This follows because the scaled model is equivalent to the original model with $\vec{v}$ and $\vec{\epsilon}$ replaced by $\alpha \vec{v}$ and $\alpha \vec{\epsilon}$, respectively.

\textbf{Permutation equivariance.} Relabeling alternatives permutes the choice probabilities correspondingly. For any permutation $\pi$ of $\{1, \ldots, \numalt\}$ with associated permutation matrix $\mat{P}_\pi$:
\[
\choiceprob_j(\util, \cov^{1/2}) = \choiceprob_{\pi^{-1}(j)}(\mat{P}_\pi \util, \mat{P}_\pi \cov^{1/2}).
\vspace{-12pt}
\]

This property holds under the symmetric positive semidefinite square-root matrix because 
\[
\mat{P}_\pi \util + \left(\mat{P}_\pi \cov \mat{P}_\pi^{\top}\right)^{1/2} \vec{\epsilon}^* 
\ =\  \mat{P}_\pi \util + \mat{P}_\pi \cov^{1/2} \mat{P}_\pi^{\top} \vec{\epsilon}^*
\ \overset{d}{=}\ \mat{P}_\pi \util + \mat{P}_\pi \cov^{1/2} \vec{\epsilon}^*.
\]

A naive attempt at constructing an emulator may not respect these invariance properties. Translating, rescaling, or permuting the alternatives could produce different emulator predictions, resulting in slow learning and poor generalization because the emulator would need to learn these properties manually. In Section \ref{sec:design}, we propose a solution that automatically satisfies these properties based on two design components: a preprocessing step that fixes the location and scale, and an equivariant architecture that ensures permutation equivariance without sacrificing expressivity.

The above three invariance properties are satisfied for all discrete choice models we consider.
A subset of these models also satisfy the following special property.
\vspace{12pt}

\begin{definition}\label{def:factorization-invariant}
A discrete choice model is \emph{factorization-invariant} if $\cov_1^{1/2} (\cov_1^{1/2})^{\top} = \cov_2^{1/2} (\cov_2^{1/2})^{\top}$ implies $\choiceprob_j(\util, \cov_1^{1/2}) = \choiceprob_j(\util, \cov_2^{1/2})$ for all $\util \in \R^{\numalt}$ and $j \in [K]$.
\end{definition}

We present this property because, as we show below, if it holds it allows a simplification of the neural net architecture.
Notably, MNP models satisfy this property, so any permissible factorization rule defining $\cov \mapsto \cov^{1/2}$ produces an equivalent model. More generally, this property is satisfied by the broader class of models with elliptically symmetric errors, including multivariate-$t$.

We note, however, that our method applies much more broadly to any model that satisfies Assumption~\ref{assumption:epsilon-star}. For general distributions satisfying Assumption~\ref{assumption:epsilon-star} (but not necessarily Definition~\ref{def:factorization-invariant}), such as $iid$ Gumbel, different factorizations of the same $\cov$ can induce different higher-order moments of the transformed errors, resulting in distinct choice probabilities.
Our proposed method remains applicable provided we employ the symmetric positive semidefinite square-root matrix.
For this reason, the remainder of the paper suppresses the dependence on the factorization rule, denoting the choice probabilities as $\choiceprob_j(\util, \cov)$.
% Our implementation utilizes the Cholesky factor for factorization-invariant models and the symmetric positive semidefinite square-root matrix for models that are not factorization invariant.

\subsection{Inferential Goals}

Given observations $\{(y_i, \mat{X}_i)\}_{i=1}^n$, where $y_i \in \{1, \ldots, \numalt\}$ is the observed choice and $\mat{X}_i$ represents the covariates, we seek to estimate a finite-dimensional parameter vector $\param \in \paramspace \subset \R^\dimparam$. The parameter vector includes the coefficients governing the deterministic utilities, the free parameters governing $\mat{\Omega}$ (and hence $\cov$), and, where applicable, additional parameters governing $f_{\vec{\epsilon}^*}$, such as the degrees-of-freedom parameter $\nu$. Thus, for observation $i$, the model implies a deterministic utility vector $\util_i(\mat{X}_i,\param)$ and scale matrix $\cov_i(\mat{X}_i,\param)$. Section \ref{sec:preprocessing} describes how these model-implied objects are subsequently transformed into the normalized inputs supplied to the emulator.
It is worth noting that the emulator itself does not require identification: it maps any specified $(\util,\cov)$ into choice probabilities. Identification matters only when using observed choices to estimate the parameters generating $(\util,\cov)$; we return to this issue in Section \ref{sec:estimation-theory}.
The (averaged) log-likelihood function is
\[
\loglik(\param) = \frac{1}{n} \sum_{i=1}^n \log \choiceprob_{y_i}\left\{\util_i(\mat{X}_i, \param), \cov_i(\mat{X}_i, \param)\right\},
\]
Maximum likelihood estimation requires maximizing $\loglik(\param)$ over $\paramspace$, which in turn requires repeated evaluation of $\loglik(\param)$ and its gradient $\gradparam \log \choiceprob_j(\util, \cov)$.
Similarly, Bayesian methods such as Hamiltonian Monte Carlo require many repeated evaluations of both $\loglik(\param)$ and $\gradparam \log \choiceprob_j(\util, \cov)$ throughout the estimation process.

Our goal is to construct a neural network emulator that provides fast, accurate approximations of the true choice probabilities, along with analytic gradients via automatic differentiation. The emulator is trained once on simulated data spanning the relevant input space, after which it can be used for rapid inference on many datasets and models.
\vspace{-12pt}
\section{Emulator Design}
\label{sec:design}
\vspace{-3pt}
This section outlines the emulator preprocessing transformations, neural network architecture, training procedure, and inference process.
The emulator must satisfy four desiderata:

\begin{enumerate}[itemsep=2pt, topsep=2pt, parsep=0pt, partopsep=0pt]
\item \textbf{Respect invariance properties.} The emulator should respect the invariance properties described in Section \ref{sec:invariance-properties}: invariance with respect to location shifts and scale transformations, and equivariance with respect to permutations.

\item \textbf{Provide smooth approximations.} The emulator should be differentiable with respect to its inputs, enabling gradient-based optimization and automatic differentiation.

\item \textbf{Generalize across specifications.} A single trained emulator should work for diverse utility and covariance structures.

\item \textbf{Be computationally efficient.} Emulator evaluation should be fast enough to enable routine use in estimation.
\end{enumerate}

We achieve these goals through a combination of preprocessing transformations and a carefully designed neural network architecture.

\vspace{-9pt}
\subsection{Preprocessing: Centering and Scaling}\label{sec:preprocessing}

We address location and scale invariance through preprocessing transformations that project the inputs onto a canonical subspace.

\textbf{Centering.} Define the centering matrix $\centermat = \mat{I}_\numalt - \frac{1}{\numalt} \ones_\numalt \ones_\numalt^\top$, which we employ to project the vector of utilities, $\vec{U} \in \R^{\numalt}$, onto the subspace orthogonal to $\ones_\numalt$, resulting in transformed utilities $\tilde{\vec{U}} = \centermat \left(\vec{v} + \vec{\epsilon}\right) = \vec{v} + \vec{\epsilon} - \vec{1}_{\numalt} (\bar{v} + \bar{\epsilon}) = \vec{U} - \vec{1}_{\numalt} \bar{U}$, where $\bar{v} = \frac{1}{\numalt} \sum_{j=1}^{\numalt} v_j$, $\bar{\epsilon} = \frac{1}{\numalt} \sum_{j=1}^{\numalt} \epsilon_j$, and  $\bar{U} = \bar{v} + \bar{\epsilon}$.
These transformed utilities sum to zero within each realization: $\sum_{j=1}^{\numalt} \tilde{U}_j = 0$.

By the location invariance property in \eqref{eq:location-invariance}, this transformation does not alter the choice probabilities.
Further, under the assumed structure of $\cov$ in \eqref{eq:Sigma-structure}, the transformation $\cov \mapsto \centermat \cov \centermat^{\top}$ can be inverted as $\mat{\Omega} = \mat{D} \centermat \cov \centermat^{\top} \mat{D}^{\top} - \vec{1}_{\numalt-1} \vec{1}_{\numalt-1}^{\top}$, where $\mat{D}$ is the differencing matrix $\mat{D} = \begin{bmatrix} -\vec{1}_{\numalt-1} & \mat{I}_{\numalt-1}\end{bmatrix}$.
Consequently, distinct models remain separated and we can represent a transformed model with the following modified parameters:
\[
\tilde{\util} = \centermat \util = \util - \ones_{\numalt} \bar{v},\qquad \tilde{\cov} = \centermat \cov \centermat.
\]
\textbf{Scaling.} We normalize by the trace of $\tilde{\cov}$, producing transformed utilities $\vec{U}^* = \sqrt{\Cstar} \tilde{\vec{U}}$, where $\Cstar = \numalt / \tr\big(\tilde{\mat{\cov}}\big)$.
By the scale invariance property in \eqref{eq:scale-invariance}, this normalization does not affect the choice probabilities.
Further, the map $(\tilde{\util}, \tilde{\cov}) \mapsto \left(\Cstar, \sqrt{\Cstar} \tilde{\util}, \Cstar \tilde{\cov}\right)$ is invertible; thus, the final transformed model can be represented with parameters:
\[
\Cstar = \numalt / \tr\big(\tilde{\mat{\cov}}\big), \qquad
\utilstar = \sqrt{\Cstar} \, \tilde{\util}, \qquad
\covstar = \Cstar \, \tilde{\cov}.
\]
We can omit the case $\tr(\tilde{\cov}) = 0$ as this results in known, deterministic choices.\footnote{Assumption \ref{assumption:epsilon-star} implies $\Cov(\vec{U} - \vec{1}_{\numalt} \bar{U}) = C \centermat \cov \centermat = C \tilde{\cov}$ for some $C > 0$. Thus, $\tr(\tilde{\cov}) = 0$ implies that $\epsilon_j = \bar{\epsilon}$ with probability one for all $j \in [\numalt]$.} The complete transformation produces normalized parameters $(\Cstar, \utilstar, \covstar)$ satisfying:
\begin{enumerate}
\item $\sum_{j=1}^\numalt v_j^* = 0$ (centered deterministic utilities),
\item $\covstar \ones_\numalt = \vec{0}$ (centered scale matrix),
\item $\tr(\covstar) = \numalt$ (scale-normalized),
\item $\covstar$ is positive semidefinite with rank at most $\numalt - 1$.
\end{enumerate}

This normalization reduces the size of the emulator's input space, which accelerates learning and improves generalization.
Importantly, this transformation commutes with permutation matrices: for any permutation matrix $\mat{P}_\pi$, applying the transformation to $(\mat{P}_\pi \util, \mat{P}_\pi \cov \mat{P}_\pi^\top)$ yields $(\mat{P}_\pi \utilstar, \mat{P}_\pi \covstar \mat{P}_\pi^\top)$. Thus, the preprocessing preserves permutation equivariance.

In factorization-invariant models (Definition \ref{def:factorization-invariant}), the pair $(\utilstar, \covstar)$ is sufficient to identify the choice probabilities, so the scale factor $\Cstar$ is not needed. 
For general exchangeable errors (e.g., Gumbel errors), we need two additional pieces of information: $\Cstar$ and an indicator for which alternative corresponds to base alternative (index $1$) in the level parameterization defined in \eqref{eq:Sigma-structure}.
The latter is required to invert the preprocessing transformation to an equivalent model in the original parameterization; without it, the preprocessing transformation can result in information loss.\footnote{The inversion formula $\mat{\Omega} = \mat{D} \centermat \cov \centermat^{\top} \mat{D}^{\top} - \vec{1}_{\numalt-1} \vec{1}_{\numalt-1}^{\top}$ implicitly requires knowledge of which alternative occupies the first position. This information allows us to recover $\cov$. Although we cannot recover $\util$, we can recover $\tilde{\util} = \utilstar/\sqrt{\Cstar}$, which differs from $\util$ by an additive constant and, hence, results in equivalent choices.}

\begin{comment}
Suppose we are given two models with transformed parameters $(\Cstar_1, \utilstar_1, \covstar_1)$ and $(\Cstar_2, \utilstar_2, \covstar_2)$ such that $\utilstar_1 = \utilstar_2$ and $\covstar_1 = \covstar_2$ but $\Cstar_1 \neq \Cstar_2$.
Let $\lambda \equiv \Cstar_2/\Cstar_1$. Since $\utilstar_1=\utilstar_2$ and $\covstar_1=\covstar_2$, we have
\[
\centermat \util_1 = \sqrt{\lambda}\,\centermat \util_2,
\qquad
\centermat \cov_1 \centermat = \lambda\,\centermat \cov_2 \centermat.
\]
Now define
\[
A_1 \equiv \centermat \cov_1^{1/2},
\qquad
A_2 \equiv \sqrt{\lambda}\,\centermat \cov_2^{1/2}.
\]
Then
\[
A_1A_1^\top = \centermat \cov_1 \centermat
= \lambda\,\centermat \cov_2 \centermat
= A_2A_2^\top.
\]
By Definition~\ref{def:factorization-invariant},
\[
\choiceprob_j(\centermat\util_1,A_1)
=
\choiceprob_j(\sqrt{\lambda}\,\centermat\util_2,A_2).
\]
Finally, location invariance implies that replacing $(\util_i,\cov_i^{1/2})$ by $(\centermat\util_i,\centermat \cov_i^{1/2})$ does not change choice probabilities, and scale invariance implies that replacing $(\centermat\util_2,\centermat \cov_2^{1/2})$ by $(\sqrt{\lambda}\,\centermat\util_2,\sqrt{\lambda}\,\centermat \cov_2^{1/2})$ does not change them either. Hence
\[
\choiceprob_j(\util_1,\cov_1^{1/2})
=
\choiceprob_j(\util_2,\cov_2^{1/2})
\]
for all $j$.
\end{comment}

% \vspace{-6pt}
\subsection{Neural Network Architecture}
\label{sec:architecture}

After preprocessing, we construct a neural network that maps $(\Cstar, \utilstar, \covstar)$ to choice probabilities while maintaining permutation equivariance. The architecture consists of three components: a per-alternative encoder, permutation-equivariant layers, and an output layer.
Below, we interpret $\covstar$ as a covariance matrix to simplify the exposition, referring to its diagonal and off-diagonal elements as variances and covariances, respectively.

\textbf{Per-alternative encoder.} For each alternative $j \in \{1, \ldots, \numalt\}$, we construct a representation $\vec{z}_j \in \R^{d_z}$ that captures how alternative $j$ relates to all other alternatives. This representation is built from two complementary DeepSet networks---a diagonal DeepSet and an off-diagonal DeepSet---whose outputs are combined with alternative $j$'s own features. The diagonal DeepSet processes pairwise relationships between $j$ and each other alternative, while the off-diagonal DeepSet summarizes the covariance structure among alternatives other than $j$. In the descriptions below, the features labeled as ``base inputs'' are sufficient for universal approximation of choice probabilities (see Section \ref{sec:universal-approximation}). We include additional features to improve the expressivity of the emulator, as we find this improves its predictions.

\textbf{Diagonal DeepSet.} For each pair $(j, k)$ with $k \neq j$, we construct a feature vector $\vec{d}_{jk}$ containing differentiable functions of the utilities and covariances associated with alternatives $j$ and $k$. The base inputs are the utilities $v_j^*$ and $v_k^*$, the variances $\Sigma_{jj}^*$ and $\Sigma_{kk}^*$, and the covariance $\Sigma_{jk}^*$. From these, we derive additional feature transformations that facilitate learning, including standard deviations $\sigma_j = \sqrt{\Sigma_{jj}^*}$ and $\sigma_k = \sqrt{\Sigma_{kk}^*}$, the correlation $\rho_{jk} = \Sigma_{jk}^*/(\sigma_j \sigma_k)$, and the standardized utility difference (z-score):
\[
z_{jk} = \frac{v_j^* - v_k^*}{\sqrt{\Sigma_{jj}^* + \Sigma_{kk}^* - 2\Sigma_{jk}^*}}.
\]
For models that are not factorization-invariant, we must also include an indicator function, $1_{\{k=1\}}$, to ensure that the preprocessing transformation does not result in information loss.\footnote{Section \ref{sec:preprocessing} describes how the transformed parameters can be inverted to an equivalent model with parameters $\tilde{\util}$ and $\cov$. However, this transformation can only be applied if we know which alternative occupies the first position. The indicator function $1_{\{k=1\}}$ provides exactly this information.}
Appendix \ref{sec:input-feature-details} provides the full set of input features for each module of the encoder.
It also describes transformations we applied to these features to reduce sensitivity to outliers.

The diagonal DeepSet processes these features using the architecture of \citet{zaheer2017deep}:
\[
\vec{h}_j^{\text{diag}} = \rho_{\text{diag}}\left( \sum_{k \neq j} \phi_{\text{diag}}(\vec{d}_{jk})\right),\vspace{-6pt}
\]
where $\phi_{\text{diag}}$ is a multi-layer perceptron (MLP) applied to each pair, the sum aggregates over all $k \neq j$, and $\rho_{\text{diag}}$ is another MLP that produces a learned nonlinear representation of the sum. We use the notation $\phi_{\text{diag}}$ and $\rho_{\text{diag}}$ following \citet{zaheer2017deep}; the MLPs $\rho_{\text{diag}}$ and $\rho_{\text{off}}$ (defined below) can be distinguished from the correlation $\rho_{jk}$ by their subscripts.

\citet{singh2023choice} also employ a variant of the DeepSet architecture. However, their method assumes iid errors and operates on problem-specific features, which requires their neural network to be modified and retrained to accommodate new data structures.  In contrast, our architecture relies on generic RUM model inputs, such as deterministic utilities.

\textbf{Off-diagonal DeepSet.} For each pair $(k, l)$ with $k < l$ and $k, l \neq j$, we construct a feature vector $\vec{o}_{kl}$ containing differentiable, symmetric functions comparing alternatives $k$ and $l$; we require symmetry because the pairs do not possess a natural ordering. The base input is $\Sigma_{kl}^*$, and we include other derived features, such as the correlation $\rho_{kl}$, squared utility difference $(v_k^* - v_l^*)^2$, the squared z-score $z_{kl}^2$, and the variance sum $\Sigma_{kk}^* + \Sigma_{ll}^*$.

The off-diagonal DeepSet has an analogous structure to that of the diagonal DeepSet:
\[
\vec{h}_j^{\text{off}} = \rho_{\text{off}}\left( \sum_{\substack{k < l \\ k, l \neq j}} \phi_{\text{off}}(\vec{o}_{kl})\right).
\]
\textbf{Combining network.} The outputs of both DeepSets are concatenated with additional ``pass-through features'' $\vec{s}_j$, such as $\Cstar$, $1_{\{j=1\}}$, $v_j^*$, $\sigma_j$, and summary statistics of alternative $j$'s covariances with other alternatives. Of these features, only $\Cstar$ is required for universal approximation, but it can be omitted for factorization-invariant models.
These concatenated features are then passed to the combining MLP, $\zeta$, which produces the per-alternative representation $\vec{z}_j = \zeta\big( \vec{h}_j^{\text{diag}}, \vec{h}_j^{\text{off}}, \vec{s}_j \big)$.

\textbf{Permutation-equivariant layers.} The per-alternative representations are stacked into a matrix $\mat{Z} \in \R^{\numalt \times d_z}$ and processed through linear permutation-equivariant layers \citep{zaheer2017deep}, allowing information exchange across alternatives. These layers take the form
\[
L(\mat{Z}) = \sigma\left( \mat{Z} \mat{A} + \frac{1}{\numalt} \ones_\numalt \ones_\numalt^\top \mat{Z} \mat{B} + \ones_\numalt \vec{c}^\top \right),
\]
where $\mat{A}, \mat{B} \in \R^{d_{\text{in}} \times d_{\text{out}}}$ and $\vec{c} \in \R^{d_{\text{out}}}$ are learnable parameters, and $\sigma$ is a smooth activation function. The first term applies a per-alternative transformation, the second aggregates information across alternatives, and the third adds a shared bias (intercept).

\textbf{Output layer.} The final layer produces one logit per alternative:
\[
\textbf{logit} = \mat{Z}^{(L)} \vec{a},
\]
where $\mat{Z}^{(L)}$ is the output of the last equivariant layer and $\vec{a} \in \R^{d_L}$ so that the output $\textbf{logit}$ is a vector in $\R^{\numalt}$. We omit the other terms because they produce a constant shift that does not affect the output probabilities.

Choice probabilities are obtained by applying the softmax function to the output layer:
\[
\choiceprobhat_j = \frac{\exp(\text{logit}_j)}{\sum_{k=1}^\numalt \exp(\text{logit}_k)}.
\]
For any vector of valid choice probabilities $(P_1, \ldots, P_{\numalt})$ on the interior of the probability simplex $\Delta^{\numalt-1}$, there exists a corresponding vector of logits yielding $(P_1, \ldots, P_{\numalt})$ via softmax, so this parameterization entails no loss of generality.
Figure~\ref{fig:architecture} illustrates the architecture.

\begin{figure}[t]
\centering
\begin{tikzpicture}[
    node distance=1.0cm and 1.5cm,
    box/.style={rectangle, draw, rounded corners, minimum width=2.4cm, minimum height=0.8cm, align=center, font=\small},
    smallbox/.style={rectangle, draw, rounded corners, minimum width=2.0cm, minimum height=0.8cm, align=center, font=\footnotesize, text depth=0.25ex},
    inputbox/.style={rectangle, draw, rounded corners, minimum width=2.0cm, minimum height=0.7cm, align=center, font=\small, fill=gray!5},
    outputbox/.style={rectangle, draw, rounded corners, minimum width=2.4cm, minimum height=0.8cm, align=center, font=\small, fill=gray!10},
    arrow/.style={->, >=stealth, thick},
    dashedbox/.style={rectangle, draw, dashed, rounded corners},
]

% Input at bottom
\node[inputbox] (input) {$\left(\Cstar,\, \utilstar,\, \covstar\right)$};

% Three parallel branches
\coordinate[above=1.8cm of input] (branchcenter);
\node[smallbox, anchor=center] at ([xshift=-2.8cm]branchcenter) (diag) {Diagonal\\DeepSet};
\node[smallbox, anchor=center] at (branchcenter) (offdiag) {Off-Diagonal\\DeepSet};
\node[smallbox, anchor=center] at ([xshift=2.8cm]branchcenter) (features) {Pass-through\\Features: $\vec{s}_j$};

% Combine
\node[smallbox, above=1.2cm of offdiag] (combine) {Combining MLP};

% Per-alternative representation
\node[box, above=1.0cm of combine] (zi) {$\vec{z}_j \in \R^{d_z}$};

% Dashed box around encoder
\node[dashedbox, fit=(diag)(offdiag)(features)(combine)(zi), inner ysep=20pt, inner xsep=18pt] (encoderbox) {};

% Label for encoder box
\node[anchor=north west, font=\small\itshape, yshift=-1pt] at (encoderbox.north west) {Per-Alternative Encoder};
\node[anchor=north west, font=\small\itshape, yshift=-14pt] at (encoderbox.north west) {$j = 1, \ldots, \numalt$};

% Stack representations
\node[box, above=1.2cm of zi] (stack) {Stack: $\mat{Z} \in \R^{\numalt \times d_z}$};

% Equivariant layers
\node[box, above=1.2cm of stack] (equiv) {Equivariant Layers};

% Output
\node[outputbox, above=1.2cm of equiv] (output) {Logits $\in \R^\numalt$};

% Arrows
\draw[arrow] (input) -- (diag);
\draw[arrow] (input) -- (offdiag);
\draw[arrow] (input) -- (features);
\draw[arrow] (diag) -- (combine);
\draw[arrow] (offdiag) -- (combine);
\draw[arrow] (features) -- (combine);
\draw[arrow] (combine) -- (zi);
\draw[arrow] (zi) -- (stack);
\draw[arrow] (stack) -- (equiv);
\draw[arrow] (equiv) -- (output);

\end{tikzpicture}
\caption{Architecture of the neural network emulator. For each alternative $j$, the per-alternative encoder processes diagonal features (relating $j$ to each other alternative), off-diagonal features (pairwise features among alternatives other than $j$), and alternative $j$'s own features. These are combined via a combining MLP to produce representation $\vec{z}_j$. Representations for all alternatives are stacked and processed through permutation-equivariant layers to produce the final logits.}
\label{fig:architecture}
\end{figure}
\vspace{2pt}
\textbf{Activation functions.} Throughout the architecture, we use smooth (infinitely differentiable) activation functions. This ensures that the emulator output is a smooth function of the inputs, enabling accurate gradient computation for downstream model fitting. We found that the Swish activation function $\text{Swish}(x) = x / (1 + e^{-x})$ works well empirically \citep{ramachandran2018searching}, though other smooth activation functions could also be applied, such as $\tanh$ or $\softplus$ \citep{dugas2000softplus}.

\textbf{Computational complexity.} A naive implementation of the per-alternative encoder that recomputes the off-diagonal DeepSet sum $\sum_{k<l:\, k,l\neq j}\phi^{\mathrm{off}}(o_{kl})$ separately for each alternative $j$ would require $O(\numalt^3)$ operations. But most of these computations are redundant. Instead, we compute each pair contribution $\phi^{\mathrm{off}}(o_{kl})$ once, form the total sum over all pairs, and then subtract the terms involving alternative $j$ to obtain the required sum for that alternative. This reduces the complexity to $O(\numalt^2)$.

\textbf{Network size.} The architecture can be compact. In our simulations, we use networks with 8--64 hidden units and 0--2 hidden layers per component. In general, larger values of $\numalt$ require larger networks for accurate approximation.

\textbf{Relation to \citet{blumsmith2025machine}.} Our architecture builds on the DS-CI architecture of \citet{blumsmith2025machine}, which is designed to approximate permutation-invariant scalar functions of symmetric matrices. In contrast, our emulator approximates the full vector of choice probabilities, which depends jointly on the systematic utility vector $\util$ and the matrix $\cov$ governing the error distribution. If the alternatives are relabeled, the choice probabilities must be relabeled in the same way, so the choice-probability map is permutation-equivariant rather than permutation-invariant. These differences motivate three architectural changes. First, the equivariance property requires us to construct a separate representation for each alternative $j$, distinguishing it from its rivals $k \neq j$ in the per-alternative encoder. Second, these separate representations and the presence of $\util$ require us to pass $v_k^*$ and $\Sigma_{jk}^*$ to the Diagonal DeepSet (in addition to the diagonals, $\Sigma_{kk}^*$, which DS-CI also requires). Third, whereas \citet{blumsmith2025machine} require a ``linking feature'' that combines diagonal and off-diagonal elements, our centering procedure makes this feature redundant. However, we require a different scalar-valued feature, $\Cstar$, for universal approximation (except on factorization-invariant models). Rather than processing $\Cstar$ in its own submodule, we simply pass it as an additional feature to the combining MLP.

\vspace{3pt}
\subsection{Training Procedure}
\label{sec:training-procedure}

We train the emulator using simulated data spanning a diverse range of utility and covariance configurations, employing Sobolev training to ensure accurate approximation of both choice probabilities and their gradients.

\textbf{Data generation.} To generate each training example for our simulations in Section \ref{sec:simulation} involving the MNP model, we follow the procedure below:
\begin{enumerate}
\vspace{-3pt}
\item \textbf{Error covariance.} Draw a covariance matrix $\mat{\Omega}$ from a $\text{Wishart}(\mat{I}_{\numalt-1}/(\numalt+1), \numalt+1)$ distribution and form $\cov$ according to \eqref{eq:Sigma-structure}.
\vspace{-3pt}
\item \textbf{Raw deterministic utilities.} Draw independent utilities $\util' \sim \mathcal{N}(\vec{0}, 4\mat{I}_\numalt)$.
\vspace{-3pt}
\item \textbf{Deterministic utility covariance.} Draw a second covariance matrix $\mat{\Omega}'$ (independently from the same distribution as $\mat{\Omega}$) to form $\cov'$ as in \eqref{eq:Sigma-structure}, draw $\omega \sim 
\text{Uniform}(0,1)$, and form the convex combination $\cov'' = \omega \cov + (1 - \omega) \cov'$. 
\vspace{-3pt}
\item \textbf{Transformed deterministic utilities.} Transform the utilities to have covariance structure correlated with $\cov''$: $\util = (\cov'')^{1/2} \util'$. This yields $\Cov(\util) = 4\, \cov''$.\footnote{If desired, we can also center $\util$ as this results in an equivalent model.}
\vspace{-3pt}
\item \textbf{Preprocessing.} Apply the preprocessing transform to obtain $(\utilstar, \covstar)$ from $(\util, \cov)$.
\vspace{-3pt}
\item \textbf{Simulated choices.} Simulate $10^6$ choices from the MNP model with utilities $\util$ and scale matrix $\cov$ (or an equivalent parameterization).
\end{enumerate}

Steps 3 and 4 ensure diversity in the relationship between deterministic utilities and error covariances. When $\omega = 1$, the utilities are perfectly aligned with the covariance structure; when $\omega = 0$, they are independent. Intermediate values produce partial correlation.

For error distributions that are not factorization-invariant, we must additionally compute $\Cstar$ in step 5 and replace the MNP model with the assumed error distribution in step 6.
The other choices listed above, including the specific distributions and number of choices ($10^6$), may also be modified based on the problem domain and available computational resources.
\vspace{6pt}

\textbf{Loss function.} We employ Sobolev training \citep{czarnecki2017sobolev}, which augments the standard likelihood loss with a gradient-matching penalty. The total loss is:
\[
\mathcal{L}(\netparam) = \mathcal{L}_{\text{CE}}(\netparam) + \lambda_{\text{grad}} \mathcal{L}_{\text{grad}}(\netparam),
\]
where $\netparam$ represents the network parameters and $\lambda_{\text{grad}} > 0$ controls the weight of the gradient-matching term.  The cross-entropy component is the multinomial negative log-likelihood (up to constants):
\vspace{3pt}
\[
\mathcal{L}_{\text{CE}}(\netparam) = -\sum_{j=1}^\numalt \hat{\choiceprob}_j^{\text{sim}} \log \choiceprobhat_j(\Cstar, \utilstar, \covstar; \netparam),
\]
where $\hat{\choiceprob}_j^{\text{sim}} = n_j / \sum_{k=1}^{\numalt} n_k$ is the simulated choice frequency and $\choiceprobhat_j(\Cstar, \utilstar, \covstar; \netparam)$ is the emulator choice probability of alternative $j$.
\vspace{2pt}

For the gradient-matching component, we follow the stochastic Sobolev training approach of \citet{czarnecki2017sobolev}, which avoids computing full Jacobian matrices by instead matching directional derivatives along random directions.
Let $\vec{d} = (\vec{d}_{\util}, \vech(\mat{D}_{\mat{L}}))$ denote a random direction, where $\vec{d}_{\util} \in \R^\numalt$, $\mat{D}_{\mat{L}} \in \R^{(\numalt-1) \times (\numalt-1)}$ is lower-triangular,  and $\vech(\cdot)$ denotes the half-vectorization (lower triangular elements). The gradient-matching loss is:
\[
\mathcal{L}_{\text{grad}}(\netparam) = \sum_{j=1}^\numalt \hat{\choiceprob}_j^{\text{sim}} \left( \nabla_{\vec{d}} \log \choiceprobhat_j - \nabla_{\vec{d}} \log \choiceprob_j^{\text{target}} \right)^2,
\]
where $\nabla_{\vec{d}}$ denotes the directional derivative along $\vec{d}$. Weighting by the simulated choice frequencies $\hat{\choiceprob}_j^{\text{sim}}$ focuses learning on alternatives that are frequently chosen, whose gradients most influence parameter estimation in downstream inference tasks.

\textbf{Constraint-respecting directions.} The inputs $(\utilstar, \covstar)$ lie on the constrained set:
\begin{equation}
\label{eq:utilcovspace}
\utilcovspace = \left\{ (\utilstar, \covstar) : \sum_{j=1}^\numalt v_j^* = 0, \; \covstar \ones_\numalt = \vec{0}, \; \tr(\covstar) = \numalt, \; \covstar \succeq \mat{0} \right\}.
\end{equation}
To ensure these constraints are respected, we compute directions in the latent space of $\vec{\util}$ and $\mat{L} = \chol(\mat{\Omega})$, the lower Cholesky factor of $\mat{\Omega}$.\footnote{Note that $\mat{L}$ does not necessarily correspond with $\mat{\Omega}^{1/2}$. These matrix factors play different roles. Whereas $\mat{\Omega}^{1/2}$ is used within the choice model to create covariance relationships, $\mat{L}$ is used as a computational device to ensure that the gradients respect the positive semidefiniteness of $\mat{\Omega}$ and the resulting constraints on the set $\utilcovspace$.}
We generate directions as follows:
\begin{enumerate}
\item Sample $\vec{d}_{\util} \sim \mathcal{N}(\vec{0}_{\numalt}, \mat{I}_\numalt)$.
\item Sample a lower-triangular matrix, $\mat{D}_{\mat{L}}$, with nonzero entries in row $j$ distributed as $\mathcal{N}(\vec{0}_{j}, \mat{I}_{j} / j)$.
\item Set $\vec{d} = (\vec{d}_{\util}, \vech(\mat{D}_{\mat{L}}))$ and normalize: $\vec{d} \leftarrow \vec{d} / \|\vec{d}\|$.
\end{enumerate}
In step 2, we divide the variances by $j$ to account for the number of nonzero entries in each row, maintaining a comparable scale across rows and columns of $\mat{\Omega} = \mat{L} \mat{L}^{\top}$.
If desired, we can remove directions that do not contribute to choice probabilities, but doing so does not affect the directional derivatives.\footnote{For example, we can set $\vec{d}_{\util} \leftarrow \centermat \vec{d}_{\util}$. However, the removed direction is orthogonal to the Jacobian of the choice probabilities, so this operation has no effect on the directional derivatives.}

\textbf{Computing directional derivatives.} Define the raw values $\vec{r} = (\vec{v}, \mat{L})$ and let $\vec{t}$ denote the transformation from $\vec{r}$ to $(\Cstar, \utilstar, \covstar)$.
We approximate the emulator's directional derivatives via finite differences as
\[
\nabla_{\vec{d}} \log \choiceprobhat_j
=
\frac{\log \choiceprobhat_j\left\{\vec{t}(\vec{r} + \epsilon \vec{d}); \netparam\right\} - \log \choiceprobhat_j\left\{\vec{t}(\vec{r} - \epsilon \vec{d}); \netparam\right\}}{2 \epsilon},
\]
where $\epsilon > 0$ is a small step size (we use $\epsilon = 10^{-5}$).
This approach is computationally efficient, requiring only two additional forward passes per direction.
Alternatively, one could compute the directional derivative via automatic differentiation as in \citet{czarnecki2017sobolev}.
Working with the latent parameters, $\vec{r} = (\vec{v}, \mat{L})$, guarantees that the resulting values of $\utilstar$ and $\covstar$ obey the constraints of $\utilcovspace$.

Target directional derivatives are computed from pre-stored Jacobians.
During data generation, we compute and store the full Jacobians $\nabla_{\vec{r}} \log \choiceprob_j^{\text{target}}$. During training, the target directional derivative is obtained via the inner product $\nabla_{\vec{d}} \log \choiceprob_j^{\text{target}} = 
\vec{d}^{\top}\;
\nabla_{\vec{r}} \log \choiceprob_j^{\text{target}}$.

\textbf{Target gradient computation.} Computing target gradients for Sobolev training requires differentiating through the choice simulation process. Since discrete choices are inherently non-differentiable, we employ a smooth relaxation using a temperature-scaled softmax. For each Monte Carlo draw $r$ with utility vector $\vec{U}^{(r)} = \utilstar + \vec{\epsilon}^{(r)}$, we approximate the hard choice indicator with:
\vspace{4pt}
\[
\tilde{Y}_j^{(r)} = \frac{\exp(\tau U_j^{(r)})}{\sum_{k=1}^\numalt \exp(\tau U_k^{(r)})},\vspace{4pt}
\]
where $\tau > 0$ is a temperature parameter. As $\tau \to \infty$, this converges to the hard choice indicator; for finite $\tau$, it provides a differentiable approximation.
We use $\tau = 5$ in our experiments, which provides a balance between gradient accuracy (low bias) and computational stability.
The target log-probability is then:
\[
\log \choiceprob_j^{\text{target}} = \log \left( \frac{1}{R} \sum_{r=1}^R \tilde{Y}_j^{(r)} \right),
\]
and the target gradient $\nabla_{\vec{r}} \log \choiceprob_j^{\text{target}}$ is obtained via automatic differentiation through this expression.
This approach is closely related to the ``Gumbel--softmax trick'' frequently used for gradient estimation with neural networks \citep{jang2017categorical, maddison2017concrete}.

\textbf{Training details.} Our training procedure makes use of a \emph{replay buffer}, a common technique in deep learning \citep{Mnih2015}. It stores precomputed training examples including raw inputs $(\util, \mat{L})$, simulated choice frequencies, and target Jacobians. At each training step, we sample inputs from the replay buffer, sample directions, compute the loss, and update the network parameters via stochastic gradient descent. The replay buffer accelerates training by allowing reuse of expensive high-precision simulations and gradient computations. We set the gradient penalty weight $\lambda_{\text{grad}} = 10^{-4}$ in our experiments; empirically, we find that this value provides sufficiently well-behaved derivatives.

\textbf{Multi-$\numalt$ training.} Remarkably, none of the weight vectors and matrices has a dimension depending on $\numalt$. The dimension $\numalt$ enters the neural network only through the input data dimension; thus, we can design a single emulator capable of processing multiple values of $\numalt$ simultaneously. The training procedures described above can then be adapted to iterate over a range of $\numalt$ values, say $\numalt \in \{3, 4, 5\}$, and the resulting emulator can then learn to approximate choice probabilities for a varying number of choices (see Section \ref{sec:sim-multi-k}). Under this configuration, we recommend including $\numalt$ as an element of $\vec{s}_j$. 

\subsection{Inference with the Trained Emulator}

Once trained, the emulator enables rapid inference. Given a new dataset $\{(y_i, \mat{X}_i)\}_{i=1}^n$ and candidate parameter vector $\param$:

\begin{enumerate}
\item Compute the deterministic utility vector $\util_i(\mat{X}_i, \param)$ and scale matrix $\cov_i(\mat{X}_i, \param)$.
\item Apply the preprocessing transformation to obtain $(\Cstar_i, \utilstar_i, \covstar_i)$.
\item Evaluate the emulator to obtain estimated choice probabilities $\choiceprobhat_{ij}$ for all alternatives.
\item Compute the log likelihood $\loglikhat(\param) = \frac{1}{n} \sum_{i=1}^n \log \choiceprobhat_{i,y_i}$.
\end{enumerate}

As the preprocessing transformation and neural network are both smooth functions, the gradients of the emulator log likelihood with respect to $\param$ can be computed exactly via automatic differentiation. This enables the use of gradient-based optimizers for maximum likelihood estimation, as well as gradient-based sampling methods such as Hamiltonian Monte Carlo for Bayesian inference. The smoothness of the emulator also facilitates computation of standard errors, which can be estimated via emulator gradients (see Sections \ref{sec:estimation-theory}--\ref{sec:misspecification}).

\section{Theory}
\label{sec:theory}

This section establishes the theoretical foundations for the emulator architecture and the statistical properties of emulator-based discrete choice estimators.

\vspace{-6pt}
\subsection{Universal Approximation}
\label{sec:universal-approximation}

We first show that our architecture can universally approximate choice probabilities. 
The key insight is that the base inputs processed by the per-alternative encoder's diagonal and off-diagonal DeepSets generically determine $(\utilstar,\covstar)$ up to the relevant symmetry group (i.e, permutations of rivals). In order to prove this result, we define, for each alternative $j$, the invariant function $\invarutil_j : \utilcovspace \to (\text{Multiset}, \text{Multiset})$ by:\vspace{-4pt}
\begin{align}
\mathcal{T}_j(\utilstar, \covstar) &= \{\!\{ (v_k^*, \Sigma_{kk}^*, \Sigma_{jk}^*) : k \neq j \}\!\}, \label{eq:diagonal-multiset}\\
\mathcal{O}_j(\covstar) &= \{\!\{ \Sigma_{kl}^* : 1 \leq k < l \leq \numalt, \, k \neq j, \, l \neq j \}\!\}, \label{eq:offdiagonal-multiset}\\
\invarutil_j(\utilstar, \covstar) &= (\mathcal{T}_j(\utilstar, \covstar), \mathcal{O}_j(\covstar)).\nonumber \vspace{-4pt}
\end{align}
The multiset $\mathcal{T}_j$ collects information about all the rivals of focal alternative $j$. It consists of $\numalt - 1$ triples, each containing  alternative $k$'s utility,  variance, and covariance with alternative $j$, for all $k \ne j$. It is a multiset (rather than a set or ordered list) because the order does not matter, but repeated elements do matter. The multiset $\mathcal{O}_j$ collects information about the similarity of these rivals. It consists of the off-diagonal covariances among alternatives other than $j$. These two multisets, which serve as base inputs to the encoder, generically contain enough information to reconstruct the full
normalized utility/covariance object $(\utilstar,\covstar)$, up to relabeling
of the rivals, as we now show:\footnote{It is worth noting that, while $\mathcal{T}_j$ and $\mathcal{O}_j$ do not explicitly include
$v_j^*$ or $\Sigma_{jj}^*$, these entries are pinned down by the normalization
constraints $\sum_{\ell=1}^{\numalt} v_\ell^* = 0$ and
$\covstar \ones_\numalt = \vec{0}$.}

\vspace{10pt}
\begin{theorem}[Generic Separation]
\label{thm:separation}
For each alternative $j \in \{1, \ldots, \numalt\}$, there exists a closed, measure-zero set $\exceptsetutil_j \subset \utilcovspace$ such that for any $(\utilstar_1, \covstar_1), (\utilstar_2, \covstar_2) \in \utilcovspace \setminus \exceptsetutil_j$:
\vspace{3pt}
\[
\invarutil_j(\utilstar_1, \covstar_1) = \invarutil_j(\utilstar_2, \covstar_2) \implies (\utilstar_1, \covstar_1) \text{ and } (\utilstar_2, \covstar_2) \text{ are in the same } \permgroupsub\text{-orbit},\vspace{3pt}
\]
where 
$S_{\numalt}$ is the permutation group on $\{1,\ldots,\numalt\}$
and $\permgroupsub = \{\pi \in S_{\numalt} : \pi(j)=j\}$ is the subgroup of permutations that fixes alternative $j$.
\end{theorem}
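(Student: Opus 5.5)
The plan is to give an explicit invariant reconstruction. Off a finite union of hyperplanes, $\mathcal{T}_j$ pins down the labels of the rivals. The row-sum constraint $\covstar\ones_\numalt=\vec{0}$ then pins down which off-diagonal value in $\mathcal{O}_j$ belongs to which pair of rivals. Finally $v_j^*$ and $\Sigma_{jj}^*$ follow from the normalization constraints. The cases $\numalt\le 3$ are trivial: $\mathcal{O}_j$ has at most one element, so only the first step is needed. Assume $\numalt\ge 4$ below.

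\textbf{Exceptional set.} Let $R=\{1,\dots,\numalt\}\setminus\{j\}$ and let $\mathcal{P}$ be the set of unordered pairs in $R$. Define $\exceptsetutil_j$ as the union of three families of sets.
\begin{itemize}
\item[(a)] $\{v_k^*=v_{k'}^*\}$ for $k\neq k'\in R$.
\item[(b)] $\{\Sigma_p^*=\Sigma_{p'}^*\}$ for distinct $p,p'\in\mathcal{P}$.
\item[(c)] For each bijection $\rho:\mathcal{P}\to\mathcal{P}$ that does not preserve incidence with some $k\in R$, the set where
\[
\sum_{l\in R\setminus\{k\}}\Sigma^*_{\rho(\{k,l\})}=\sum_{l\in R\setminus\{k\}}\Sigma^*_{\{k,l\}}\quad\text{for all }k\in R.
\]
Here "preserves incidence with $k$" means $\rho$ maps the pairs containing $k$ onto the pairs containing $k$. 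For (c) I would keep only one such $k$, so that each set is a single hyperplane and not an intersection.
\end{itemize}
Each piece is the zero set of a linear form, so it is closed, and there are finitely many pieces. For the measure-zero claim, I would parameterize $\utilcovspace$ by $(v_k^*)_{k\in R}$ together with all off-diagonal entries $\Sigma^*_{kl}$, $k<l$. The remaining coordinates $v_j^*$ and the diagonal are determined linearly by centering. The only constraint among these free coordinates is the affine trace condition $-2\sum_{k<l}\Sigma^*_{kl}=\numalt$, and the PSD condition is open relative to it (it contains $\frac{\numalt}{\numalt-1}\centermat$ in its interior). Hence $\utilcovspace$ has nonempty relative interior in an affine space in which the coordinates $\Sigma^*_p$, $p\in\mathcal{P}$, are still free: the trace condition can always be absorbed by the $\Sigma^*_{jk}$. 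Each defining form in (a)–(c) is a nonzero combination of these free coordinates. In (c) the form is nonzero because some pair $\rho(\{k,l\})\not\ni k$ appears with coefficient $+1$ and no cancelling term exists. So each hyperplane meets the relative interior in a proper hyperplane, which has measure zero. I expect this dimension/nondegeneracy bookkeeping to be the main technical obstacle, in particular checking that the PSD boundary causes no trouble.

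\textbf{Reconstruction.} Suppose both points lie outside $\exceptsetutil_j$ and have equal $\invarutil_j$.
\begin{enumerate}
\item By (a) the triples in $\mathcal{T}_j$ are distinct, so equality of the multisets gives a unique $\pi\in\permgroupsub$ matching the triples. Replacing point 2 by its image under $\pi$ (which preserves $\invarutil_j$ and the orbit), we get $v^*_{1,k}=v^*_{2,k}$, $\Sigma^*_{1,kk}=\Sigma^*_{2,kk}$ and $\Sigma^*_{1,jk}=\Sigma^*_{2,jk}$ for all $k\in R$.
\item Equality of $\mathcal{O}_j$ together with (b) gives a unique bijection $\rho$ with $\Sigma^*_{2,p}=\Sigma^*_{1,\rho(p)}$.
\item The row-sum constraints read $\Sigma^*_{kk}+\Sigma^*_{jk}+\sum_{l\in R\setminus\{k\}}\Sigma^*_{kl}=0$, and the first two terms already agree. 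Substituting gives exactly the system of relations in (c) for point 1 and $\rho$. So $\rho$ must preserve incidence with every $k\in R$.
\item Take a pair $\{k,l\}$. Its image $\rho(\{k,l\})$ contains both $k$ and $l$, so $\rho$ is the identity. Hence all $\Sigma^*_{kl}$ with $k,l\in R$ agree.
\item $v_j^*=-\sum_{k\in R}v_k^*$ and $\Sigma_{jj}^*=-\sum_{k\in R}\Sigma_{jk}^*$ then agree as well, so the two points coincide after applying $\pi$. They are therefore in the same $\permgroupsub$-orbit.
\end{enumerate}
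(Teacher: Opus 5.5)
Your proposal is correct and follows essentially the same route as the paper. Distinct triples fix the labeling of the rivals, and the row-sum constraint together with a generic uniqueness-of-subset-sums condition determines which entry of $\mathcal{O}_j$ belongs to which rival pair. The normalizations then give $v_j^*$ and $\Sigma_{jj}^*$, and the exceptional set is a finite union of nontrivial hyperplanes.

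Your family (c) is the paper's condition (b$_j$) rephrased through bijections $\rho$: as $\rho$ varies, the image of the pairs containing $k$ runs over all $(\numalt-2)$-subsets. Your step ``incidence is preserved at both endpoints, so $\rho=\mathrm{id}$'' plays the role of the paper's singleton intersections $\mathcal{R}_{(k)}\cap\mathcal{R}_{(l)}$, and your explicit coordinate and nonzero-coefficient bookkeeping is a more concrete version of the paper's measure-zero step.
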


\vspace{-2pt}
\begin{proof}[Proof sketch] The proof, which we give in detail in Appendix \ref{sec:utility-covariance}, proceeds in three steps:
\vspace{-3pt}

\textbf{Step 1: Genericity conditions.} We define two genericity conditions on
$(\utilstar,\covstar)$. First, the triples
$(v_k^*,\Sigma_{kk}^*,\Sigma_{jk}^*)$ for $k \neq j$ are pairwise distinct as
elements of $\R^3$. Second, for each rival $k \neq j$, there is a unique
subcollection of $\mathcal{O}_j$, containing one covariance for each rival other
than $k$, whose elements sum to $-\Sigma_{kk}^*-\Sigma_{kj}^*$. The row-sum
constraint $\covstar\ones_\numalt=\vec{0}$ implies that the row-$k$
subcollection, namely $\{\!\{\Sigma_{k\ell}^* : \ell \neq k,j\}\!\}$, has
exactly this sum.

\textbf{Step 2: Invariant reconstruction.} When both genericity conditions hold, we can recover $(\utilstar,\covstar)$ from $\invarutil_j$ up to the relabeling of alternatives other than $j$. 
First, order the triples in $\mathcal{T}_j$ lexicographically,\footnote{Lexicographic ordering sorts triples by their first coordinate, breaking ties by the second coordinate, and then by the third. Here it is used only to impose an arbitrary canonical labeling of the rivals.}
thereby assigning labels to the rivals. Next, for each rival $k$, use the row-sum restriction and the second genericity condition to identify the unique subcollection of $\mathcal{O}_j$ corresponding to the covariances
between $k$ and the other rivals. Finally, recover each rival--rival covariance $\Sigma_{k\ell}^*$ as the common element in the recovered subcollections for rivals $k$ and $\ell$. The remaining entries $v_j^*$ and $\Sigma_{jj}^*$ are pinned down by the normalization constraints $\sum_{\ell=1}^{\numalt} v_\ell^* = 0$ and $\covstar\ones_\numalt = \vec{0}$.

\textbf{Step 3: Measure-zero exceptional set.} The set $\exceptsetutil_j$ where either condition fails is a finite union of zero sets of nontrivial analytic functions on $\utilcovspace$ that was defined in \eqref{eq:utilcovspace}. Compare with Definition \ref{def:global-joint-space} in Appendix \ref{sec:utility-covariance-global}. By the principle that proper analytic subvarieties have measure zero \citep{Mityagin2020}, $\exceptsetutil_j$ has measure zero.
\end{proof}

This generic separation result, combined with the Stone--Weierstrass theorem, yields universal approximation.
To show this, define the enlarged space $\utilcovspaceex = \utilcovspace \times \R$, where the new dimension provides values of $\Cstar$, and form an enlarged measure-zero set in a similar manner: $\exceptsetutilex_j = \exceptsetutil_j \times \R$.
To accommodate models that are not factorization invariant, we also must modify the function $\mathcal{T}_j$ as follows:
\begin{equation}\label{eq:diagonal-multiset-expanded}
\overline{\mathcal{T}}_j(\utilstar, \covstar) = \{\!\{ (v_k^*, \Sigma_{kk}^*, \Sigma_{jk}^*, 1_{\{k=1\}}) : k \neq j \}\!\},
\end{equation}
where we have added the element $1_{\{k=1\}}$.
This additional element guarantees that the preprocessing transformation in Section \ref{sec:preprocessing} can be inverted to an equivalent model, allowing us to apply it without information loss.
We then define the augmented invariant function \[\overline{\invarutil}_j(\utilstar, \covstar, \Cstar) = (\overline{\mathcal{T}}_j(\utilstar, \covstar), \mathcal{O}_j(\covstar), \Cstar),\]
which, based on Theorem \ref{thm:separation}, separates parameters $(\utilstar, \covstar, \Cstar)$ up to permutations fixing alternatives $1$ and $j$ (which may coincide) outside of the measure-zero set $\exceptsetutilex_j$.
We then have the following universal approximation property.

\vspace{9pt}
\begin{theorem}[MLP Universal Approximation]
\label{thm:universal}
Under Assumption \ref{assumption:epsilon-star}, for each alternative $j$, the choice probability function $\choiceprob_j$ can be uniformly approximated on any compact subset of $\utilcovspaceex \setminus \exceptsetutilex_j$ by a multi-layer perceptron taking the components of $\overline{\invarutil}_j$ as input.
\end{theorem}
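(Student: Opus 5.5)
We reduce the statement to a Stone--Weierstrass argument on a compact set of invariant features, using Theorem~\ref{thm:separation} to show that $\choiceprob_j$ factors through $\overline{\invarutil}_j$. Fix a compact set $K \subset \utilcovspaceex \setminus \exceptsetutilex_j$.

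The first step is to show that $\choiceprob_j$ is constant on fibers of $\overline{\invarutil}_j$ restricted to $K$. Suppose two points $(\utilstar_1,\covstar_1,\Cstar_1)$ and $(\utilstar_2,\covstar_2,\Cstar_2)$ in $K$ share the same value of $\overline{\invarutil}_j$. Then $\Cstar_1=\Cstar_2$. Dropping the indicator coordinate, Theorem~\ref{thm:separation} gives a $\pi \in \permgroupsub$ carrying one point to the other. Because the triples are pairwise distinct off $\exceptsetutil_j$, the labeling is unique, so matching the augmented quadruples forces $\pi$ to also fix alternative $1$ (when $j\neq 1$).

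Permutation equivariance of $\choiceprob$ then gives $\choiceprob_j(\cdot)$ equal at the two points. This uses the fact that preprocessing commutes with $\mat{P}_\pi$ and that $\pi(j)=j$. For non-factorization-invariant models, we additionally need $\pi(1)=1$. The reason is that the inversion $\mat{\Omega} = \mat{D}\centermat\cov\centermat^\top\mat{D}^\top - \ones\ones^\top$ with the symmetric square root reconstructs an equivalent original model only when the base alternative is tracked. Knowing $\Cstar$ recovers $\tilde{\cov}=\covstar/\Cstar$ and $\tilde{\util}$, so the original-scale model, and hence the law of $\vec{\epsilon}=\cov^{1/2}\vec{\epsilon}^*$, is determined. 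Exchangeability in Assumption~\ref{assumption:epsilon-star} then yields $\choiceprob_j(\util,\cov) = \choiceprob_j(\mat{P}_\pi\util, \mat{P}_\pi\cov\mat{P}_\pi^\top)$ at the relabeled index.

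The second step is to make this factorization continuous. Encode the multisets by a continuous injective representation, such as sorting the quadruples lexicographically, or the power-sum and DeepSet embeddings of \citet{zaheer2017deep}. This gives a continuous map $\Phi: K \to \R^m$. The image $\Phi(K)$ is compact, and $\choiceprob_j = F\circ \Phi$ for a well-defined $F$ on $\Phi(K)$.

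To show $F$ is continuous, take $\Phi(x_n)\to\Phi(x)$. By compactness of $K$, every subsequence of $x_n$ has a further subsequence converging to some $x'$ with $\Phi(x')=\Phi(x)$. Then $\choiceprob_j(x_n)\to\choiceprob_j(x')=\choiceprob_j(x)$, provided $\choiceprob_j$ is continuous on $K$. Continuity of $\choiceprob_j$ in $(\util,\cov)$ follows from the density in Assumption~\ref{assumption:epsilon-star}: the boundaries of the regions $\mathcal{R}_j$ have Lebesgue measure zero, and the integrand varies continuously, so dominated convergence applies. Here we use $\tr(\tilde{\cov})>0$, which ensures $\Cstar$ is finite and the scale nondegenerate.

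The third step applies Tietze extension to extend $F$ continuously to a box containing $\Phi(K)$. The classical MLP universal approximation theorem, a consequence of Stone--Weierstrass for smooth non-polynomial activations such as Swish, then gives an MLP $N$ with $\sup_{\Phi(K)}|N-F|<\varepsilon$. Hence $\sup_K|N\circ\Phi - \choiceprob_j|<\varepsilon$.

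The main obstacle is the first step in the non-factorization-invariant case. There we must verify carefully that equality of the augmented invariants, together with $\Cstar$, pins down the model up to a permutation fixing both $j$ and $1$, and that such permutations preserve $\choiceprob_j$ under the symmetric-square-root convention. A secondary technical point is continuity of $\choiceprob_j$ when $\cov$ is singular in some directions. We will handle this by noting that the covariance of $\vec{\epsilon}$ is $C\cov$ and using the density of $\vec{\epsilon}^*$ directly in \eqref{eq:choice-integral}.
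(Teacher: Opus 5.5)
Your proposal takes a different route from the paper's. It is valid apart from two defects discussed below, and the paper's proof shares both.

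\textbf{Comparison of routes.} The paper encodes $\overline{\invarutil}_j$ by continuous $\permgroupsubsub$-invariant coordinate functions and asserts that they separate points of $(\utilcovspaceex\setminus\exceptsetutilex_j)/\permgroupsubsub$. It then uses that a quotient by a finite group is Hausdorff, applies Stone--Weierstrass on that quotient, and invokes MLP universality. The invariance of $\choiceprob_j$, and the refinement from $\permgroupsub$-orbits to $\permgroupsubsub$-orbits, are left as ``easily seen.''

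You instead verify fiber-constancy of $\choiceprob_j$ explicitly: Theorem~\ref{thm:separation}, then distinctness of the triples to force $\pi(1)=1$, then equivariance through the inversion of the preprocessing. You replace the quotient-space Stone--Weierstrass step with a compactness argument showing that the induced $F$ is continuous on $\Phi(K)$, followed by Tietze and standard universality.

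The two arguments are close cousins: your compactness lemma is exactly what makes the image of $K$ in the quotient homeomorphic to $\Phi(K)$. Yours is more elementary, fills in the step the paper skips, and lands directly on the ``invariant embedding, then MLP'' form that the DeepSet encoder realizes. The paper's phrasing buys a one-line density statement for the algebra generated by the invariants.

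\textbf{First defect: lexicographic sorting is not continuous.} Rivals with $v_k^*=v_l^*$ but $\Sigma^*_{kk}\neq\Sigma^*_{ll}$ have distinct triples, so they lie outside $\exceptsetutil_j$. Yet perturbing $v_k^*$ across $v_l^*$ swaps their sorted positions, and the encoding jumps. Your subsequence argument needs $\Phi$ continuous. You should therefore commit to the multisymmetric power-sum (DeepSet) embedding, which is continuous and injective on multisets of fixed size. The paper's proof relies on lexicographic sorting and inherits this defect.

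\textbf{Second defect: $\choiceprob_j$ is not continuous everywhere on the domain.} Your dominated-convergence argument breaks at points where, for some $k\neq j$, $v_j^*=v_k^*$ and $\Sigma^*_{jj}+\Sigma^*_{kk}-2\Sigma^*_{jk}=0$.
\begin{itemize}
\item At such points, rows $j$ and $k$ of $\cov^{1/2}$ coincide. The set where $v_j-v_k+([\cov^{1/2}]_{j\bigcdot}-[\cov^{1/2}]_{k\bigcdot})\vec{\epsilon}^*=0$ is then all of $\R^{\numalt}$ rather than a null set.
\item Consequently $U_j=U_k$ almost surely, and $\choiceprob_j$ jumps to $0$ as soon as $v_j<v_k$.
\item Such points are not in $\exceptsetutilex_j$. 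For instance, take $\numalt=3$, $j=2$, $\mat{\Omega}=a\ones_2\ones_2^\top$, and $v_2^*=v_3^*$. This gives $\Sigma^*_{22}=\Sigma^*_{33}=\Sigma^*_{23}$, the triples are distinct, and (b$_j$) holds automatically because $\mathcal{O}_2$ is a singleton. The value $\choiceprob_2=\Prob(U_2\geq U_1)>0$ drops to $0$ under an arbitrarily small decrease of $v_2$.
\end{itemize}
No continuous function can approximate $\choiceprob_j$ uniformly on a compact set containing such a point, and appealing to $\Cov(\vec{\epsilon})=C\cov$ does not help. The paper simply asserts continuity of the choice probabilities, so the statement itself needs the compact sets restricted, e.g.\ to $\Sigma^*_{jj}+\Sigma^*_{kk}-2\Sigma^*_{jk}>0$ for all $k\neq j$. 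Under that restriction, together with the power-sum embedding, your argument goes through.
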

\vspace{-3pt}

\begin{proof}[Proof sketch]
We apply the proof strategy of \citet{blumsmith2025machine}.
The choice probability $\choiceprob_j$ is continuous and $\permgroupsubsub$-invariant, where $\permgroupsubsub$ is the subgroup of permutations fixing alternatives 1 and $j$.
Define continuous functions encoding the components of $\overline{\invarutil}_j$ (except the indicators, $1_{\{k=1\}}$, which are used only for determining indices).
By Theorem~\ref{thm:separation}, these functions separate points in the
quotient space $(\utilcovspaceex \setminus \exceptsetutilex_j)/\permgroupsubsub$.
In this quotient space, two normalized choice environments are treated as the
same point if they differ only by an element of $\permgroupsubsub$, that is, by a
relabeling of the alternatives other than $1$ and $j$. Since ($\utilcovspaceex \setminus \exceptsetutilex_j$) is a subspace of a
finite-dimensional Euclidean space, it is Hausdorff. Moreover, since $\permgroupsubsub$ is finite and acts by coordinate permutations,
standard results imply that the quotient space is also Hausdorff
\citep{bredon1972introduction}.
Hence, by the Stone--Weierstrass theorem, the algebra generated by functions encoding $\overline{\invarutil}_j$ is dense in the continuous functions on compact subsets of this quotient space.\footnote{The usual Stone--Weierstrass theorem applies to compact Hausdorff
spaces. This condition is not automatic after passing to a quotient: when
multiple original points are combined into a single quotient-point, it can become
impossible to separate distinct quotient-points by disjoint open neighborhoods.} Pulling back to ($\utilcovspaceex \setminus \exceptsetutilex_j$) and applying the universal approximation theorem for MLPs \citep{cybenko1989approximation, hornik1989multilayer} yields the result.\footnote{Here ``pulling back'' means composing a quotient-space function with the natural projection from the original space to the quotient. If $q$ maps each $(\Cstar, \utilstar,\covstar)$ to its equivalence class under rival relabeling, then an approximation $\widetilde F$ on the quotient gives the corresponding approximation $\widetilde F\circ q$ on $\utilcovspace\setminus\exceptsetutil_j$.}  See Appendix \ref{sec:utility-covariance} for details.
\end{proof}
\vspace{-3pt}

Theorem~\ref{thm:universal} shows that no information beyond the invariant
summaries $\overline{\invarutil}_j$ is needed to approximate $\choiceprob_j$ on compact
subsets of $\utilcovspaceex \setminus \exceptsetutilex_j$. The remaining question is
whether the particular architecture in Section~\ref{sec:architecture} is rich
enough to approximate the required functions of this summary. The following corollary answers this question:

\vspace{12pt}
\begin{corollary}[Architecture Universal Approximation]
\label{cor:universal-approximation-architecture}
Let $\exceptsetutilex = \cup_{j=1}^{\numalt} \exceptsetutilex_j$. Under Assumption \ref{assumption:epsilon-star}, for each alternative $j$, the choice probability function $\choiceprob_j$ can be uniformly approximated on any compact subset of $\utilcovspaceex \setminus \exceptsetutilex_j$ by the per-alternative encoder. Further, the vector of choice probabilities $(\choiceprob_1, \ldots, \choiceprob_{\numalt})$ can be uniformly approximated on any compact subset of $\utilcovspaceex \setminus \exceptsetutilex$ by the neural network architecture of Section \ref{sec:architecture}.
\end{corollary}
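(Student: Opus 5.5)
We build on Theorem~\ref{thm:universal}. That theorem gives an MLP $F_j$, taking continuous encodings of $\overline{\invarutil}_j$ as input, that approximates $\choiceprob_j$ uniformly on a compact $\mathcal{K}\subset\utilcovspaceex\setminus\exceptsetutilex_j$. It remains to show the per-alternative encoder can realize such encodings and such an $F_j$, and then to handle the equivariant layers and softmax.

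\textbf{Step 1 (encoding the multisets by DeepSets).} On the compact $\mathcal{K}$ the triples in $\overline{\mathcal{T}}_j$ lie in a compact box in $\R^4$ and the covariances in $\mathcal{O}_j$ lie in a bounded interval. We use the standard sum-decomposition argument of \citet{zaheer2017deep} and \citet{blumsmith2025machine}. Power-sum (multisymmetric) polynomials $\sum_{k\neq j}\vec{d}_{jk}^{\,\vec{\alpha}}$ with $|\vec{\alpha}|\le \numalt-1$ form a continuous, injective encoding of multisets of fixed size $\numalt-1$; the analogous scalar power sums $\sum \Sigma_{kl}^{*m}$ with $m\le\binom{\numalt-1}{2}$ do the same for $\mathcal{O}_j$. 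Hence there is a continuous injective map $E_j=(\sum_k\phi^0_{\text{diag}}(\vec d_{jk}),\sum_{k<l}\phi^0_{\text{off}}(\vec o_{kl}),\Cstar)$ whose fibres on $\mathcal{K}$ are exactly the $\permgroupsubsub$-orbits. The base inputs are coordinates of $\vec d_{jk}$ and $\vec o_{kl}$, and $\Cstar$ is among the pass-through features $\vec s_j$. By Theorem~\ref{thm:separation}, $\choiceprob_j$ is constant on these fibres, so $\choiceprob_j=G_j\circ E_j$ for some $G_j$. $G_j$ is continuous on the compact image $E_j(\mathcal{K})$, because a continuous bijection from the compact Hausdorff quotient $\mathcal{K}/\permgroupsubsub$ is a homeomorphism.

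\textbf{Step 2 (replacing exact pieces by MLPs).} We approximate each polynomial $\phi^0$ uniformly on its compact domain by an MLP $\phi$, using \citet{cybenko1989approximation,hornik1989multilayer}. Sums of at most $\binom{\numalt}{2}$ terms keep the error controlled. We take $\rho_{\text{diag}},\rho_{\text{off}}$ close to the identity on the relevant compacts. We extend $G_j$ continuously to a compact neighbourhood via Tietze and approximate it by the combining MLP $\zeta$, which is uniformly continuous there. A standard $\varepsilon/3$ argument, using uniform continuity of $\zeta$ on a slightly enlarged compact set, gives $\sup_{\mathcal{K}}|\zeta(\cdot)-\choiceprob_j|<\varepsilon$. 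The extra engineered features can be ignored by zeroing their weights. This proves the first claim.

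\textbf{Step 3 (full architecture).} On a compact $\mathcal{K}\subset\utilcovspaceex\setminus\exceptsetutilex$, the encoder is a single network with shared weights across $j$. We handle this by having its output $\vec z_j$ approximate $\log\choiceprob_j$ rather than $\choiceprob_j$. This is legitimate because $\choiceprob_j$ is continuous and strictly positive on $\mathcal{K}$, by compactness and because the density gives positive mass to each region $\mathcal R_j$ off the degenerate set. Positivity is the point I expect to require the most care; if it fails, one works with $\log(\choiceprob_j+\delta)$ and lets $\delta\to0$.

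The same $\zeta$ serves every $j$ by permutation equivariance: $\choiceprob_j$ at $(\utilstar,\covstar)$ equals $\choiceprob_1$ at a relabelled input, and the encoder sees only relabelled multisets together with the indicator $1_{\{j=1\}}$ in $\vec s_j$. We cover $\mathcal K$ by the finitely many relabelled compacts and approximate the single function on their union.

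The equivariant layers are then taken as near-identity: set $\mat B=0$, $\vec c=0$, and let $\mat A$ scaled with a smooth activation approximate the identity on compacts. Softmax of $(\log\choiceprob_1,\dots,\log\choiceprob_\numalt)$ returns $(\choiceprob_1,\dots,\choiceprob_\numalt)$ exactly, since $\sum_j\choiceprob_j=1$. Softmax is Lipschitz, so uniform logit error $\varepsilon$ yields uniform probability error $O(\varepsilon)$.
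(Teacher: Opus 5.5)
For the first claim your argument follows the paper's route (Theorem~\ref{thm:universal} plus the DeepSet sum decomposition), but you make it explicit. The multisymmetric power-sum encoding, and continuity of the decoder $G_j$ via the continuous bijection from the compact quotient onto a Hausdorff image, are a cleaner justification than the paper's bare appeal to \citet{zaheer2017deep}.

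For the second claim your route is genuinely different. The paper lets the equivariant layers do real work, citing \citet{segol2020universal} for approximation of continuous equivariant maps. It leaves implicit how one shared encoder serves every $j$ and how the logits pass through the softmax. You instead load everything into the shared encoder (target $\log\choiceprob_j$), make the equivariant layers (near-)identity, and let softmax undo the logarithm. This avoids \citet{segol2020universal} altogether and substantiates the paper's remark that the encoder alone suffices. The price is that you must handle positivity and weight sharing explicitly, and both steps need repair as written.

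\textbf{Positivity.} Strict positivity does not follow from Assumption~\ref{assumption:epsilon-star}. Bounded-support exchangeable errors (e.g.\ iid uniform) make $\choiceprob_j$ vanish at points of a compact $\mathcal{K}\subset\utilcovspaceex\setminus\exceptsetutilex$. So can a singular $\covstar$ under which some $\epsilon_k-\epsilon_l$ is degenerate. Your fallback is correct and should simply be the argument: target $\log(\choiceprob_j+\delta)$. Its softmax is $(\choiceprob_j+\delta)/(1+\numalt\delta)$, which is within $\numalt\delta$ of $\choiceprob_j$.

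\textbf{Relabelling.} The reduction to $\choiceprob_1$ fails for models that are not factorization-invariant, which is exactly when $\Cstar$ and the indicators are needed. A permutation moving alternative $1$ does not preserve the level normalization \eqref{eq:Sigma-structure}, so $\choiceprob_j(x)=\choiceprob_1(\pi x)$ is false in general. Moreover, the encoder inputs for $j$ at $x$ and for $1$ at $\pi x$ differ in $1_{\{k=1\}}$ and $1_{\{j=1\}}$.

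The fix is to relabel only by permutations fixing $1$. For $j\neq 1$ take $\pi_j$ with $\pi_j(1)=1$ and $\pi_j(j)=2$. Then:
\begin{itemize}
\item the encoder inputs for $j$ at $x$ coincide with those for $2$ at $\pi_j x$;
\item $\choiceprob_j(x)=\choiceprob_2(\pi_j x)$;
\item $\pi_j\mathcal{K}$ avoids $\exceptsetutilex_2$.
\end{itemize}
Your $\phi$ and $\rho$ do not depend on $j$. The indicator $1_{\{j=1\}}$ in $\vec{s}_j$ separates the inputs to $\zeta$ for $j=1$ from those for $j\neq1$ by distance one. So your Steps 1--2, applied to alternative $1$ on $\mathcal{K}$ and to alternative $2$ on $\bigcup_{j\neq 1}\pi_j\mathcal{K}$, give two continuous decoders on disjoint compact pieces. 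These paste into one continuous target, and hence a single $\zeta$ serves every $j$.

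With these two changes the argument goes through. In the factorization-invariant case your original relabelling is already correct.
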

\begin{proof}
Theorem~\ref{thm:universal} shows that, for each alternative $j$,
$\choiceprob_j$ can be uniformly approximated on compact subsets of
$\utilcovspaceex \setminus \exceptsetutilex_j$ by a continuous function of $\overline{\invarutil}_j$. The per-alternative encoder is constructed to process precisely these values: the diagonal DeepSet processes the values in \eqref{eq:diagonal-multiset-expanded}, the off-diagonal DeepSet processes those in \eqref{eq:offdiagonal-multiset}, and the combining MLP processes the corresponding representations with $\Cstar$.
The first statement of corollary \ref{cor:universal-approximation-architecture} therefore follows from Theorem~\ref{thm:universal} and the universal approximation property of DeepSets \citep{zaheer2017deep}.
For the second statement, first note that the preceding argument applies to $\exceptsetutilex = \cup_{j=1}^{\numalt} \exceptsetutilex_j$ for all $\numalt$ alternatives simultaneously.
Further, \citet{segol2020universal} show that linear equivariant layers composed with nonlinear activation functions can uniformly approximate continuous permutation-equivariant functions on compact subsets of the feature space.
Thus, the full architecture can uniformly approximate
the vector of choice probabilities on compact subsets of $\utilcovspaceex \setminus \exceptsetutilex$.
\end{proof}

\begin{remark} The exceptional set $\exceptsetutilex$ has measure zero, so it does not affect approximation in practice: for any probability distribution over $\utilcovspaceex$ that is absolutely continuous with respect to Lebesgue measure, inputs lie in $\exceptsetutilex$ with probability zero.
\end{remark}
\vspace{6pt}

\begin{remark} The values $\Cstar$ and $1_{k=1}$ are not required for models that are factorization invariant. In this special case, the proofs can be simplified because it is sufficient to show generic separation up to the symmetry group $\permgroupsub$.
\end{remark}

\vspace{6pt}
\begin{remark} Corollary \ref{cor:universal-approximation-architecture} guarantees that the per-alternative encoder alone is sufficient for universal approximation. The inclusion of the final equivariant layers ensures that the estimated probabilities sum to one and improves the expressivity of the neural network.
\end{remark}

\vspace{6pt}
\begin{remark}
For very large choice sets, in order to reduce computation time, one may opt for a simpler architecture that applies the per-alternative encoder to only the chosen alternative with no equivariant layers, resulting in a single output, $P_j$. This modification maintains the universal approximation property; however, it relaxes the sum-to-one constraint in return for computational efficiency.
\end{remark}

\vspace{6pt}
\begin{remark}
Corollary \ref{cor:universal-approximation-architecture} can be extended to hold simultaneously for a finite set of $\numalt$ values under a suitable extension of the space $\utilcovspaceex$ and its exceptional sets. Section \ref{sec:sim-multi-k} presents numerical results showing that the proposed architecture can provide accurate approximations for multiple values of $\numalt$ simultaneously. This makes it simple to train the emulator to handle contexts where the choice set size varies across observations.
\end{remark}

\vspace{-9pt}
\subsection{Consistency and Asymptotic Normality}
\label{sec:estimation-theory}

We now establish that emulator-based estimators inherit the asymptotic properties of exact maximum likelihood estimators under appropriate conditions.

Let $\{(y_i, \mat{X}_i)\}_{i=1}^n$ be iid observations from the true model with parameter $\paramtrue \in \paramspace$. Define the true and emulator log-likelihood contributions as
\vspace{-3pt}
\begin{align*}
\loglikobs(y, \mat{X}; {\param}) &= \log \choiceprob_{y}\left\{\util(\mat{X}, \param), \cov(\mat{X}, \param)\right\}, \\
\loglikobshatn(y, \mat{X}; {\param}) &= \log \choiceprobhat_{y}\left\{\Cstar(\mat{X}, \param), \utilstar(\mat{X}, \param), \covstar(\mat{X}, \param); \netparam_n\right\},
\end{align*}
\vspace{-1pt}
with corresponding score functions
\vspace{-3pt}
\[
\score(y, \mat{X}; \param) = \gradparam \loglikobs(y, \mat{X}; {\param}), \qquad
\scorehatn(y, \mat{X}; \param) = \gradparam \loglikobshatn(y, \mat{X}; {\param}).
\]
\vspace{-1pt}
The true and emulator (averaged) log likelihoods are $\loglik(\param) = \frac{1}{n} \sum_{i=1}^n \loglikobs(y_i, \mat{X}_i; \param)$ and $\loglikhat(\param) = \frac{1}{n} \sum_{i=1}^n \loglikobshatn(y_i, \mat{X}_i; \param)$. Define the MLE and emulator-based estimator as
\vspace{-3pt}
\[
\parammle = \argmax_{\param \in \paramspace} \loglik(\param), \qquad \paramest = \argmax_{\param \in \paramspace} \loglikhat(\param).
\]
\vspace{-1pt}
We require that the emulator log likelihood converges to the true log likelihood.

\vspace{6pt}
\begin{assumption}[Emulator Approximation Quality]
\label{ass:approx}
For some $\alpha \geq 0$, the emulator satisfies
\[
\sup_{\param \in \paramspace} \left| \loglikhat(\param) - \loglik(\param) \right| = o_p(n^{-\alpha}).
\]
\end{assumption}

\vspace{-6pt}
The $n$ referenced in Assumption \ref{ass:approx} is the sample size of the observed data. Consequently, this assumption involves an asymptotic regime in which the emulator becomes increasingly accurate as $n \to \infty$, which is achievable by progressively (a) increasing the number of simulated training examples and (b) increasing the neural network complexity (i.e., the number of layers and hidden units). Appendix \ref{sec:approx-discussion} provides more primitive sufficient conditions for Assumption \ref{ass:approx} with references to closely related work on convergence rates of neural networks and their usage in econometric theory. Notably, for models with analytic choice probabilities, we expect this assumption to require only $O(n^{\alpha + \delta})$ training examples for any $\delta > 0$.

We also require sufficient regularity of the true model.

\vspace{6pt}
\begin{assumption}[Regularity]
\label{ass:regularity}
The model satisfies:
\begin{enumerate}
    \item[(i)] $\paramtrue$ is the unique maximizer of $\E\left\{\loglikobs(y, \mat{X}; \param)\right\}$ in the compact set $\paramspace$ with $\paramtrue \in \mathrm{int}(\paramspace)$ and $\E\left\{|\loglikobs(y, \mat{X}; \paramtrue)|\right\} < \infty$;
    \item[(ii)] $\loglikobs(y, \mat{X}; {\param})$ is twice continuously differentiable in $\param$ for all $(y, \mat{X})$;
    \item[(iii)] $\E\left\{\sup_{\param \in \paramspace}\|\score(y, \mat{X}; \param)\|^2\right\} < \infty$;
    \item[(iv)] $\E\left\{\sup_{\param \in \paramspace}\|\gradparam^2 \loglikobs(y, \mat{X}; {\param})\|\right\} < \infty$;
    \item[(v)] the Fisher information $\fisher(\paramtrue) = \E\left\{\score(y, \mat{X}; \paramtrue) \score(y, \mat{X}; \paramtrue)^\top\right\}$ is positive definite.
\end{enumerate}
\end{assumption}

The unique-maximizer requirement in condition (i) is an identification condition. It presumes a parameterization in which the usual location and scale indeterminacies of random utility models have been eliminated. In particular, the location normalization must rule out observationally equivalent parameterizations that differ only by a common additive shift in utilities, while the scale normalization must rule out parameterizations that differ only by a common positive rescaling of utilities. The latter may be imposed, for example, by normalizing one utility coefficient, such as the price coefficient, or by imposing an appropriate restriction on the error covariance parameters. Conditional on these normalizations, the data must contain sufficient variation to identify the remaining parameters; for example, variation in alternative-specific prices can be essential \citep{keane1992}.

\vspace{9pt}
\begin{theorem}[Consistency]
\label{thm:consistency}
Under Assumptions~\ref{ass:approx} and \ref{ass:regularity}, $\paramest \convp \paramtrue$ as $n \to \infty$.
\end{theorem}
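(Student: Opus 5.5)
The plan is to follow the standard argmax-consistency template (e.g., Newey--McFadden Theorem 2.1). We work with the population objective $L(\param) = \E\{\loglikobs(y, \mat{X}; \param)\}$ and show that the emulator objective $\loglikhat$ converges to it uniformly in probability over the compact set $\paramspace$. Identification in Assumption~\ref{ass:regularity}(i) then forces the maximizer of $\loglikhat$ toward $\paramtrue$. The one new ingredient relative to the textbook argument is the triangle inequality
\[
\sup_{\param \in \paramspace} \bigl|\loglikhat(\param) - L(\param)\bigr| \le \sup_{\param \in \paramspace} \bigl|\loglikhat(\param) - \loglik(\param)\bigr| + \sup_{\param \in \paramspace} \bigl|\loglik(\param) - L(\param)\bigr|.
\]
The first term is $o_p(n^{-\alpha}) = o_p(1)$ by Assumption~\ref{ass:approx}, since $\alpha \ge 0$. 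So only the exact log likelihood needs a uniform law of large numbers.

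\textbf{Step 1 (ULLN for $\loglik$).} I would establish that $\sup_{\param}|\loglik(\param) - L(\param)| \convp 0$ via a Lipschitz/dominance argument. By Assumption~\ref{ass:regularity}(ii) and the mean value theorem,
\[
|\loglikobs(y,\mat{X};\param) - \loglikobs(y,\mat{X};\param')| \le \sup_{\vec{\vartheta}\in\paramspace}\|\score(y,\mat{X};\vec{\vartheta})\|\,\|\param-\param'\|,
\]
where the envelope has finite mean by (iii) and Jensen. This holds if $\paramspace$ is convex; otherwise I would argue locally on small balls, using that $\paramspace$ is compact. Combined with $\E|\loglikobs(y,\mat{X};\paramtrue)| < \infty$ and the diameter of $\paramspace$, this yields an integrable envelope $\sup_{\param}|\loglikobs(y,\mat{X};\param)|$. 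It also gives a stochastic equicontinuity bound: $\frac1n\sum_i \sup\|\score\|$ is $O_p(1)$ by the LLN. Pointwise LLN at the points of a finite $\delta$-net of $\paramspace$, plus this Lipschitz control, gives the uniform convergence. The same Lipschitz bound with dominated convergence shows $L$ is continuous.

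\textbf{Step 2 (identification argument).} Fix $\varepsilon>0$ and let $N_\varepsilon$ be the open $\varepsilon$-ball around $\paramtrue$. Since $L$ is continuous, $\paramspace\setminus N_\varepsilon$ is compact, and $\paramtrue$ is the unique maximizer, we have $\eta := L(\paramtrue) - \sup_{\paramspace\setminus N_\varepsilon} L > 0$. Write $\Delta_n := \sup_{\param}|\loglikhat - L|$. On the event $\{\Delta_n < \eta/2\}$, whose probability tends to one, we get
\[
L(\paramest) > \loglikhat(\paramest) - \eta/2 \ge \loglikhat(\paramtrue) - \eta/2 > L(\paramtrue) - \eta,
\]
so $\paramest \in N_\varepsilon$. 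If $\paramest$ is only an approximate maximizer, an $o_p(1)$ slack is absorbed in the same way.

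The main obstacle is Step 1's envelope. Log choice probabilities can be unbounded as $\choiceprob_y \to 0$, so a naive bound fails. The integrable envelope must instead be built from (i) and (iii) as above, and this needs care in the non-convex case. The emulator part is handled entirely by Assumption~\ref{ass:approx}. Measurability of $\paramest$ I would take as given or sidestep with outer probability.
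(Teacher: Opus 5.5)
Your proposal is correct and follows the same route as the paper: the triangle inequality with Assumption~\ref{ass:approx} handling the emulator term, a ULLN for $\loglik$, and the identification argument of \citet[Theorem 2.1]{newey1994large}, which you reprove inline rather than cite. The only difference is that you construct the integrable envelope and the Lipschitz (stochastic-equicontinuity) bound from Assumption~\ref{ass:regularity}(i) and (iii), and you correctly flag the convexity or connectedness of $\paramspace$ that this construction needs; the paper simply asserts that Assumption~\ref{ass:regularity} suffices for the ULLN.
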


\vspace{-6pt}
\begin{proof}
By Assumption~\ref{ass:approx} with $\alpha \geq 0$, $\sup_{\param} |\loglikhat(\param) - \loglik(\param)| \convp 0$. Further, the regularity conditions of Assumption \ref{ass:regularity} are sufficient to apply the uniform law of large numbers (ULLN) so that $\sup_{\param} |\loglik(\param) - \E\left\{\loglikobs(y, \mat{X}; {\param})\right\}| \convp 0$. The triangle inequality gives uniform convergence of $\loglikhat$ to $\E\left\{\loglikobs(y, \mat{X}; {\param})\right\}$, which is uniquely maximized at $\paramtrue$ by Assumption~\ref{ass:regularity}(i). By the consistency theorem for extremum estimators \citep[Theorem 2.1]{newey1994large}, $\paramest \convp \paramtrue$.
\end{proof}

\vspace{-6pt}
Under the stronger condition $\alpha \geq 1$, the emulator-based estimator is asymptotically normal.

\vspace{6pt}
\begin{theorem}[Asymptotic Normality]
\label{thm:normality}
Under Assumptions~\ref{ass:approx} and \ref{ass:regularity} with $\alpha \geq 1$,
\[
\sqrt{n}(\paramest - \paramtrue) \convd \mathcal{N}\left\{\vec{0}, \fisher(\paramtrue)^{-1}\right\}.
\]
\end{theorem}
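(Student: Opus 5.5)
The plan is to show that $\paramest$ is an approximate maximizer of the \emph{true} log likelihood $\loglik$, with optimization error $o_p(n^{-1})$, and then to invoke the standard asymptotic normality theorem for approximate extremum estimators \citep[Theorem 2.1 and Theorem 3.1, with the ``$\hat Q_n(\hat\theta)\ge \sup Q_n - o_p(n^{-1})$'' formulation]{newey1994large}. This route needs no assumption on emulator gradients. Assumption~\ref{ass:approx} controls only levels of the likelihood, so we cannot expand the emulator first-order condition directly. Instead, everything is routed through the exact likelihood $\loglik$, whose derivatives are controlled by Assumption~\ref{ass:regularity}.

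\textbf{Step 1 (approximate maximization).} Write $\Delta_n = \sup_{\param}|\loglikhat(\param)-\loglik(\param)| = o_p(n^{-1})$, using $\alpha\ge 1$. Since $\paramest$ maximizes $\loglikhat$,
\[
\loglik(\paramest) \ge \loglikhat(\paramest) - \Delta_n \ge \loglikhat(\parammle) - \Delta_n \ge \loglik(\parammle) - 2\Delta_n .
\]
Hence $\loglik(\paramest) \ge \sup_{\param\in\paramspace}\loglik(\param) - o_p(n^{-1})$.

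\textbf{Step 2 (consistency).} Theorem~\ref{thm:consistency} gives $\paramest\convp\paramtrue$. Since $\paramtrue\in\mathrm{int}(\paramspace)$, $\paramest$ is interior with probability approaching one.

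\textbf{Step 3 (quadratic expansion).} We verify the conditions of the Newey--McFadden approximate-maximizer normality theorem for $\loglik$:
\begin{itemize}
\item[(a)] twice continuous differentiability, from Assumption~\ref{ass:regularity}(ii);
\item[(b)] $\sqrt n\,\gradparam\loglik(\paramtrue)\convd\mathcal N\{\vec 0,\fisher(\paramtrue)\}$, by the Lindeberg--L\'evy CLT, using $\E\,\score(\paramtrue)=\vec 0$ and finite variance from (iii);
\item[(c)] $\sup_{\|\param-\paramtrue\|\le\delta_n}\|\gradparam^2\loglik(\param)-\mathbf H\|\convp 0$ with $\mathbf H=\E\,\gradparam^2\loglikobs(\paramtrue)$, by the ULLN under (iv) plus continuity;
\item[(d)] $\mathbf H=-\fisher(\paramtrue)$, nonsingular by (v).
\end{itemize}

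For (b), the mean-zero score and the information equality in (d) both require interchanging differentiation and integration. Since the integrand is a finite sum over $y$ of $\choiceprob_y\,\score$ and $\choiceprob_y\,\gradparam^2\loglikobs$ inside the $\mat X$ expectation, I would justify the interchange by dominated convergence via the envelope conditions (iii)--(iv).

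Concretely, Taylor-expand $\loglik(\param)$ around $\paramtrue$ in the form
\[
\loglik(\param)-\loglik(\paramtrue) = \gradparam\loglik(\paramtrue)^\top(\param-\paramtrue) - \tfrac12(\param-\paramtrue)^\top\fisher(\paramtrue)(\param-\paramtrue) + o_p(\|\param-\paramtrue\|^2).
\]
Combining this with Step 1 first yields $\sqrt n$-consistency: the quadratic term dominates and forces $\|\paramest-\paramtrue\|=O_p(n^{-1/2})$. The same expansion then shows that $\paramest$ and the maximizer of the quadratic, $\paramtrue+\fisher^{-1}\gradparam\loglik(\paramtrue)$, are within $o_p(n^{-1/2})$. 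Therefore
\[
\sqrt n(\paramest-\paramtrue)=\fisher(\paramtrue)^{-1}\sqrt n\,\gradparam\loglik(\paramtrue)+o_p(1)\convd\mathcal N\{\vec 0,\fisher(\paramtrue)^{-1}\}.
\]

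The main obstacle is the rate bookkeeping in Step 3. The slack must be $o_p(n^{-1})$, which is exactly why $\alpha\ge1$ is needed: a slack of $O_p(n^{-1})$ would contaminate the limit by an $O_p(1)$ term. The remainder must also be controlled uniformly over shrinking neighborhoods. I would handle the remainder with the mean-value form of the expansion and the envelope in (iv), following the proof of Theorem 7.1 in \citet{newey1994large}, which requires only an approximate maximizer.
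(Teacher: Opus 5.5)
Your proof is correct. Its first step matches the paper's exactly: both show that $\paramest$ is an $o_p(n^{-1})$-approximate maximizer of the true $\loglik$, via $\loglik(\paramest)\ge\loglikhat(\paramest)-\Delta_n\ge\loglikhat(\parammle)-\Delta_n\ge\loglik(\parammle)-2\Delta_n$. The two routes diverge after that.

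\textbf{The paper's route.} It expands $\loglik$ around $\parammle$, where the sample score vanishes. It then uses the Hessian bound to turn $n\{\loglik(\parammle)-\loglik(\paramest)\}=o_p(1)$ into $\sqrt{n}(\paramest-\parammle)\convp\vec{0}$. Finally it imports asymptotic normality of the exact MLE as a black box (Theorem~5.39 of \citet{vaart1998asymptotic}) and applies Slutsky.

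\textbf{Your route.} You expand around $\paramtrue$ and run the approximate-extremum argument yourself: first the $\sqrt{n}$ rate, then comparison with the maximizer of the local quadratic. This lands directly on $\sqrt{n}(\paramest-\paramtrue)=\fisher(\paramtrue)^{-1}\sqrt{n}\,\gradparam\loglik(\paramtrue)+o_p(1)$.

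\textbf{Comparison.} The paper's version is more modular and states the appealing conclusion that the emulator estimator is first-order equivalent to the exact MLE. Yours is self-contained and never needs a vanishing score at $\parammle$; the paper uses that fact without noting it holds only with probability approaching one, once $\parammle$ is interior. Yours also makes explicit the score CLT and the information identity, which the paper uses implicitly when it asserts $\gradparam^2\loglik(\bar{\param})\convp-\fisher(\paramtrue)$. Equivalence with $\parammle$ still follows from your representation, since $\parammle$ is the zero-slack case.

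\textbf{Two small corrections.} First, Newey and McFadden's Theorem~3.1 is stated for estimators satisfying the first-order condition. The result that actually accommodates $o_p(n^{-1})$ slack is their Theorem~7.1, which you do end up citing. Second, no dominated-convergence interchange is needed for the mean-zero score or the information identity. Conditional on $\mat{X}$, both follow exactly from differentiating the finite sum $\sum_y\choiceprob_y=1$ once and twice. Conditions (iii)--(iv) are then needed only for integrability over $\mat{X}$.
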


\vspace{-6pt}
The proof, given in Appendix~\ref{sec:normality-proof}, shows that $\paramest$ is asymptotically equivalent to $\parammle$.

\vspace{6pt}
\begin{remark}[Approximate Maximizers]
\label{rem:approx-max}
If a maximum is not uniquely attained, one can work with approximate maximizers satisfying $\loglikhat(\paramest) \geq \sup_{\param} \loglikhat(\param) - o_p(n^{-\alpha})$. The consistency and asymptotic normality results continue to hold; see \citet[Theorem 2.1]{newey1994large} and Appendix~\ref{sec:normality-proof}.
\end{remark}

To estimate $\fisher(\paramtrue)$, define the true and emulator outer-product-of-scores estimators:
\begin{align*}
\fisher_n(\param) &= \frac{1}{n} \sum_{i=1}^n \score(y_i, \mat{X}_i; \param) \score(y_i, \mat{X}_i; \param)^{\top}, \\
\hat{\fisher}_n(\param) &= \frac{1}{n} \sum_{i=1}^n \scorehatn(y_i, \mat{X}_i; \param) \scorehatn(y_i, \mat{X}_i; \param)^{\top}.
\end{align*}

\begin{assumption}[Gradient Approximation Quality]
\label{ass:grad-approx}
The emulator satisfies
\[
\sup_{\param \in \paramspace} \left\| \hat{\fisher}_n(\param) - \fisher_n(\param) \right\| \convp 0 \quad\text{as } n \to \infty.
\]
\end{assumption}

\vspace{-6pt}
Assumption~\ref{ass:grad-approx} is justified by Sobolev training, which penalizes discrepancies between emulator and target gradients. The penalization is required because training on log-likelihood values alone does not guarantee convergence of gradients. As with Assumption~\ref{ass:approx}, see Appendix \ref{sec:approx-discussion} for more primitive sufficient conditions and additional discussion.

\newpage
\begin{proposition}[Consistent Estimation of Fisher Information]
\label{prop:fisher-estimation}
Under Assumptions~\ref{ass:approx}, \ref{ass:regularity}, and \ref{ass:grad-approx}, $\hat{\fisher}_n(\paramest) \convp \fisher(\paramtrue)$.
\end{proposition}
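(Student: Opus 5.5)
The argument splits the error into an emulator-approximation part and a true-model part:
\[
\bigl\|\hat{\fisher}_n(\paramest) - \fisher(\paramtrue)\bigr\|
\le \sup_{\param \in \paramspace}\bigl\|\hat{\fisher}_n(\param) - \fisher_n(\param)\bigr\|
+ \bigl\|\fisher_n(\paramest) - \fisher(\paramtrue)\bigr\|.
\]
The first term is $o_p(1)$ directly by Assumption~\ref{ass:grad-approx}. Taking the supremum over $\param$ is what lets us evaluate at the random point $\paramest$ without further argument. The rest of the proof therefore concerns only the true outer-product-of-scores $\fisher_n$ evaluated at a consistent estimator.

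For the second term, split again:
\[
\bigl\|\fisher_n(\paramest) - \fisher(\paramest)\bigr\| + \bigl\|\fisher(\paramest) - \fisher(\paramtrue)\bigr\|,
\]
where $\fisher(\param) = \E\{\score(y,\mat{X};\param)\score(y,\mat{X};\param)^\top\}$ is now defined for all $\param \in \paramspace$.

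\textbf{Uniform LLN step.} We show $\sup_{\param}\|\fisher_n(\param) - \fisher(\param)\| \convp 0$ using the standard ULLN for iid data (e.g.\ \citet[Lemma 2.4]{newey1994large}). It needs three ingredients:
\begin{itemize}
\item $\paramspace$ is compact, by Assumption~\ref{ass:regularity}(i).
\item $\param \mapsto \score(y,\mat{X};\param)\score(y,\mat{X};\param)^\top$ is continuous for every $(y,\mat{X})$, since $\loglikobs$ is $C^2$ by Assumption~\ref{ass:regularity}(ii), so $\score$ is continuous.
\item A dominating function: $\|\score\score^\top\| \le \sup_{\param}\|\score\|^2$, which is integrable by Assumption~\ref{ass:regularity}(iii).
\end{itemize}

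\textbf{Continuity step.} The same lemma gives continuity of $\fisher(\param)$ on $\paramspace$; alternatively, this follows from dominated convergence with the same envelope. Theorem~\ref{thm:consistency} gives $\paramest \convp \paramtrue$, and the continuous mapping theorem then yields $\fisher(\paramest) \convp \fisher(\paramtrue)$.

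Combining the three $o_p(1)$ terms completes the proof. Note that Assumption~\ref{ass:approx} enters only through Theorem~\ref{thm:consistency}, so $\alpha \ge 0$ suffices. Assumption~\ref{ass:regularity}(v) is not needed for this proposition; it matters only for inverting the limit afterwards.

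The main obstacle is the ULLN step, specifically the choice of dominating function. One must make sure the envelope is uniform in $\param$ over the whole set $\paramspace$, not merely over a neighbourhood of $\paramtrue$. Assumption~\ref{ass:regularity}(iii) is stated in exactly this uniform form, so the step goes through. A fallback would be to work only on a shrinking neighbourhood of $\paramtrue$ and use a mean-value bound with Assumption~\ref{ass:regularity}(iv), but that route is unnecessary here.
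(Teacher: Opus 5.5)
Your proposal is correct and follows the paper's proof: the same three-term triangle-inequality split, with Assumption~\ref{ass:grad-approx} handling the emulator term, a ULLN under Assumption~\ref{ass:regularity} handling $\fisher_n - \fisher$, and continuity of $\fisher$ combined with Theorem~\ref{thm:consistency} handling the last term. The only difference is that you spell out the ULLN ingredients the paper leaves implicit: compactness, continuity of the score, and the envelope $\sup_{\param}\|\score\|^2$ from Assumption~\ref{ass:regularity}(iii).
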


\begin{proof}
By the triangle inequality,
$\| \hat{\fisher}_n(\paramest) - \fisher(\paramtrue) \| \leq \| \hat{\fisher}_n(\paramest) - \fisher_n(\paramest) \| + \| \fisher_n(\paramest) - \fisher(\paramest) \| + \| \fisher(\paramest) - \fisher(\paramtrue) \|.$
The first term vanishes by Assumption~\ref{ass:grad-approx}. For the second term, the regularity conditions of Assumption \ref{ass:regularity} are sufficient to apply the ULLN to $\fisher_n$; thus, this term also converges in probability to zero. Finally, the third term vanishes by continuity of $\fisher$ and consistency of $\paramest$.
\end{proof}

The emulator score $\scorehatn(y_i, \mat{X}_i; \param)$ is computable via automatic differentiation through the smooth neural network. Standard errors for $\paramest$ are obtained as the square roots of the diagonal elements of $\hat{\fisher}_n(\paramest)^{-1}/n$.

An alternative estimator based on the Hessian,
$\tilde{\fisher}_n = -\frac{1}{n} \sum_{i=1}^n \gradparam^2 \loglikobshatn(y_i, \mat{X}_i; \paramest)$,
requires second-derivative accuracy not directly targeted by first-order Sobolev training. The score-based estimator $\hat{\fisher}_n(\paramest)$ is therefore preferred.

\subsection{Inference Under Misspecification}
\label{sec:misspecification}

When the emulator does not perfectly approximate the true likelihood, we can still obtain valid inference by treating the emulator as a working model, following the quasi-maximum likelihood framework \citep{white1982maximum}. In this section, we treat the neural network as fixed with $n$, dropping the $n$ subscript from the observation log likelihood, $\loglikobshat(y, \mat{X}; \param)$, and score function, $\scorehat(y, \mat{X}; \param)$.
Define the pseudo-true parameter as the population maximizer of the emulator log likelihood:
\[
\parampseudo = \argmax_{\param \in \paramspace} \E\left\{ \loglikobshat(y, \mat{X}; {\param}) \right\}.
\]
When the emulator closely approximates the true likelihood, $\parampseudo \approx \paramtrue$. The asymptotic distribution depends on two matrices:
\vspace{-2pt}
\[
\Amat(\param) = -\E\left\{ \gradparam^2 \loglikobshat(y, \mat{X}; {\param}) \right\}, \qquad
\Bmat(\param) = \E\left\{ \scorehat(y, \mat{X}; \param) \scorehat(y, \mat{X}; \param)^\top \right\}.
\]
\vspace{-2pt}
Under correct specification (i.e., a perfect emulator), the information matrix equality gives $\Amat(\paramtrue) = \Bmat(\paramtrue) = \fisher(\paramtrue)$. Under misspecification, these matrices differ, necessitating the sandwich covariance form. 
Under regularity conditions stated in Appendix~\ref{sec:regularity-misspec-appendix}, we obtain the following results.

\vspace{6pt}
\begin{theorem}[Consistency Under Misspecification]
\label{thm:consistency-misspec}
Under Assumptions~\ref{ass:pseudo-identification} and \ref{ass:regularity-misspec}, $\paramest \convp \parampseudo$.
\end{theorem}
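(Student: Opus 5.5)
This is a standard extremum-estimator (M-estimation) consistency argument, mirroring the proof of Theorem~\ref{thm:consistency}. The emulator network is held fixed, so $\loglikobshat(y,\mat{X};\param)$ is a deterministic function of the data and $\param$. The target is $\E\{\loglikobshat(y,\mat{X};\param)\}$, which replaces the true expected log likelihood. I will invoke \citet[Theorem 2.1]{newey1994large}, which requires four conditions:
\begin{itemize}
\item[(a)] $\E\{\loglikobshat(y,\mat{X};\param)\}$ is uniquely maximized at $\parampseudo$;
\item[(b)] $\paramspace$ is compact;
\item[(c)] $\E\{\loglikobshat(y,\mat{X};\param)\}$ is continuous in $\param$;
\item[(d)] $\sup_{\param\in\paramspace}|\loglikhat(\param)-\E\{\loglikobshat(y,\mat{X};\param)\}|\convp 0$.
\end{itemize}
Conditions (a) and (b) should be exactly what the pseudo-identification Assumption~\ref{ass:pseudo-identification} delivers: a unique maximizer on a compact $\paramspace$, presumably with $\parampseudo$ in the interior, although interiority is needed only for the later normality result.

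The main work is (d), a uniform law of large numbers for the iid summands $\loglikobshat(y_i,\mat{X}_i;\param)$. I will take the ULLN in the form of \citet[Lemma 2.4]{newey1994large}. It requires that $\paramspace$ be compact, that $\param\mapsto\loglikobshat(y,\mat{X};\param)$ be continuous for almost every $(y,\mat{X})$, and that there be a dominating function $d(y,\mat{X})$ with $\sup_{\param}|\loglikobshat(y,\mat{X};\param)|\le d(y,\mat{X})$ and $\E\, d<\infty$. The same lemma also gives continuity of $\E\{\loglikobshat\}$, which is (c).

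Continuity in $\param$ follows from the construction. The maps $\param\mapsto(\util,\cov)$ are assumed smooth in Assumption~\ref{ass:regularity-misspec}. The preprocessing $(\util,\cov)\mapsto(\Cstar,\utilstar,\covstar)$ is smooth wherever $\tr(\tilde{\cov})>0$. The network uses smooth activations and ends in a softmax, so its log-probabilities are finite and smooth.

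The obstacle is the dominating envelope. The log of a softmax output can diverge to $-\infty$ as the logits spread, and $\Cstar$ blows up if $\tr(\tilde{\cov})\to 0$. I expect Assumption~\ref{ass:regularity-misspec} either to impose $\E\{\sup_{\param}|\loglikobshat|\}<\infty$ directly, or to impose moment conditions on the covariates together with $\tr(\tilde{\cov})$ bounded away from zero on $\paramspace$. In the latter case, I would bound the logits by a polynomial in $\|\mat{X}\|$ using Lipschitz-type growth of Swish networks with fixed weights. Since $|\log\mathrm{softmax}_j(\vec{z})|\le 2\max_k|z_k|+\log\numalt$, this yields an integrable envelope. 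Given (a)--(d), Theorem 2.1 of \citet{newey1994large} gives $\paramest\convp\parampseudo$. The same argument also covers approximate maximizers, as in Remark~\ref{rem:approx-max}.
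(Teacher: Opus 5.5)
Your extremum-estimator strategy is the right one in spirit, but you have misread what the hypotheses supply. The two steps you justify through an envelope rest on a condition the paper never imposes. Assumption~\ref{ass:pseudo-identification} is stated at a high level:
\begin{itemize}
\item part (iv) \emph{is} your uniform convergence (d);
\item part (iii) postulates that $\parampseudo$ is a well-separated maximizer;
\item part (v) gives near-maximization.
\end{itemize}
Assumption~\ref{ass:regularity-misspec} contains no global envelope $\E\{\sup_{\param}|\loglikobshat(y,\mat{X};\param)|\}<\infty$, no covariate moment conditions, and no lower bound on $\tr(\tilde{\cov})$. Its only domination condition, part (ii), bounds the \emph{score}, and only on a $\delta$-ball around $\parampseudo$. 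So the ULLN you plan to derive from Lemma~2.4 of \citet{newey1994large} cannot be obtained from the stated hypotheses. This is harmless for (d), because (d) is simply assumed. The paper's proof is just this: (iv), (ii)--(iii) and (v) are fed into \citet[Theorem 5.7]{vaart1998asymptotic}. Assumption~\ref{ass:regularity-misspec} is not used at all; it matters only for Theorem~\ref{thm:sandwich}.

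The real loose end is your condition (c). Write $M_n(\param)=n^{-1}\sum_{i=1}^n\loglikobshat(y_i,\mat{X}_i;\param)$ and $M(\param)=\E\{\loglikobshat(y,\mat{X};\param)\}$. You obtain continuity of $M$ from the same Lemma~2.4, i.e., from the missing envelope. You do not need it. Newey--McFadden use compactness plus continuity only to manufacture well-separatedness, and here well-separatedness is assumed outright. The argument then closes directly:
\begin{itemize}
\item set $\eta=M(\parampseudo)-\sup_{\|\param-\parampseudo\|\ge\epsilon}M(\param)>0$;
\item on the event that $\sup_{\param}|M_n(\param)-M(\param)|<\eta/3$ and the slack $\sup_{\param}M_n(\param)-M_n(\paramest)$ is below $\eta/3$, you get $M(\paramest)\ge M(\parampseudo)-2\sup_{\param}|M_n-M|-\text{slack}>M(\parampseudo)-\eta$;
\item this forces $\|\paramest-\parampseudo\|<\epsilon$.
\end{itemize}
If you prefer to keep Theorem~2.1 of \citet{newey1994large}, you can instead extract continuity of $M$ from (iv) together with Assumption~\ref{ass:regularity-misspec}(i). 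By (i), every sample path $M_n(\cdot)$ is continuous. The event $\{\sup_{\param}|M_n-M|<\epsilon/3\}$ is nonempty for large $n$. So a jump of size $\epsilon$ in $M$ along some sequence $\param_k\to\param^*$ would produce a jump larger than $\epsilon/3$ in some continuous $M_n$, which is a contradiction.
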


\vspace{6pt}
\begin{theorem}[Asymptotic Normality Under Misspecification]
\label{thm:sandwich}
Under Assumptions~\ref{ass:pseudo-identification} and \ref{ass:regularity-misspec},
\[
\sqrt{n}(\paramest - \parampseudo) \convd \mathcal{N}\left\{\vec{0}, \Amat(\parampseudo)^{-1} \Bmat(\parampseudo) \Amat(\parampseudo)^{-\top}\right\}.
\]
\end{theorem}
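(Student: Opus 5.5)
This is White's quasi-ML argument for a fixed working model. We expand the first-order condition of the emulator likelihood around the pseudo-true parameter and control the remainder using the Hessian. The appendix Assumptions~\ref{ass:pseudo-identification} and \ref{ass:regularity-misspec} are taken to supply the misspecified analogues of Assumption~\ref{ass:regularity}:
\begin{itemize}
\item $\parampseudo$ is interior and is the unique maximizer;
\item $\loglikobshat$ is twice continuously differentiable in $\param$, which holds automatically here because the network uses smooth activations and the preprocessing map is smooth;
\item $\scorehat$ has a finite second moment at $\parampseudo$;
\item $\gradparam^2 \loglikobshat$ has an integrable envelope on a neighborhood of $\parampseudo$;
\item $\Amat(\parampseudo)$ is nonsingular.
\end{itemize}
Theorem~\ref{thm:consistency-misspec} then gives $\paramest \convp \parampseudo$. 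Since $\parampseudo$ is interior, with probability tending to one $\paramest$ is interior as well, so the first-order condition $\frac{1}{n}\sum_{i} \scorehat(y_i,\mat{X}_i;\paramest)=\vec{0}$ holds.

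The first step is a mean-value expansion of this first-order condition, applied row by row:
\[
\vec{0} = \frac{1}{n}\sum_{i=1}^n \scorehat(y_i,\mat{X}_i;\parampseudo) + \Bigl\{\frac{1}{n}\sum_{i=1}^n \gradparam^2 \loglikobshat(y_i,\mat{X}_i;\bar{\param}_n)\Bigr\}(\paramest-\parampseudo),
\]
where each row of the Hessian is evaluated at its own intermediate point $\bar{\param}_n$ between $\paramest$ and $\parampseudo$.

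The second step handles the score term. Because $\parampseudo$ maximizes $\E\{\loglikobshat\}$ at an interior point, and the integrable-envelope condition allows differentiating under the expectation, we have $\E\{\scorehat(y,\mat{X};\parampseudo)\}=\vec{0}$. The observations are iid and the neural network is fixed in $n$, so the multivariate CLT gives
\[
n^{-1/2}\sum_{i=1}^n \scorehat(y_i,\mat{X}_i;\parampseudo)\convd \mathcal{N}\{\vec{0},\Bmat(\parampseudo)\}.
\]
This step uses no information matrix equality, which is why $\Bmat$ appears separately from $\Amat$.

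The third step, which I expect to carry most of the technical weight, is showing that the averaged Hessian at the random intermediate points converges to $-\Amat(\parampseudo)$. I would take a compact neighborhood $\mathcal{N}$ of $\parampseudo$. On it, continuity of $\gradparam^2\loglikobshat$ in $\param$ together with the dominating envelope yields a ULLN: $\sup_{\param\in\mathcal{N}}\|\frac{1}{n}\sum_i \gradparam^2\loglikobshat(\cdot;\param)+\Amat(\param)\|\convp 0$. Dominated convergence gives continuity of $\Amat$. Because $\bar{\param}_n\convp\parampseudo$, combining the uniform convergence with continuity of $\Amat$ gives the limit $-\Amat(\parampseudo)$. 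Nonsingularity of $\Amat(\parampseudo)$ then means the sample Hessian is invertible with probability tending to one.

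Finally, solving the expansion gives $\sqrt{n}(\paramest-\parampseudo)=\Amat(\parampseudo)^{-1}n^{-1/2}\sum_i\scorehat(y_i,\mat{X}_i;\parampseudo)+o_p(1)$. Slutsky's theorem then yields the stated limit $\mathcal{N}\{\vec{0},\Amat^{-1}\Bmat\Amat^{-\top}\}$; the transpose is harmless because $\Amat$ is symmetric. The only genuine risk is if the appendix regularity assumptions bound the Hessian only at $\parampseudo$ rather than uniformly on a neighborhood. In that case I would fall back on the stochastic-equicontinuity route of \citet[Theorem 3.1]{newey1994large}, which requires exactly this kind of local uniform Hessian convergence.
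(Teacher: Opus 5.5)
\textbf{Gap: your third step uses hypotheses the theorem does not assume.}

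Your argument is the classical White-style mean-value expansion of the first-order condition. It is correct under the conditions you list, but those are not the conditions of the theorem, and your third step is exactly where they diverge. Assumption~\ref{ass:regularity-misspec} provides only three things:
\begin{itemize}
\item first-order differentiability of $\param\mapsto\loglikobshat(y,\mat{X};\param)$;
\item a square-integrable local envelope $\dot{m}$ for the \emph{score} near $\parampseudo$;
\item a second-order expansion of the \emph{population} criterion $\param\mapsto\E\{\loglikobshat(y,\mat{X};\param)\}$ at the single point $\parampseudo$.
\end{itemize}
Nothing assumes $\loglikobshat$ is twice differentiable on a neighborhood, and nothing dominates $\gradparam^2\loglikobshat$ there. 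Nor is $\Amat(\param)$ assumed to exist, or be continuous, away from $\parampseudo$.

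So your third step does not follow: the ULLN for the sample Hessian on a compact neighborhood, continuity of $\Amat$ by dominated convergence, and the limit at the intermediate points $\bar{\param}_n$ all need those missing hypotheses. Your remark that smooth activations make this automatic gives at most differentiability, and even that requires smooth maps $\param\mapsto(\util,\cov)$, which are not assumed in this section. An integrable Hessian envelope is a moment condition on the distribution of $\mat{X}$ and is not implied by smoothness.

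Your fallback does not repair this. \citet[Theorem 3.1]{newey1994large} is itself the Hessian-based result: it assumes a twice continuously differentiable sample objective with locally uniform Hessian convergence, which is the same missing condition. Their nonsmooth, stochastic-equicontinuity results are in Section~7 of that chapter.

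\textbf{The missing idea: never differentiate the sample criterion twice.}

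The paper does this by verifying the conditions of Theorem 5.23 of \citet{vaart1998asymptotic}. Condition (ii) and the mean value theorem give the Lipschitz bound $|\loglikobshat(y,\mat{X};\param_1)-\loglikobshat(y,\mat{X};\param_2)|\le\dot{m}(y,\mat{X})\|\param_1-\param_2\|$ near $\parampseudo$. For such a class the empirical process is stochastically differentiable at $\parampseudo$. Write $M_n(\param)=\frac{1}{n}\sum_{i}\loglikobshat(y_i,\mat{X}_i;\param)$. Combining stochastic differentiability with (iii) gives, for any $\vec{h}=\vec{h}_n$ bounded in probability,
\[
n\bigl\{M_n(\parampseudo+\vec{h}/\sqrt{n})-M_n(\parampseudo)\bigr\}
=\vec{h}^\top\frac{1}{\sqrt{n}}\sum_{i=1}^n\scorehat(y_i,\mat{X}_i;\parampseudo)-\frac{1}{2}\vec{h}^\top\Amat(\parampseudo)\vec{h}+o_p(1).
\]
Here the quadratic term comes from the population expansion, not from any sample Hessian.

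To finish:
\begin{itemize}
\item Establish the $\sqrt{n}$ rate.
\item Evaluate the display at $\hat{\vec{h}}_n=\sqrt{n}(\paramest-\parampseudo)$ and at $\tilde{\vec{h}}_n=\Amat(\parampseudo)^{-1}n^{-1/2}\sum_i\scorehat(y_i,\mat{X}_i;\parampseudo)$.
\item Use the $o_p(n^{-1})$ near-maximization in Assumption~\ref{ass:pseudo-identification}(v) to force $\hat{\vec{h}}_n-\tilde{\vec{h}}_n=o_p(1)$.
\item Apply your second step (mean-zero score, then the CLT) and Slutsky.
\end{itemize}
Your second step is fine as it stands, provided the score envelope, not a Hessian envelope, is what justifies differentiating under the expectation.

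This route has a side benefit: it covers the approximate maximizers that Assumption~\ref{ass:pseudo-identification}(v) allows. For those, your exact first-order condition $\frac{1}{n}\sum_i\scorehat(y_i,\mat{X}_i;\paramest)=\vec{0}$ is not available. Alternatively, you could keep your argument and add a local integrable Hessian envelope as an explicit hypothesis, but then you are proving a weaker theorem than the one stated.
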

\vspace{-3pt}

The proofs follow standard M-estimator arguments; see Appendix~\ref{sec:proof-sandwich}. The sandwich covariance $\sandwichmat = \Amat(\parampseudo)^{-1} \Bmat(\parampseudo) \Amat(\parampseudo)^{-\top}$ can be consistently estimated by $\hat{\sandwichmat}_n = \hat{\Amat}_n^{-1} \hat{\Bmat}_n \hat{\Amat}_n^{-\top}$, where
\vspace{-3pt}
\[
\hat{\Amat}_n = -\frac{1}{n} \sum_{i=1}^n \gradparam^2 \loglikobshat(y_i, \mat{X}_i; \paramest), \qquad \hat{\Bmat}_n = \frac{1}{n} \sum_{i=1}^n \scorehat(y_i, \mat{X}_i; \paramest) \scorehat(y_i, \mat{X}_i; \paramest)^{\top}.
\]
%\vspace{-2pt}
Both quantities are computable via automatic differentiation.
Note that $\hat{\Bmat}_n$ corresponds to $\hat{\fisher}_n(\paramest)$ from the previous section under an asymptotic regime with a fixed emulator.

Theorem~\ref{thm:sandwich} provides valid inference for $\parampseudo$, not necessarily $\paramtrue$. If we employ an asymptotic regime where the emulator varies with $n$, then the pseudo-true parameter also varies with $n$, so we denote it as $\parampseudon$. When $\alpha \in [0, 1)$ in Assumption~\ref{ass:approx}, we have $\parampseudon \convp \paramtrue$ as $n \to \infty$, but this convergence may be too slow relative to $\sqrt{n}$ sampling variability. The sandwich standard errors remain valid for $\parampseudon$, but confidence intervals should not be interpreted as targeting $\paramtrue$. Even if $\Amat(\parampseudo)$ and $\Bmat(\parampseudo)$ both approximate $\fisher(\paramtrue)$, the centering at $\parampseudon \neq \paramtrue$ introduces asymptotic bias. Valid inference for $\paramtrue$ requires $\alpha \geq 1$ (Theorem~\ref{thm:normality}).

Similarly, Proposition \ref{prop:fisher-estimation} provides conditions under which $\hat{\Bmat}_n \convp \fisher(\paramtrue)$. However, our first-order Sobolev training procedure provides no guarantee that the emulator Hessian will approximate the Hessian of the true model; thus, $\hat{\Amat}_n$ need not converge to any quantity from the true model. However, Theorem~\ref{thm:sandwich} requires only that $\Amat(\parampseudo)$ be nonsingular, not that it match the true model. When the emulator additionally provides accurate second derivatives, the sandwich covariance matrix reduces to $\fisher(\paramtrue)^{-1}$.

\newpage
\section{Simulation Study}
\label{sec:simulation}

We evaluate the performance of emulator-based MNP estimators across a range of sample sizes, numbers of alternatives, and model specifications.
We compare their statistical performance and computational requirements to estimators based on the GHK simulator.

\vspace{-6pt}
\subsection{Simulation Design}
\label{sec:simulation-design}

\textbf{Model specification.} We consider MNP models with $\numalt \in \{3, 5, 10\}$ alternatives and $p = 2$ covariates. Alternative $\numalt$ serves as the reference with utility normalized to zero, representing an outside option. For each non-reference alternative $j \in \{1, \ldots, \numalt - 1\}$, the deterministic utility is $v_j = \vec{x}_j^\top \vec{\beta}$, where $\vec{x}_j \in \R^p$ is a vector of alternative-specific attributes and $\vec{\beta} = (0, 1)^\top$ is the true coefficient vector. We consider two specifications for $\cov$, denoting them as Dense and Factor.

The Dense specification assumes a dense structure for $\cov_{1:(\numalt-1),1:(\numalt-1)}$, the upper $(\numalt-1) \times (\numalt-1)$ submatrix of $\cov$. For identification, we set $\Sigma_{11} = 1$ and $\cov_{\bigcdot \numalt} = \cov_{\numalt \bigcdot}^{\top} = \vec{0}_{\numalt}$; the latter equation sets the variance and covariances of the zero-utility option to zero.

The Factor specification assumes a one-factor structure for $\cov_{1:(\numalt-1),1:(\numalt-1)}$ of the form $\cov_{1:(\numalt-1),1:(\numalt-1)} = \mat{\Psi} + \vec{\gamma} \vec{\gamma}^{\top}$, where $\vec{\gamma} \in \R^{\numalt-1}$ and $\mat{\Psi} \in \R^{(\numalt - 1) \times (\numalt - 1)}$ is a diagonal matrix of uniquenesses. For identification, we impose an anchored normalization that fixes the first uniqueness and first factor loading:
\[
\Psi_{11} = 0, \qquad \gamma_1 = 1.
\]
As with the Dense specification, we set $\cov_{\bigcdot \numalt} = \cov_{\numalt \bigcdot}^{\top} = \vec{0}_{\numalt}$. These constraints imply $\cov_{11} = 1$, fixing the scale of the model, and $\cov_{1j} = \gamma_j$ for $j \geq 2$, so the factor loadings directly encode covariances with the first alternative. Given these constraints, we parameterize the model in terms of $\numalt - 2$ free uniquenesses $(\Psi_{22}, \ldots, \Psi_{\numalt-1,\numalt-1})$ and $\numalt - 2$ free loadings $(\gamma_2, \ldots, \gamma_{\numalt-1})$. This anchored normalization is designed to eliminate flat directions in the likelihood surface.

Table \ref{tab:parameter-counts} shows the number of nonredundant parameters by type for the Dense and Factor specifications. With the anchored normalization, the Factor specification is exactly identified for $\numalt = 3$ and overidentified for $\numalt \geq 4$. For $\numalt=10$, the Factor specification results in fewer than half as many parameters as the Dense specification (18 vs.\ 46). 

\begin{table}[htb]
    \centering
    \begin{tabular}{crccrr}
            \toprule
            \multirow{2}{*}{\shortstack{Covariance ($\cov$)\\Specification}} & $\numalt$ & Coefficients & \multirow{2}{*}{\shortstack{Variances\\(Uniquenesses)}} & \multirow{2}{*}{\shortstack{Covariances\\(Loadings)}} & Total \\
            & & & & & \\
            \midrule
            \multirow{3}{*}{Dense}  & 3  & 2 & 1 & 1\hspace{22pt} & 4\hspace{5pt} \\
                                    & 5  & 2 & 3 & 6\hspace{22pt} & 11\hspace{5pt} \\
                                    & 10 & 2 & 8 & 36\hspace{22pt} & 46\hspace{5pt}\vspace{6pt}\\
            \multirow{3}{*}{Factor} & 3  & 2 & 1 & 1\hspace{22pt} & 4\hspace{5pt} \\
                                    & 5  & 2 & 3 & 3\hspace{22pt} & 8\hspace{5pt} \\
                                    & 10 & 2 & 8 & 8\hspace{22pt} & 18\hspace{5pt} \\
            \bottomrule
    \end{tabular}
    \caption{Parameter counts by $\cov$ specification (Dense or Factor) and $\numalt$.}
    \label{tab:parameter-counts}
\vspace{12pt}
\end{table}

\textbf{Data generation.} For each replication, we generate covariates $\vec{x}_{ij} \stackrel{\text{iid}}{\sim} \mathcal{N}(\vec{0}, \mat{I}_p)$ for observations $i = 1, \ldots, n$ and alternatives $j = 1, \ldots, \numalt - 1$, with $\vec{x}_{i\numalt} = \vec{0}$ for the reference alternative.

For the Dense specification, we draw the covariance matrix according to a $\text{Wishart}(\mat{I}_{\numalt-1}, \numalt + 10)$ distribution and then normalize it so that $\Sigma_{11} = 1$, resulting in a PDF given by
\[
\frac{\Gamma(s) \cdot |\cov_{1:(\numalt-1),1:(\numalt-1)}|^{5}}{2^{s} \Gamma_{\numalt-1}\left(\frac{\numalt+10}{2}\right) \cdot \left\{\frac{1}{2}\mathrm{tr}(\cov_{1:(\numalt-1),1:(\numalt-1)})\right\}^{s}},
\]
where $s = (\numalt + 10)(\numalt-1)/2$ and $\Gamma_{\numalt-1}(\cdot)$ denotes the $(\numalt-1)$-dimensional multivariate gamma function (see Appendix \ref{sec:scaled-wishart}).

For the Factor specification, we sample the free uniquenesses on the log scale, drawing $\log(\Psi_{jj}) \overset{iid}{\sim} \mathcal{N}(0, 0.1^2)$ for $j = 2, \ldots, \numalt-1$, and exponentiate to obtain $\Psi_{jj} > 0$. For the factor loadings, we draw $\gamma_j \overset{iid}{\sim} \mathcal{N}(0, 0.2^2)$ for $j = 2, \ldots, \numalt-1$. The fixed values $\Psi_{11} = 0$ and $\gamma_1 = 1$ complete the parameterization.

Given the covariates and true parameters, we compute deterministic utilities and simulate choices $y_i \in \{1, \ldots, \numalt\}$ according to one of two data-generating processes (DGPs): Probit or Emulator. The Probit RUM generates choices from the exact MNP model, drawing errors $\vec{\epsilon}_i \sim \mathcal{N}(\vec{0}, \cov)$ and selecting the utility-maximizing alternative. The Emulator RUM generates choices by sampling directly from the categorical distribution defined by the emulator's predicted choice probabilities. When coupled with the Emulator RUM, the emulator-based estimation method is correctly specified, enabling a comparison of emulator-based estimators under both correct specification (Emulator RUM) and approximate inference (Probit RUM).

\textbf{Estimation.} To ensure positive definiteness of $\cov$, we estimate the Dense specification on the scale of the Cholesky factor, with the diagonal elements optimized on the log scale to enforce positivity. Similarly, we estimate the Factor specification with the free uniquenesses on the log scale, optimizing $\log(\Psi_{jj})$ for $j = 2, \ldots, \numalt-1$.

We form estimators using two methods, Emulator and GHK, corresponding to whether we approximate the log-likelihood function using a pretrained emulator or the GHK simulator. We fit both methods to data generated by the Probit RUM; we also fit the Emulator method to data generated by the Emulator RUM to analyze its performance under correct specification. We apply the GHK method with $R \in \{10, 50, 250\}$ draws, representing low, moderate, and high computational effort.

Both methods, Emulator and GHK, perform regularized maximum likelihood estimation. The regularization consists of two components: a barrier penalty and a pseudo-prior. The barrier penalty is applied to $\log(\Psi_{jj})$ for the Factor specification and to the log-diagonal Cholesky parameters for the Dense specification. For a latent parameter $\eta$, the barrier penalty is $\lambda_{\text{reg}} \cdot \log(\eta + 8)/n$. The pseudo-prior terms take the form $\lambda_{\text{reg}} \cdot \log\left\{p(\param)\right\} /n$, where $p(\param)$ is the density function of $\param$ under the data-generating process. We add these penalties together to form a combined penalty of the form $\lambda_{\text{reg}} \cdot r(\param)/n$, where $r(\param)$ sums the barrier and log-prior penalties and $\lambda_{\text{reg}}$ is a regularization hyperparameter.

We then estimate $\param$ as the maximizer of the sum of the (approximate) log likelihood and $\lambda_{\text{reg}} \cdot r(\param)/n$. We take $\lambda_{\text{reg}} = 0.1$ so that the penalty is small and asymptotically negligible. In practice, we find that these penalties lead to more reliable convergence of the model-fitting algorithm and help address weak identification of the covariance parameters, especially when $n$ is small relative to $\numalt$. We account for these penalties in the standard error computation by estimating $\Cov(\paramest)$ as $\big\{n \cdot \hat{\fisher}_n(\paramest) - \lambda_{\text{reg}} \cdot \nabla_{\param}^2\, r(\paramest)\big\}^{-1}$. Again, the impact of the penalization is asymptotically negligible because the term $n \cdot \hat{\fisher}_n(\paramest)$ dominates asymptotically.

For all methods, we maximize the (approximate) regularized log likelihood using L-BFGS optimization \citep{liu1989}. In practice, we found it necessary to limit the minimum step size (we used $10^{-6}$) and allow small decreases in the objective function to enable the optimizer to escape local minima.

\textbf{Experimental design.} We consider sample sizes $n \in \{1{,}000, 10{,}000, 100{,}000\}$ and conduct 100 replications for each configuration. We compute performance metrics aggregated across all parameter types. We compute Monte Carlo standard errors for the coverage and average timings using scaled empirical standard deviations. For the remaining metrics, we apply bootstrap resampling (across repetitions) with 1{,}000 bootstrap samples to estimate the Monte Carlo standard errors.

\textbf{Computational environment.} All simulations were implemented in PyTorch \citep{paszke2019pytorch} and executed on a personal laptop with 48 GB RAM and an Apple M4 Pro processor. We fit the emulators using the Apple Metal Performance Shaders (MPS). To obtain comparable estimation timings, however, we performed the rest of the simulation on CPU.

\vspace{-6pt}
\subsection{Simulation Results}
\label{sec:simulation-results}

We fit the emulators for $400{,}000$ training episodes using $400{,}000$ pregenerated training examples, each consisting of $10^6$ choices. Details on the optimizer, learning rate schedule, and network architecture are provided in Appendix~\ref{sec:emulator-training}. This model-fitting procedure completed in 2.8, 4.8, and 14.0 hours for $\numalt=3, 5, $ and $10$, respectively, representing a modest level of computational effort.

\begin{figure}[ptbh]
\centering
\begin{subfigure}[b]{0.328\textwidth}
\centering
\includegraphics[width=\textwidth]{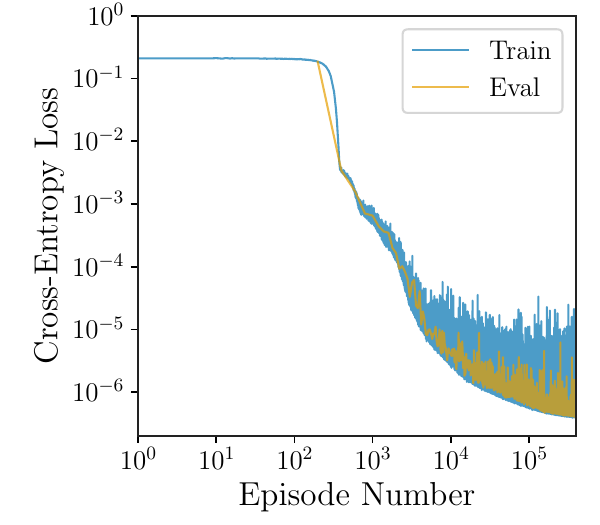}
\caption{$\numalt = 3$}
\label{fig:loss:3}
\end{subfigure}
\hfill
\begin{subfigure}[b]{0.328\textwidth}
\centering
\includegraphics[width=\textwidth]{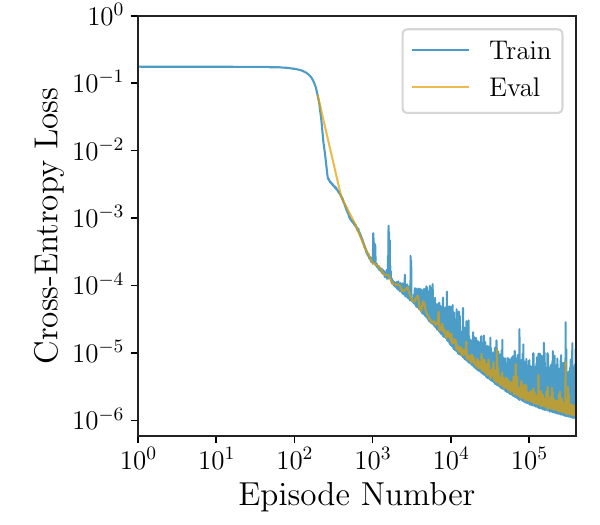}
\caption{$\numalt = 5$}
\label{fig:loss:5}
\end{subfigure}
\hfill
\begin{subfigure}[b]{0.328\textwidth}
\centering
\includegraphics[width=\textwidth]{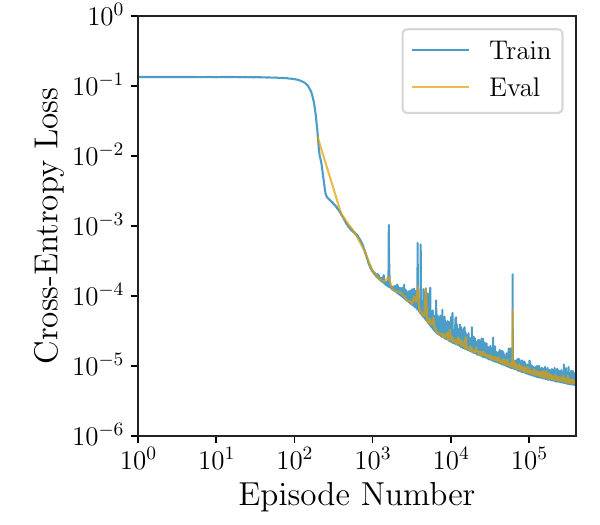}
\caption{$\numalt = 10$}
\label{fig:loss:10}
\end{subfigure}
\caption{Training and evaluation loss over $400{,}000$ training episodes for emulators with $\numalt \in \{3, 5, 10\}$ alternatives. The training loss (blue) decreases consistently throughout training. The evaluation loss (orange) also decreases, but it shows occasional periods of overfitting for $\numalt=5$. We address this overfitting by saving the emulator model weights that achieve the best evaluation loss.}
\label{fig:loss}
\end{figure}

Figure \ref{fig:loss} displays the training and evaluation loss for the emulators over time. At each value of $\numalt \in \{3, 5, 10\}$, the training loss decreases consistently over time even at the end of training, indicating that further training would likely lead to additional improvements. The evaluation loss closely tracks the training loss, implying that the emulator generalizes well with minimal overfitting to the simulated training probabilities. We compute the evaluation loss every 200 episodes; these periodic calculations produce significantly less visible variation in the evaluation loss compared to the training loss.

\begin{table}[ptbh]
\centering
\footnotesize
\begin{tabular}{@{}rllrrrrr@{}}
\toprule
\multicolumn{1}{c}{$n$} & \multicolumn{1}{l}{DGP} & \multicolumn{1}{l}{Method} & \multicolumn{1}{c}{RMSE} & \multicolumn{1}{c}{RMS Bias} & \multicolumn{1}{c}{SE Ratio} & \multicolumn{1}{c}{Coverage} & \multicolumn{1}{r}{Time (s)} \\
\midrule
\multirow{5}{*}{1,000} & Emulator & Emulator & 0.157 (0.016) & 0.012 (0.009) & 1.01 (0.05) & 0.94 (0.01) & 0.11 (0.01) \\
 & Probit & Emulator & 0.208 (0.033) & 0.050 (0.019) & 0.95 (0.08) & 0.95 (0.01) & 0.11 (0.00) \\
 & Probit & GHK(10) & 0.202 (0.031) & 0.047 (0.019) & 0.96 (0.08) & 0.94 (0.02) & 0.08 (0.00) \\
 & Probit & GHK(50) & 0.203 (0.032) & 0.048 (0.018) & 0.96 (0.09) & 0.95 (0.01) & 0.14 (0.00) \\
 & Probit & GHK(250) & 0.207 (0.033) & 0.049 (0.019) & 0.95 (0.08) & 0.95 (0.01) & 0.40 (0.01) \\
\midrule
\multirow{5}{*}{10,000} & Emulator & Emulator & 0.052 (0.006) & 0.007 (0.004) & 0.93 (0.07) & 0.93 (0.01) & 0.36 (0.00) \\
 & Probit & Emulator & 0.054 (0.009) & 0.008 (0.005) & 0.98 (0.11) & 0.94 (0.01) & 0.36 (0.00) \\
 & Probit & GHK(10) & 0.052 (0.008) & 0.010 (0.003) & 1.00 (0.09) & 0.94 (0.01) & 0.29 (0.00) \\
 & Probit & GHK(50) & 0.052 (0.009) & 0.007 (0.004) & 1.00 (0.10) & 0.95 (0.01) & 0.42 (0.00) \\
 & Probit & GHK(250) & 0.053 (0.009) & 0.008 (0.005) & 0.99 (0.11) & 0.94 (0.01) & 0.84 (0.01) \\
\midrule
\multirow{5}{*}{100,000} & Emulator & Emulator & 0.016 (0.002) & 0.001 (0.001) & 0.97 (0.08) & 0.94 (0.01) & 1.69 (0.01) \\
 & Probit & Emulator & 0.016 (0.002) & 0.001 (0.001) & 1.00 (0.07) & 0.97 (0.01) & 1.73 (0.01) \\
 & Probit & GHK(10) & 0.020 (0.002) & 0.007 (0.001) & 0.83 (0.06) & 0.83 (0.02) & 0.87 (0.01) \\
 & Probit & GHK(50) & 0.016 (0.002) & 0.002 (0.001) & 1.03 (0.08) & 0.97 (0.01) & 1.77 (0.02) \\
 & Probit & GHK(250) & 0.016 (0.002) & 0.001 (0.001) & 1.06 (0.07) & 0.98 (0.01) & 6.76 (0.09) \\
\bottomrule
\end{tabular}
\caption{Comparison of Emulator and GHK estimation methods for multinomial probit models with $\numalt=3$. The emulator performs competitively with GHK(50) and GHK(250) and requires computation time between that of GHK(10) and GHK(50). Performance metrics are averaged across all parameters (coefficients, variances, and covariances) with Monte Carlo standard errors shown in parentheses.}
\label{tab:emulator_vs_ghk:3:all}
\end{table}

Table \ref{tab:emulator_vs_ghk:3:all} compares the performance of the emulator and GHK estimation methods for $\numalt=3$. The performance metrics are root mean squared error (RMSE), root mean squared bias (RMS Bias), the ratio of estimated to empirical standard errors (SE Ratio), 95\% confidence interval coverage rates (Coverage), and elapsed model-fitting time in seconds (Time (s)). The emulator performs well across all sample sizes, nearly matching GHK(50) and GHK(250) in statistical performance but at a lower computational cost in most comparisons. GHK(10) significantly undercovers at $n=100{,}000$ and exhibits substantially higher RMS Bias than the other methods. This poor performance is likely an artifact of the GHK simulator providing biased estimates on the log scale due to Jensen's inequality; for sufficiently large $n$, this bias is eventually nonnegligible relative to $1/\sqrt{n}$ sampling error.

\begin{table}[ptbh]
\centering
\footnotesize
\begin{tabular}{@{}rllrrrrr@{}}
\toprule
\multicolumn{1}{c}{$n$} & \multicolumn{1}{l}{DGP} & \multicolumn{1}{l}{Method} & \multicolumn{1}{c}{RMSE} & \multicolumn{1}{c}{RMS Bias} & \multicolumn{1}{c}{SE Ratio} & \multicolumn{1}{c}{Coverage} & \multicolumn{1}{r}{Time (s)} \\
\midrule
\multirow{5}{*}{1,000} & Emulator & Emulator & 0.243 (0.017) & 0.032 (0.012) & 1.11 (0.06) & 0.94 (0.01) & 0.41 (0.01) \\
 & Probit & Emulator & 0.282 (0.020) & 0.045 (0.017) & 1.04 (0.04) & 0.93 (0.01) & 0.42 (0.01) \\
 & Probit & GHK(10) & 0.278 (0.015) & 0.074 (0.017) & 0.95 (0.03) & 0.89 (0.01) & 0.28 (0.00) \\
 & Probit & GHK(50) & 0.280 (0.017) & 0.051 (0.017) & 1.02 (0.04) & 0.92 (0.01) & 0.58 (0.01) \\
 & Probit & GHK(250) & 0.284 (0.020) & 0.046 (0.017) & 1.03 (0.04) & 0.93 (0.01) & 1.94 (0.02) \\
\midrule
\multirow{5}{*}{10,000} & Emulator & Emulator & 0.086 (0.006) & 0.010 (0.004) & 0.99 (0.05) & 0.94 (0.01) & 2.24 (0.02) \\
 & Probit & Emulator & 0.084 (0.005) & 0.009 (0.004) & 1.05 (0.04) & 0.96 (0.01) & 2.27 (0.02) \\
 & Probit & GHK(10) & 0.095 (0.006) & 0.032 (0.006) & 0.88 (0.03) & 0.89 (0.01) & 0.85 (0.01) \\
 & Probit & GHK(50) & 0.086 (0.006) & 0.013 (0.005) & 1.02 (0.04) & 0.95 (0.01) & 2.24 (0.03) \\
 & Probit & GHK(250) & 0.084 (0.005) & 0.010 (0.004) & 1.06 (0.04) & 0.96 (0.01) & 4.86 (0.05) \\
\midrule
\multirow{5}{*}{100,000} & Emulator & Emulator & 0.028 (0.004) & 0.003 (0.002) & 0.97 (0.09) & 0.95 (0.01) & 17.42 (0.14) \\
 & Probit & Emulator & 0.029 (0.002) & 0.001 (0.001) & 0.98 (0.05) & 0.95 (0.01) & 17.68 (0.16) \\
 & Probit & GHK(10) & 0.056 (0.004) & 0.024 (0.003) & 0.49 (0.03) & 0.60 (0.02) & 3.73 (0.03) \\
 & Probit & GHK(50) & 0.030 (0.002) & 0.004 (0.001) & 0.93 (0.05) & 0.92 (0.01) & 8.29 (0.07) \\
 & Probit & GHK(250) & 0.028 (0.002) & 0.002 (0.001) & 1.00 (0.06) & 0.95 (0.01) & 34.70 (0.43) \\
\bottomrule
\end{tabular}
\caption{Comparison of Emulator and GHK estimation methods for multinomial probit models with $\numalt=5$. As in Table \ref{tab:emulator_vs_ghk:3:all}, the emulator performs competitively with GHK(50) and GHK(250) at a comparable or lower computational cost.}
\label{tab:emulator_vs_ghk:5:all}
\end{table}

Table \ref{tab:emulator_vs_ghk:5:all} displays similar results for $\numalt=5$. The emulator is competitive with GHK(50) and GHK(250) at all sample sizes. Compared to GHK(50), the compute time for the Emulator method is slightly lower for $n=1{,}000$, approximately equal for $n = 10{,}000$, and slightly higher for $n = 100{,}000$.

\begin{table}[ptbh]
\centering
\footnotesize
\begin{tabular}{@{}rllrrrrr@{}}
\toprule
\multicolumn{1}{c}{$n$} & \multicolumn{1}{l}{DGP} & \multicolumn{1}{l}{Method} & \multicolumn{1}{c}{RMSE} & \multicolumn{1}{c}{RMS Bias} & \multicolumn{1}{c}{SE Ratio} & \multicolumn{1}{c}{Coverage} & \multicolumn{1}{r}{Time (s)} \\
\midrule
\multirow{5}{*}{1,000} & Emulator & Emulator & 0.319 (0.014) & 0.192 (0.011) & 1.45 (0.07) & 0.94 (0.01) & 2.23 (0.02) \\
 & Probit & Emulator & 0.342 (0.016) & 0.208 (0.012) & 1.87 (0.38) & 0.94 (0.01) & 2.18 (0.02) \\
 & Probit & GHK(10) & 0.388 (0.017) & 0.216 (0.013) & 0.29 (0.06) & 0.21 (0.02) & 5.90 (0.38) \\
 & Probit & GHK(50) & 0.363 (0.016) & 0.215 (0.013) & 0.67 (0.06) & 0.53 (0.03) & 13.24 (0.62) \\
 & Probit & GHK(250) & 0.351 (0.017) & 0.214 (0.012) & 1.44 (0.12) & 0.91 (0.01) & 14.63 (0.39) \\
\midrule
\multirow{5}{*}{10,000} & Emulator & Emulator & 0.171 (0.009) & 0.072 (0.007) & 1.52 (0.08) & 0.96 (0.01) & 19.18 (0.27) \\
 & Probit & Emulator & 0.198 (0.011) & 0.081 (0.010) & 1.68 (0.13) & 0.95 (0.01) & 16.79 (0.21) \\
 & Probit & GHK(10) & 0.235 (0.014) & 0.123 (0.010) & 1.32 (0.35) & 0.79 (0.02) & 8.70 (0.29) \\
 & Probit & GHK(50) & 0.203 (0.010) & 0.090 (0.009) & 1.27 (0.07) & 0.91 (0.01) & 34.10 (0.85) \\
 & Probit & GHK(250) & 0.199 (0.010) & 0.079 (0.010) & 1.40 (0.07) & 0.95 (0.01) & 47.98 (0.74) \\
\midrule
\multirow{5}{*}{100,000} & Emulator & Emulator & 0.070 (0.004) & 0.013 (0.003) & 1.08 (0.04) & 0.95 (0.01) & 143.74 (1.18) \\
 & Probit & Emulator & 0.082 (0.005) & 0.019 (0.004) & 0.93 (0.03) & 0.93 (0.01) & 143.66 (1.28) \\
 & Probit & GHK(10) & 0.138 (0.009) & 0.073 (0.006) & 0.48 (0.02) & 0.60 (0.01) & 31.11 (0.34) \\
 & Probit & GHK(50) & 0.085 (0.005) & 0.027 (0.004) & 0.89 (0.03) & 0.90 (0.01) & 80.62 (0.80) \\
 & Probit & GHK(250) & 0.081 (0.005) & 0.020 (0.004) & 0.97 (0.03) & 0.93 (0.01) & 348.26 (4.36) \\
\bottomrule
\end{tabular}
\caption{Comparison of Emulator and GHK estimation methods for multinomial probit models with $\numalt=10$. The emulator performs comparably to GHK(250) in terms of statistical performance but requires far less computation time.}
\label{tab:emulator_vs_ghk:10:all}
\end{table}

Table \ref{tab:emulator_vs_ghk:10:all} displays performance results for $\numalt=10$. The emulator performs comparably to GHK(250) with significantly less computation. At $n=1{,}000$, the emulator fits the model in less than 3 seconds on average, outpacing even GHK(10). The computation time of the emulator falls between that of GHK(10) and GHK(50) at $n=10{,}000$ and between GHK(50) and GHK(250) at $n=100{,}000$. Compared to Tables \ref{tab:emulator_vs_ghk:3:all} and \ref{tab:emulator_vs_ghk:5:all}, the standard error estimates are less reliable for the smaller sample sizes; the problem is amplified if we omit the penalty on $\cov$. The standard errors improve substantially with the sample size, reaching SE Ratios of 0.93--1.08 for the Emulator and GHK(250) methods at $n=100{,}000$.

\vspace{12pt}
\begin{table}[ptbh]
\centering
\footnotesize
\begin{tabular}{@{}rllrrrrr@{}}
\toprule
\multicolumn{1}{c}{$n$} & \multicolumn{1}{l}{DGP} & \multicolumn{1}{l}{Method} & \multicolumn{1}{c}{RMSE} & \multicolumn{1}{c}{RMS Bias} & \multicolumn{1}{c}{SE Ratio} & \multicolumn{1}{c}{Coverage} & \multicolumn{1}{r}{Time (s)} \\
\midrule
\multirow{5}{*}{1,000} & Emulator & Emulator & 0.089 (0.004) & 0.022 (0.006) & 1.17 (0.05) & 0.94 (0.01) & 0.11 (0.00) \\
 & Probit & Emulator & 0.095 (0.004) & 0.007 (0.005) & 1.10 (0.04) & 0.96 (0.01) & 0.10 (0.00) \\
 & Probit & GHK(10) & 0.096 (0.004) & 0.011 (0.005) & 1.08 (0.04) & 0.96 (0.01) & 0.08 (0.00) \\
 & Probit & GHK(50) & 0.094 (0.004) & 0.007 (0.005) & 1.11 (0.04) & 0.96 (0.01) & 0.15 (0.00) \\
 & Probit & GHK(250) & 0.095 (0.004) & 0.007 (0.005) & 1.10 (0.04) & 0.96 (0.01) & 0.41 (0.01) \\
\midrule
\multirow{5}{*}{10,000} & Emulator & Emulator & 0.038 (0.002) & 0.003 (0.002) & 1.00 (0.04) & 0.93 (0.01) & 0.39 (0.00) \\
 & Probit & Emulator & 0.038 (0.002) & 0.007 (0.003) & 1.00 (0.05) & 0.96 (0.01) & 0.39 (0.00) \\
 & Probit & GHK(10) & 0.040 (0.002) & 0.013 (0.002) & 0.99 (0.05) & 0.95 (0.01) & 0.32 (0.00) \\
 & Probit & GHK(50) & 0.038 (0.002) & 0.008 (0.003) & 1.00 (0.05) & 0.96 (0.01) & 0.46 (0.01) \\
 & Probit & GHK(250) & 0.038 (0.002) & 0.007 (0.003) & 1.00 (0.05) & 0.95 (0.01) & 0.91 (0.01) \\
\midrule
\multirow{5}{*}{100,000} & Emulator & Emulator & 0.011 (0.001) & 0.001 (0.001) & 1.07 (0.05) & 0.96 (0.01) & 1.86 (0.01) \\
 & Probit & Emulator & 0.011 (0.001) & 0.001 (0.001) & 1.04 (0.05) & 0.96 (0.01) & 1.85 (0.01) \\
 & Probit & GHK(10) & 0.016 (0.001) & 0.009 (0.001) & 0.88 (0.04) & 0.81 (0.02) & 0.93 (0.01) \\
 & Probit & GHK(50) & 0.012 (0.001) & 0.002 (0.001) & 1.04 (0.04) & 0.95 (0.01) & 1.96 (0.02) \\
 & Probit & GHK(250) & 0.011 (0.001) & 0.001 (0.001) & 1.04 (0.05) & 0.96 (0.01) & 7.23 (0.09) \\
\bottomrule
\end{tabular}
\caption{Comparison of Emulator and GHK estimation methods for factor-structured multinomial probit models with $\numalt=3$. Performance metrics are averaged over the factor parameters (coefficients, uniquenesses, and factor loadings). The Emulator's performance mirrors that of GHK(250) at a much lower computational cost.}
\label{tab:emulator_vs_ghk:3-factor:all}
\vspace{6pt}
\end{table}

Table \ref{tab:emulator_vs_ghk:3-factor:all} displays analogous results for the Factor specification with $\numalt=3$. Performance metrics are averaged over the nonredundant elements of $\vec{\beta}$, $\mat{\Psi}$, and $\vec{\gamma}$. The Emulator's performance mirrors that of GHK(250) but at a much lower computational cost.

Tables~\ref{tab:emulator_vs_ghk:5-factor:all} and~\ref{tab:emulator_vs_ghk:10-factor:all} display analogous results for 5 and 10 alternatives, respectively. In both cases, the Emulator performs comparably to GHK(50) and GHK(250) despite not being trained specifically for factor models. The Emulator is comparable to or faster than GHK(50) for $n \in \{1{,}000, 10{,}000\}$. At $n = 100{,}000$, the Emulator's computation time falls between that of GHK(50) and GHK(250).

\begin{table}[ptbh]
\centering
\footnotesize
\begin{tabular}{@{}rllrrrrr@{}}
\toprule
\multicolumn{1}{c}{$n$} & \multicolumn{1}{l}{DGP} & \multicolumn{1}{l}{Method} & \multicolumn{1}{c}{RMSE} & \multicolumn{1}{c}{RMS Bias} & \multicolumn{1}{c}{SE Ratio} & \multicolumn{1}{c}{Coverage} & \multicolumn{1}{r}{Time (s)} \\
\midrule
\multirow{5}{*}{1,000} & Emulator & Emulator & 0.128 (0.004) & 0.016 (0.005) & 1.20 (0.04) & 0.97 (0.01) & 0.32 (0.01) \\
 & Probit & Emulator & 0.132 (0.003) & 0.012 (0.004) & 1.19 (0.03) & 0.97 (0.01) & 0.32 (0.00) \\
 & Probit & GHK(10) & 0.134 (0.004) & 0.017 (0.005) & 1.12 (0.03) & 0.95 (0.01) & 0.23 (0.00) \\
 & Probit & GHK(50) & 0.131 (0.003) & 0.012 (0.004) & 1.19 (0.03) & 0.98 (0.01) & 0.47 (0.01) \\
 & Probit & GHK(250) & 0.132 (0.003) & 0.012 (0.004) & 1.19 (0.03) & 0.97 (0.01) & 1.57 (0.02) \\
\midrule
\multirow{5}{*}{10,000} & Emulator & Emulator & 0.059 (0.002) & 0.007 (0.003) & 1.01 (0.04) & 0.95 (0.01) & 1.98 (0.03) \\
 & Probit & Emulator & 0.059 (0.003) & 0.007 (0.003) & 1.04 (0.04) & 0.96 (0.01) & 1.96 (0.02) \\
 & Probit & GHK(10) & 0.062 (0.003) & 0.011 (0.003) & 0.93 (0.04) & 0.93 (0.01) & 0.73 (0.01) \\
 & Probit & GHK(50) & 0.059 (0.003) & 0.004 (0.003) & 1.02 (0.04) & 0.95 (0.01) & 1.89 (0.03) \\
 & Probit & GHK(250) & 0.059 (0.003) & 0.007 (0.003) & 1.04 (0.04) & 0.95 (0.01) & 4.93 (0.07) \\
\midrule
\multirow{5}{*}{100,000} & Emulator & Emulator & 0.019 (0.001) & 0.002 (0.001) & 1.01 (0.03) & 0.96 (0.01) & 15.97 (0.18) \\
 & Probit & Emulator & 0.019 (0.001) & 0.002 (0.001) & 1.04 (0.03) & 0.96 (0.01) & 15.96 (0.18) \\
 & Probit & GHK(10) & 0.029 (0.001) & 0.017 (0.001) & 0.75 (0.03) & 0.75 (0.02) & 3.53 (0.04) \\
 & Probit & GHK(50) & 0.019 (0.001) & 0.004 (0.001) & 1.02 (0.03) & 0.95 (0.01) & 7.58 (0.08) \\
 & Probit & GHK(250) & 0.019 (0.001) & 0.002 (0.001) & 1.04 (0.03) & 0.96 (0.01) & 30.74 (0.45) \\
\bottomrule
\end{tabular}
\caption{Comparison of Emulator and GHK estimation methods for factor-structured multinomial probit models with $\numalt=5$. The emulator performs competitively with GHK(50) and GHK(250), often at a lower computational cost.}
\label{tab:emulator_vs_ghk:5-factor:all}
\end{table}

\begin{table}[ptbh]
\centering
\footnotesize
\begin{tabular}{@{}rllrrrrr@{}}
\toprule
\multicolumn{1}{c}{$n$} & \multicolumn{1}{l}{DGP} & \multicolumn{1}{l}{Method} & \multicolumn{1}{c}{RMSE} & \multicolumn{1}{c}{RMS Bias} & \multicolumn{1}{c}{SE Ratio} & \multicolumn{1}{c}{Coverage} & \multicolumn{1}{r}{Time (s)} \\
\midrule
\multirow{5}{*}{1,000} & Emulator & Emulator & 0.188 (0.004) & 0.023 (0.004) & 1.14 (0.02) & 0.96 (0.00) & 1.08 (0.01) \\
 & Probit & Emulator & 0.198 (0.005) & 0.030 (0.005) & 1.10 (0.02) & 0.96 (0.01) & 1.03 (0.01) \\
 & Probit & GHK(10) & 0.220 (0.005) & 0.059 (0.006) & 0.92 (0.02) & 0.91 (0.01) & 0.76 (0.01) \\
 & Probit & GHK(50) & 0.205 (0.005) & 0.039 (0.005) & 1.06 (0.02) & 0.95 (0.01) & 2.46 (0.03) \\
 & Probit & GHK(250) & 0.199 (0.005) & 0.030 (0.005) & 1.11 (0.02) & 0.96 (0.01) & 5.80 (0.09) \\
\midrule
\multirow{5}{*}{10,000} & Emulator & Emulator & 0.080 (0.002) & 0.011 (0.003) & 1.22 (0.04) & 0.97 (0.01) & 6.95 (0.06) \\
 & Probit & Emulator & 0.081 (0.002) & 0.010 (0.002) & 1.22 (0.04) & 0.97 (0.01) & 6.83 (0.06) \\
 & Probit & GHK(10) & 0.103 (0.002) & 0.049 (0.003) & 0.95 (0.03) & 0.88 (0.01) & 3.92 (0.05) \\
 & Probit & GHK(50) & 0.087 (0.002) & 0.019 (0.003) & 1.14 (0.04) & 0.96 (0.01) & 14.67 (0.15) \\
 & Probit & GHK(250) & 0.084 (0.002) & 0.012 (0.003) & 1.20 (0.04) & 0.97 (0.01) & 23.93 (0.23) \\
\midrule
\multirow{5}{*}{100,000} & Emulator & Emulator & 0.042 (0.002) & 0.003 (0.002) & 0.95 (0.04) & 0.94 (0.01) & 62.10 (0.51) \\
 & Probit & Emulator & 0.037 (0.001) & 0.010 (0.002) & 1.13 (0.05) & 0.97 (0.01) & 59.63 (0.55) \\
 & Probit & GHK(10) & 0.071 (0.002) & 0.047 (0.002) & 0.65 (0.02) & 0.59 (0.02) & 19.50 (0.22) \\
 & Probit & GHK(50) & 0.039 (0.001) & 0.012 (0.001) & 1.04 (0.04) & 0.93 (0.01) & 48.68 (0.65) \\
 & Probit & GHK(250) & 0.036 (0.001) & 0.004 (0.001) & 1.11 (0.05) & 0.96 (0.01) & 190.72 (2.50) \\
\bottomrule
\end{tabular}
\caption{Comparison of Emulator and GHK estimation methods for factor-structured multinomial probit models with $\numalt=10$. The emulator performs comparably to GHK(250) in terms of statistical performance but requires far less computation time.}
\label{tab:emulator_vs_ghk:10-factor:all}
\end{table}

Across all scenarios, the Emulator method performs competitively with or significantly better than GHK. We expect that the relative performance of the emulator would increase with (a) additional training and (b) implementation on specialized deep-learning hardware, such as GPUs. The latter is likely to be especially advantageous for the emulator because its computations are trivially parallelizable and highly optimized on modern GPUs.

\subsection{Multi-$\numalt$ Training Accuracy}
\label{sec:sim-multi-k}

We conclude this section by providing numerical results from training a neural network emulator to approximate choice probabilities for $\numalt \in \{3, 4, 5\}$ simultaneously. We use the same neural network as we used in the previous section for $\numalt=5$. As above, we train for $400{,}000$ episodes, but we loop over training examples with $\numalt=3$, $4$, and $5$ alternatives within each episode. The training process completes in 10.3 hours.

\begin{figure}[ptbh]
\centering
\begin{subfigure}[b]{0.49\textwidth}
\centering
\includegraphics[width=\textwidth]{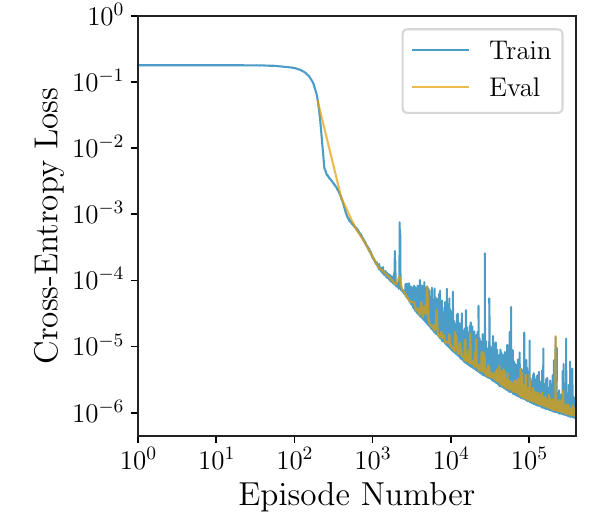}
\caption{Training Loss}
\label{fig:loss:345}
\end{subfigure}
\hfill
\begin{subfigure}[b]{0.49\textwidth}
\centering
\includegraphics[width=\textwidth]{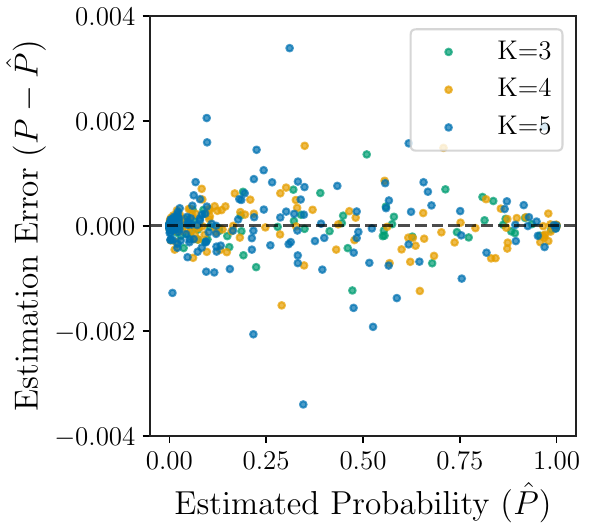}
\caption{Estimation Accuracy}
\label{fig:errors:345}
\end{subfigure}
\caption{(a) Training and evaluation loss over $400{,}000$ training episodes for emulators with $\numalt \in \{3, 4, 5\}$ alternatives. The training loss (blue) decreases consistently throughout training, and the evaluation loss (orange) tracks the training loss, indicating minimal overfitting. (b) Emulator probability estimates compared to estimation errors. The emulator simultaneously learns the choice probabilities for $\numalt = 3,\,4,\,5$ with small error.}
\label{fig:345}
\end{figure}

Figure \ref{fig:loss:345} shows the training and evaluation loss steadily decreasing throughout the training procedure, approaching levels similar to those of Figure \ref{fig:loss}. Figure \ref{fig:errors:345} plots the estimated choice probabilities against estimation errors for $\numalt \in \{3, 4, 5\}$ with 400 simulated examples each. The small errors indicate that the model accurately approximates the choice probabilities across all three values of $\numalt$ simultaneously.

% \section{Case Study}
% \label{sec:casestudy}

\section{Conclusion}
\label{sec:conclusion}

This paper proposes an amortized inference approach for discrete choice models.
The approach uses a novel neural network architecture to approximate choice probabilities under general error distributions, including those featuring nontrivial cross-alternative correlations, such as MNP models.
This generality enables flexible modeling of empirical choice behavior, including complex substitution patterns, without losing the theoretical grounding or interpretability of RUM models.

Because the emulator operates on the generic choice problem, it is agnostic to the assumed parametric form of the deterministic utilities or the accompanying scale (covariance) matrix.
Given a pretrained emulator, this generality enables drop-in adoption without the need for custom architecture changes or additional training.
In particular, generalizing from logit to probit requires only the replacement of closed-form softmax probabilities with emulator evaluations. 
Other correlated error distributions are similarly straightforward.

Our proposed architecture and training process feature a preprocessing transformation, DeepSet modules \citep{zaheer2017deep}, equivariant layers, smooth activation functions, and Sobolev training \citep{czarnecki2017sobolev}.
Together, these design choices produce smooth approximations of choice probabilities that respect their invariance properties and accelerate learning via parameter sharing.
We show that the architecture possesses a universal approximation property, extending recent theoretical results for symmetric matrices \citep{blumsmith2025machine} to the joint space of utility vectors and centered scale (covariance) matrices.

In Monte Carlo simulations, we find that emulator-based estimators perform as well or better than GHK-based estimators in terms of both statistical performance and computational requirements when estimating MNP models.
These findings support our theoretical results showing that emulator-based estimators inherit the properties of exact MLEs under mild approximation conditions.
We focus on MNP models in the simulation study to enable comparison to the GHK simulator, which is directly applicable only to Gaussian errors.
The computational benefits of the amortized inference approach are amplified by the fact that emulator evaluations are trivially parallelizable and highly optimized on modern deep learning hardware, such as GPUs.

Several limitations merit discussion. Practitioners must ensure sufficient overlap between the distribution of inputs in their application and the training distribution. Additionally, our asymptotic theory requires the emulator approximation error to vanish at an appropriate rate, meaning applications with very large samples may require more extensively trained emulators. Lastly, models with correlated errors require identifying restrictions not needed for independent logit-based models. Although our approach enables simple and efficient computation of choice probabilities for these models, the data requirements and modeling decisions to ensure proper identification remain a separate challenge.

The framework also allows numerous extensions. Our emulator architecture and training procedure could be specialized to specific settings, such as factor-structured scale matrices or mixture distributions. We expect these settings could be handled by reconfiguring the modules in our architecture and, in some cases, generalizing the preprocessing transformation. Our emulators could also be used in tandem with other modern choice modeling architectures, such as those described in Section \ref{sec:literature}. Previous work has applied neural-network-assisted estimation approaches to other econometric models with intractable likelihood functions, such as dynamic discrete choice \citep{Norets2012} and consumer search \citep{wei2025}, in the context of specific parametric models. The amortized inference approach presented here could likely be generalized to many of these settings, enabling efficient amortized inference without modifying or retraining the neural network.

\section*{Acknowledgments}

The authors acknowledge the use of AI tools to assist with mathematical derivations, computer code, and writing.
% To respect the confidentiality of the data in the case study, Section \ref{sec:casestudy} was written manually without providing AI tools any access to the underlying data.
All content has been manually reviewed for accuracy.

% \section*{Code Availability}

\newpage

\bibliography{references}

\appendix
\newpage

\setcounter{page}{1}
\renewcommand{\thepage}{A-\arabic{page}}

\begin{center}
\Large\textbf{Supplementary Materials for ``Amortized Inference for Correlated Discrete Choice Models via Equivariant Neural Networks''}

\vspace{12pt}

\normalsize
Easton Huch\\
Postdoctoral Fellow, Johns Hopkins Carey Business School

\vspace{0.5em}

Michael Keane\\
Carey Distinguished Professor, Johns Hopkins Carey Business School
\end{center}

\vspace{20pt}

Appendix \ref{sec:separation} presents results and proofs regarding generic separation and universal approximation. Section \ref{sec:utility-covariance} proves the results given in Section \ref{sec:universal-approximation}. Appendix \ref{sec:estimation-proofs} provides proofs of the estimation theory results in Section \ref{sec:estimation-theory}. Appendix \ref{sec:simulation-details} provides details on the simulation study.

\section{Generic Separation and Universal Approximation}
\label{sec:separation}

This appendix develops the group-theoretic foundations underlying the per-alternative encoder architecture presented in the main text. We establish that certain invariants generically separate orbits under permutation groups, yielding universal approximation results via the Stone--Weierstrass theorem.

Let $\mathbb{S}_{\numalt}$ denote the space of symmetric real-valued $\numalt \times \numalt$ matrices. We consider an extended group $G_{\numalt}$ acting on $\mathbb{S}_{\numalt}$ by permutation, scaling, and translation by matrices in the null space of double-centering. Section~\ref{sec:preliminaries} introduces the necessary notation and establishes basic properties of the centering matrix and the normalized double-centering function $f$.

Sections~\ref{sec:generic}--\ref{sec:basic} establish that the composition $h = g \circ f$, which first applies normalized double-centering and then extracts multisets of diagonal and off-diagonal elements, generically separates $G_{\numalt}$-orbits. The exceptional set where separation fails is closed, $G_{\numalt}$-invariant, and has measure zero.

Section~\ref{sec:restriction} considers the restriction to the centered, trace-normalized subspace $\mathbb{S}_{\numalt}^* = \mathrm{Im}(f)$, where $g$ generically separates $S_{\numalt}$-orbits and the Stone--Weierstrass theorem yields universal approximation.

Section~\ref{sec:submatrices} develops a variant for a distinguished index $j$, showing that an invariant $g_j$ generically separates orbits under $\permgroupsub$, the subgroup fixing $j$.

Sections~\ref{sec:utility-covariance-global}--\ref{sec:utility-covariance} extend the theory to incorporate deterministic utilities, culminating in proofs of Theorem~\ref{thm:separation} and Theorem~\ref{thm:universal} from the main text.
Throughout, $\mat{X}$ and $\mat{Y}$ represent scale matrices. As in Section \ref{sec:architecture}, we sometimes refer to these matrices as covariance matrices although this interpretation is not strictly correct outside of MNP models.

\subsection{Preliminaries}\label{sec:preliminaries}

Throughout, let $\numalt \geq 2$ be the number of alternatives. Let $\mat{I}_{\numalt}$ denote the $\numalt \times \numalt$ identity matrix and $\ones_{\numalt} \in \R^{\numalt}$ the vector of all ones.
\vspace{6pt}

\begin{definition}[Centering Matrix]\label{def:centering}
The \emph{centering matrix} $\centermat \in \R^{\numalt \times \numalt}$ is defined as
$\centermat = \mat{I}_{\numalt} - \frac{1}{\numalt} \ones_{\numalt} \ones_{\numalt}^\top$.
\end{definition}

\vspace{6pt}
\begin{proposition}[Properties of the Centering Matrix]\label{prop:centering-properties}
The centering matrix $\centermat$ is symmetric ($\centermat^\top = \centermat$), idempotent ($\centermat^2 = \centermat$), and annihilates constant vectors ($\centermat \ones_{\numalt} = \vec{0}$).
\end{proposition}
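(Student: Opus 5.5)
The statement (Proposition~\ref{prop:centering-properties}) follows by direct matrix computation from Definition~\ref{def:centering}, using only $\ones_{\numalt}^\top \ones_{\numalt} = \numalt$. I will verify the three properties in the order annihilation, symmetry, idempotence, since idempotence reuses the annihilation identity.

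\emph{Annihilation.} Compute
\[
\centermat \ones_{\numalt} = \ones_{\numalt} - \tfrac{1}{\numalt}\ones_{\numalt}(\ones_{\numalt}^\top \ones_{\numalt}) = \ones_{\numalt} - \ones_{\numalt} = \vec{0}.
\]

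\emph{Symmetry.} Both $\mat{I}_{\numalt}$ and the outer product $\ones_{\numalt}\ones_{\numalt}^\top$ are symmetric, and transposition is linear. Hence $\centermat^\top = \centermat$.

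\emph{Idempotence.} Expand
\[
\centermat^2 = \centermat\bigl(\mat{I}_{\numalt} - \tfrac{1}{\numalt}\ones_{\numalt}\ones_{\numalt}^\top\bigr) = \centermat - \tfrac{1}{\numalt}(\centermat\ones_{\numalt})\ones_{\numalt}^\top = \centermat,
\]
where the last equality uses the annihilation property. Equivalently, one can expand the square directly and note that $(\ones_{\numalt}\ones_{\numalt}^\top)^2 = \numalt\,\ones_{\numalt}\ones_{\numalt}^\top$, so the cross terms cancel.

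There is no substantive obstacle here. The only point needing care is the scalar identity $\ones_{\numalt}^\top\ones_{\numalt} = \numalt$, which drives both the annihilation and the idempotence steps.
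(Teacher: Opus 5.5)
Your proof is correct and is essentially the paper's argument: direct computation from Definition~\ref{def:centering} using $\ones_{\numalt}^\top\ones_{\numalt} = \numalt$. The only difference is cosmetic. You obtain idempotence by factoring out $\centermat\ones_{\numalt} = \vec{0}$, whereas the paper expands $\centermat^2$ directly and cancels the cross terms, which is the alternative you also mention.
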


\begin{proof}
Symmetry is immediate. For idempotency, $\centermat^2 = \mat{I}_{\numalt} - \frac{2}{\numalt}\ones_{\numalt}\ones_{\numalt}^\top + \frac{1}{\numalt^2}\ones_{\numalt}(\ones_{\numalt}^\top\ones_{\numalt})\ones_{\numalt}^\top = \mat{I}_{\numalt} - \frac{2}{\numalt}\ones_{\numalt}\ones_{\numalt}^\top + \frac{1}{\numalt}\ones_{\numalt}\ones_{\numalt}^\top = \centermat$. Finally, $\centermat\ones_{\numalt} = \ones_{\numalt} - \frac{1}{\numalt}\ones_{\numalt}(\ones_{\numalt}^\top\ones_{\numalt}) = \ones_{\numalt} - \ones_{\numalt} = \vec{0}$.
\end{proof}

\vspace{6pt}
\begin{definition}[Double-Centering Function]\label{def:double-centering}
Define the domain $\mathcal{U} = \{\mat{X} \in \mathbb{S}_{\numalt} : \tr(\centermat\mat{X}\centermat) \neq 0\}$ and the function $f: \mathcal{U} \to \mathbb{S}_{\numalt}$ by
\[
f(\mat{X}) = \frac{\numalt \cdot \centermat\mat{X}\centermat}{\tr(\centermat\mat{X}\centermat)}.
\]
\end{definition}

\vspace{6pt}
\begin{proposition}[Explicit Formula]\label{prop:explicit-formula}
For $\mat{X} \in \mathbb{S}_{\numalt}$, let $\bar{X}_{i\bigcdot} = \frac{1}{\numalt}\sum_{k=1}^{\numalt} X_{ik}$, $\bar{X}_{\bigcdot j} = \frac{1}{\numalt}\sum_{k=1}^{\numalt} X_{kj}$, and $\bar{X}_{\bigcdot\bigcdot} = \frac{1}{\numalt^2}\sum_{i,j} X_{ij}$. Then:
\begin{enumerate}
    \item[(i)] $[\centermat\mat{X}\centermat]_{ij} = X_{ij} - \bar{X}_{i\bigcdot} - \bar{X}_{\bigcdot j} + \bar{X}_{\bigcdot\bigcdot}$.
    \item[(ii)] $\tr(\centermat\mat{X}\centermat) = \tr(\mat{X}) - \numalt\bar{X}_{\bigcdot\bigcdot}$.
\end{enumerate}
\end{proposition}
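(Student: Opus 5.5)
The plan is to verify both identities by expanding the matrix product entrywise, using only Definition~\ref{def:centering} and the symmetry of $\mat{X}$ where needed.

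\textbf{Part (i).} Write $\centermat = \mat{I}_{\numalt} - \frac{1}{\numalt}\ones_{\numalt}\ones_{\numalt}^\top$ and expand
\[
\centermat\mat{X}\centermat = \mat{X} - \tfrac{1}{\numalt}\mat{X}\ones_{\numalt}\ones_{\numalt}^\top - \tfrac{1}{\numalt}\ones_{\numalt}\ones_{\numalt}^\top\mat{X} + \tfrac{1}{\numalt^2}\ones_{\numalt}(\ones_{\numalt}^\top\mat{X}\ones_{\numalt})\ones_{\numalt}^\top.
\]
Then read off the $(i,j)$ entry of each term:
\begin{itemize}
\item $[\mat{X}\ones_{\numalt}\ones_{\numalt}^\top]_{ij} = \sum_k X_{ik} = \numalt\bar{X}_{i\bigcdot}$;
\item $[\ones_{\numalt}\ones_{\numalt}^\top\mat{X}]_{ij} = \sum_k X_{kj} = \numalt\bar{X}_{\bigcdot j}$;
\item $\ones_{\numalt}^\top\mat{X}\ones_{\numalt} = \numalt^2\bar{X}_{\bigcdot\bigcdot}$.
\end{itemize}
After dividing by the prefactors, these give exactly $X_{ij} - \bar{X}_{i\bigcdot} - \bar{X}_{\bigcdot j} + \bar{X}_{\bigcdot\bigcdot}$. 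Symmetry of $\mat{X}$ is not actually needed for (i); it only makes $\bar{X}_{i\bigcdot} = \bar{X}_{\bigcdot i}$.

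\textbf{Part (ii).} Two routes are available, and I will take the second as the main argument since it is the cleaner one.
\begin{itemize}
\item \emph{Summing the diagonal.} Set $i=j$ in (i) and sum over $i$. The terms contribute $\tr(\mat{X})$, then $-\numalt\bar{X}_{\bigcdot\bigcdot}$ twice (since $\sum_i \bar{X}_{i\bigcdot} = \sum_i \bar{X}_{\bigcdot i} = \numalt\bar{X}_{\bigcdot\bigcdot}$), then $+\numalt\bar{X}_{\bigcdot\bigcdot}$. The net result is $\tr(\mat{X}) - \numalt\bar{X}_{\bigcdot\bigcdot}$.
\item \emph{Cyclicity of the trace.} By Proposition~\ref{prop:centering-properties}, $\centermat^2 = \centermat$, so $\tr(\centermat\mat{X}\centermat) = \tr(\mat{X}\centermat^2) = \tr(\mat{X}\centermat)$. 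This equals $\tr(\mat{X}) - \frac{1}{\numalt}\ones_{\numalt}^\top\mat{X}\ones_{\numalt}$, which is the same expression.
\end{itemize}

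There is no real obstacle. The only point needing care is the bookkeeping of the $1/\numalt$ and $1/\numalt^2$ normalizations in the averages, so that the cross terms cancel correctly.
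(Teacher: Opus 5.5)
Your proposal is correct and follows the paper's proof: part (i) by direct expansion of $\centermat\mat{X}\centermat$, and part (ii) via idempotency of $\centermat$ and cyclicity of the trace, reducing to $\tr(\mat{X}) - \frac{1}{\numalt}\ones_{\numalt}^\top\mat{X}\ones_{\numalt}$. Your alternative of summing the diagonal of (i) is also valid but not needed.
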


\begin{proof}
Part (i) follows by direct expansion. For part (ii), use idempotency and the cyclic property of trace:
$\tr(\centermat\mat{X}\centermat) = \tr(\centermat^2\mat{X}) = \tr(\centermat\mat{X}) = \tr(\mat{X}) - \frac{1}{\numalt}\ones_{\numalt}^\top\mat{X}\ones_{\numalt} = \tr(\mat{X}) - \numalt\bar{X}_{\bigcdot\bigcdot}$.
\end{proof}

\vspace{6pt}
\begin{proposition}[Domain Properties]\label{prop:domain-measure}
The set $\mathbb{S}_{\numalt} \setminus \mathcal{U} = \{\mat{X} \in \mathbb{S}_{\numalt} : \tr(\centermat\mat{X}\centermat) = 0\}$ is a hyperplane in $\mathbb{S}_{\numalt}$ with Lebesgue measure zero. The domain $\mathcal{U}$ is open and $G_{\numalt}$-invariant.
\end{proposition}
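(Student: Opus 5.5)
The plan is to reduce every claim to properties of the single linear functional $\tau:\mathbb{S}_{\numalt}\to\R$, $\tau(\mat{X}) = \tr(\centermat\mat{X}\centermat)$. By definition, $\mathbb{S}_{\numalt}\setminus\mathcal{U} = \tau^{-1}(0)$ and $\mathcal{U}=\tau^{-1}(\R\setminus\{0\})$.

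\textbf{Hyperplane and measure zero.} The map $\tau$ is linear because matrix multiplication and trace are linear. To show $\tau \not\equiv 0$, evaluate it at $\mat{X}=\mat{I}_{\numalt}$. Using Proposition~\ref{prop:centering-properties} (idempotency) or Proposition~\ref{prop:explicit-formula}(ii), this gives $\tau(\mat{I}_{\numalt}) = \tr(\centermat) = \numalt-1 \neq 0$, since $\numalt\ge 2$. The kernel of a nonzero linear functional on the finite-dimensional space $\mathbb{S}_{\numalt}\cong\R^{\numalt(\numalt+1)/2}$ is a codimension-one subspace, i.e.\ a hyperplane. A proper affine subspace has Lebesgue measure zero, for instance by a linear change of coordinates taking it to $\{x_1=0\}$.

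\textbf{Openness.} The functional $\tau$ is continuous, being linear on a finite-dimensional space. Hence $\mathcal{U}$ is open, as the preimage of the open set $\R\setminus\{0\}$.

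\textbf{$G_{\numalt}$-invariance.} The group $G_{\numalt}$ is generated by three kinds of actions, and it suffices to check that each one preserves the condition $\tau(\mat{X})\neq 0$. Invariance then passes to all compositions.
\begin{enumerate}
\item[(a)] \emph{Permutations} $\mat{X}\mapsto \mat{P}_\pi\mat{X}\mat{P}_\pi^\top$. Since $\mat{P}_\pi\ones_{\numalt}=\ones_{\numalt}$ and $\mat{P}_\pi^{\top}\ones_{\numalt}=\ones_{\numalt}$, we have $\mat{P}_\pi\centermat=\centermat\mat{P}_\pi$. Therefore
\[
\tau(\mat{P}_\pi\mat{X}\mat{P}_\pi^\top)=\tr(\mat{P}_\pi\centermat\mat{X}\centermat\mat{P}_\pi^\top)=\tau(\mat{X})
\]
by the cyclic property of the trace.
\item[(b)] \emph{Scalings} $\mat{X}\mapsto c\mat{X}$ with $c\neq 0$. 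Here $\tau(c\mat{X})=c\,\tau(\mat{X})$, which is nonzero exactly when $\tau(\mat{X})$ is.
\item[(c)] \emph{Translations} $\mat{X}\mapsto\mat{X}+\mat{N}$ with $\centermat\mat{N}\centermat=\mat{0}$. By linearity, $\tau(\mat{X}+\mat{N})=\tau(\mat{X})+\tr(\centermat\mat{N}\centermat)=\tau(\mat{X})$.
\end{enumerate}
Together, (a)--(c) show that $g\cdot\mat{X}\in\mathcal{U}$ for every $g\in G_{\numalt}$ and every $\mat{X}\in\mathcal{U}$.

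The only point needing care is the precise definition of $G_{\numalt}$, in particular whether its scalars are restricted to be positive or merely nonzero. Either way, step (b) goes through unchanged, so I expect no real obstacle.
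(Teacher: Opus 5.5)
Your proposal is correct and follows essentially the same route as the paper's proof: nontriviality of the linear functional $\tr(\centermat\mat{X}\centermat)$ gives the measure-zero hyperplane, continuity gives openness, and scaling, null-space translation and permutation are each checked separately to preserve $\tr(\centermat\mat{X}\centermat)\neq 0$. Your explicit witness $\tr(\centermat)=\numalt-1\neq 0$ makes the nontriviality more concrete than the paper's bare citation of Proposition~\ref{prop:explicit-formula}(ii). For step (c), note in one line that the group's translations $\vec{u}\ones_{\numalt}^\top+\ones_{\numalt}\vec{u}^\top$ satisfy $\centermat\mat{N}\centermat=\mat{0}$ because $\centermat\ones_{\numalt}=\vec{0}$.
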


\begin{proof}
By Proposition~\ref{prop:explicit-formula}(ii), $\tr(\centermat\mat{X}\centermat)$ is a nontrivial linear function, so its zero set is a measure-zero hyperplane. Openness follows from continuity. For $G_{\numalt}$-invariance: scaling by $c \neq 0$ gives $\tr(\centermat(c\mat{X})\centermat) = c\,\tr(\centermat\mat{X}\centermat) \neq 0$; null-space translation by $\mat{N} = \vec{u}\ones_{\numalt}^\top + \ones_{\numalt}\vec{u}^\top$ satisfies $\centermat\mat{N}\centermat = \mat{0}$ since $\centermat\ones_{\numalt} = \vec{0}$; permutation preserves the trace by the cyclic property.
\end{proof}

\vspace{6pt}
\begin{definition}[Extended Group Action]\label{def:group-action}
Let $S_{\numalt}$ denote the symmetric group on $\numalt$ elements with associated permutation matrices $\mat{P}_\pi$, and let $\R^* = \mathbb{R} \setminus \{0\}$. Define the extended group $G_{\numalt} = \R^{\numalt} \rtimes (\R^* \times S_{\numalt})$, which acts on $\mathbb{S}_{\numalt}$ by:
\[
(c, \vec{u}, \pi) \cdot \mat{X} = c \, \mat{P}_\pi \mat{X} \mat{P}_\pi^\top + \vec{u} \ones_{\numalt}^\top + \ones_{\numalt} \vec{u}^\top.
\]
\end{definition}

\vspace{6pt}
\begin{proposition}[$f$ and $G_{\numalt}$]\label{prop:f-and-G}
The function $f$ satisfies:
\begin{enumerate}
    \item[(i)] Equivariance under permutation: $f(\mat{P}_\pi \mat{X} \mat{P}_\pi^\top) = \mat{P}_\pi f(\mat{X}) \mat{P}_\pi^\top$.
    \item[(ii)] Invariance under scaling: $f(c \mat{X}) = f(\mat{X})$ for $c \neq 0$.
    \item[(iii)] Invariance under null-space translation: $f(\mat{X} + \vec{u}\ones_{\numalt}^\top + \ones_{\numalt}\vec{u}^\top) = f(\mat{X})$.
\end{enumerate}
\end{proposition}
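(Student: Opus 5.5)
All three parts follow by direct computation from the definition $f(\mat{X}) = \numalt\,\centermat\mat{X}\centermat/\tr(\centermat\mat{X}\centermat)$, together with Proposition~\ref{prop:centering-properties}. In each case we first show how the numerator $\centermat\mat{X}\centermat$ transforms and then how the denominator $\tr(\centermat\mat{X}\centermat)$ transforms. Domain membership is already handled by Proposition~\ref{prop:domain-measure}: $\mathcal{U}$ is $G_{\numalt}$-invariant, so both sides of each identity are defined whenever $\mat{X} \in \mathcal{U}$.

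\textbf{Part (i).} The key observation is that every permutation matrix commutes with $\centermat$. Indeed, $\mat{P}_\pi \ones_{\numalt} = \ones_{\numalt}$ and $\ones_{\numalt}^\top \mat{P}_\pi^\top = \ones_{\numalt}^\top$, so $\mat{P}_\pi \ones_{\numalt}\ones_{\numalt}^\top \mat{P}_\pi^\top = \ones_{\numalt}\ones_{\numalt}^\top$. It follows that $\mat{P}_\pi \centermat \mat{P}_\pi^\top = \centermat$, and hence $\centermat \mat{P}_\pi = \mat{P}_\pi \centermat$. Applying this to the numerator gives
\[
\centermat\mat{P}_\pi\mat{X}\mat{P}_\pi^\top\centermat = \mat{P}_\pi\centermat\mat{X}\centermat\mat{P}_\pi^\top .
\]
For the denominator, $\tr(\mat{P}_\pi \mat{A} \mat{P}_\pi^\top) = \tr(\mat{A})$ by the cyclic property and $\mat{P}_\pi^\top\mat{P}_\pi = \mat{I}_{\numalt}$, so the trace is unchanged. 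Dividing the numerator by the denominator yields $f(\mat{P}_\pi \mat{X} \mat{P}_\pi^\top) = \mat{P}_\pi f(\mat{X}) \mat{P}_\pi^\top$.

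\textbf{Part (ii).} Replacing $\mat{X}$ by $c\mat{X}$ with $c \neq 0$ multiplies the numerator by $c$ and, by linearity of the trace, also multiplies the denominator by $c$. The factor $c$ cancels, so $f(c\mat{X}) = f(\mat{X})$. This is also where $c\neq 0$ is needed: it keeps the denominator nonzero, consistent with Proposition~\ref{prop:domain-measure}.

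\textbf{Part (iii).} Write $\mat{N} = \vec{u}\ones_{\numalt}^\top + \ones_{\numalt}\vec{u}^\top$. Since $\centermat\ones_{\numalt} = \vec{0}$ and $\centermat$ is symmetric, we also have $\ones_{\numalt}^\top\centermat = \vec{0}^\top$. Therefore
\[
\centermat\mat{N}\centermat = (\centermat\vec{u})(\ones_{\numalt}^\top\centermat) + (\centermat\ones_{\numalt})(\vec{u}^\top\centermat) = \mat{0}.
\]
By linearity, $\centermat(\mat{X}+\mat{N})\centermat = \centermat\mat{X}\centermat$, so the numerator and the denominator are both unchanged, and $f(\mat{X}+\mat{N}) = f(\mat{X})$.

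No step here is a genuine obstacle. The only point needing care is the commutation $\centermat\mat{P}_\pi = \mat{P}_\pi\centermat$ in part (i), which rests on the fact that permutation matrices fix $\ones_{\numalt}$.
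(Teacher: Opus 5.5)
Your proof is correct and follows the same route as the paper's: for (i) you use $\mat{P}_\pi\ones_{\numalt}=\ones_{\numalt}$ to commute $\centermat$ with $\mat{P}_\pi$ and then invariance of the trace; for (ii) the common factor $c$ cancels; for (iii) $\centermat\ones_{\numalt}=\vec{0}$ annihilates the null-space translation. Your extra remarks on domain membership and on $\ones_{\numalt}^\top\centermat=\vec{0}^\top$ only make explicit what the paper leaves implicit.
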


\begin{proof}
For (i), since $\mat{P}_\pi\ones_{\numalt} = \ones_{\numalt}$, we have $\mat{P}_\pi\centermat = \centermat\mat{P}_\pi$, so $\centermat(\mat{P}_\pi\mat{X}\mat{P}_\pi^\top)\centermat = \mat{P}_\pi\centermat\mat{X}\centermat\mat{P}_\pi^\top$, and the trace is preserved. Part (ii) follows from $f(c\mat{X}) = \frac{c K\,\centermat\mat{X}\centermat}{c\cdot\tr(\centermat\mat{X}\centermat)} = f(\mat{X})$. For (iii), $\centermat(\vec{u}\ones_{\numalt}^\top + \ones_{\numalt}\vec{u}^\top)\centermat = \mat{0}$ since $\centermat\ones_{\numalt} = \vec{0}$.
\end{proof}

\vspace{6pt}
\begin{lemma}[Preimage Characterization]\label{lem:preimage}
If $f(\mat{X}) = f(\mat{Y})$ for $\mat{X}, \mat{Y} \in \mathcal{U}$, then there exist $c \in \R^*$ and $\vec{w} \in \R^{\numalt}$ such that $\mat{X} = c\mat{Y} + \vec{w}\ones_{\numalt}^\top + \ones_{\numalt}\vec{w}^\top$.
\end{lemma}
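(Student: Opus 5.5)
The plan is to reduce the statement to a description of the kernel of the linear map $\mat{X} \mapsto \centermat\mat{X}\centermat$ on $\mathbb{S}_{\numalt}$. Suppose $f(\mat{X}) = f(\mat{Y})$ and write $t_X = \tr(\centermat\mat{X}\centermat) \neq 0$, $t_Y = \tr(\centermat\mat{Y}\centermat) \neq 0$. Clearing denominators in Definition~\ref{def:double-centering} gives $\centermat\mat{X}\centermat = c\,\centermat\mat{Y}\centermat$ with $c = t_X/t_Y \in \R^*$. By linearity, $\mat{Z} := \mat{X} - c\mat{Y}$ is a symmetric matrix with $\centermat\mat{Z}\centermat = \mat{0}$. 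It therefore suffices to show that every symmetric $\mat{Z}$ with $\centermat\mat{Z}\centermat = \mat{0}$ has the form $\vec{w}\ones_{\numalt}^\top + \ones_{\numalt}\vec{w}^\top$.

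For the kernel characterization, use $\centermat = \mat{I}_{\numalt} - \frac{1}{\numalt}\ones_{\numalt}\ones_{\numalt}^\top$ to expand $\mat{0} = \centermat\mat{Z}\centermat$. Equivalently, use Proposition~\ref{prop:explicit-formula}(i), which gives $Z_{ij} = \bar{Z}_{i\bigcdot} + \bar{Z}_{\bigcdot j} - \bar{Z}_{\bigcdot\bigcdot}$ for all $i,j$. Symmetry of $\mat{Z}$ gives $\bar{Z}_{\bigcdot j} = \bar{Z}_{j\bigcdot}$. Define $\vec{a} = (\bar{Z}_{1\bigcdot},\ldots,\bar{Z}_{\numalt\bigcdot})^\top$; then
\[
\mat{Z} = \vec{a}\ones_{\numalt}^\top + \ones_{\numalt}\vec{a}^\top - \bar{Z}_{\bigcdot\bigcdot}\ones_{\numalt}\ones_{\numalt}^\top .
\]
Absorb the constant term by setting $\vec{w} = \vec{a} - \tfrac{1}{2}\bar{Z}_{\bigcdot\bigcdot}\ones_{\numalt}$, since $\ones_{\numalt}\ones_{\numalt}^\top = \tfrac12(\ones_{\numalt}\ones_{\numalt}^\top + \ones_{\numalt}\ones_{\numalt}^\top)$. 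This yields $\mat{Z} = \vec{w}\ones_{\numalt}^\top + \ones_{\numalt}\vec{w}^\top$, and hence $\mat{X} = c\mat{Y} + \vec{w}\ones_{\numalt}^\top + \ones_{\numalt}\vec{w}^\top$.

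There is no real obstacle here. The only points needing care are that $c$ is nonzero, which holds because both traces are nonzero on $\mathcal{U}$, and that symmetry is used to identify row and column means so that a single vector $\vec{w}$ suffices. A dimension count ($\dim\ker = \numalt$, since $\mat{X}\mapsto\centermat\mat{X}\centermat$ maps onto the symmetric matrices annihilating $\ones_{\numalt}$, a space of dimension $\numalt(\numalt-1)/2$) offers an independent check, but the explicit formula already settles the claim.
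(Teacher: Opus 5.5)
Your proposal is correct and follows essentially the same route as the paper: obtain $\centermat\mat{X}\centermat = c\,\centermat\mat{Y}\centermat$ with $c = t_X/t_Y$, expand the double-centering in terms of row means and the grand mean, and absorb the constant term into $\vec{w}$ by subtracting half the grand mean. The only cosmetic difference is that you expand for $\mat{Z} = \mat{X} - c\mat{Y}$ directly, whereas the paper expands $\mat{X}$ and $\mat{Y}$ separately; by linearity this gives the same $\vec{w}$.
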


\begin{proof}
Let $t_X = \tr(\centermat\mat{X}\centermat)$ and $t_Y = \tr(\centermat\mat{Y}\centermat)$. The assumption $f(\mat{X}) = f(\mat{Y})$ implies $\centermat\mat{X}\centermat = c\,\centermat\mat{Y}\centermat$ where $c = t_X/t_Y \in \R^*$.

For symmetric $\mat{Z}$, let $\vec{\mu}_{\mat{Z}} = \mat{Z}\ones_{\numalt}/\numalt$ (row averages) and $m_{\mat{Z}} = \ones_{\numalt}^\top\mat{Z}\ones_{\numalt}/\numalt^2$ (grand mean). Expanding $\centermat\mat{Z}\centermat = \mat{Z} - \vec{\mu}_{\mat{Z}}\ones_{\numalt}^\top - \ones_{\numalt}\vec{\mu}_{\mat{Z}}^\top + m_{\mat{Z}}\ones_{\numalt}\ones_{\numalt}^\top$ and applying to both sides:
\[
\mat{X} = c\mat{Y} + (\vec{\mu}_{\mat{X}} - c\vec{\mu}_{\mat{Y}})\ones_{\numalt}^\top + \ones_{\numalt}(\vec{\mu}_{\mat{X}} - c\vec{\mu}_{\mat{Y}})^\top - (m_{\mat{X}} - cm_{\mat{Y}})\ones_{\numalt}\ones_{\numalt}^\top.
\]
Setting $\vec{w} = \vec{\mu}_{\mat{X}} - c\vec{\mu}_{\mat{Y}} - \frac{m_{\mat{X}} - cm_{\mat{Y}}}{2}\ones_{\numalt}$ gives the result.
\end{proof}

\vspace{6pt}
\begin{definition}[Centered, Trace-Normalized Subspace]\label{def:restricted-subspace}
Define $\mathbb{S}_{\numalt}^* = \{\mat{Y} \in \mathbb{S}_{\numalt} : \mat{Y}\ones_{\numalt} = \vec{0}, \, \tr(\mat{Y}) = \numalt\}$.
\end{definition}

\vspace{6pt}
\begin{definition}[Invariant Function $g$]\label{def:invariant}
For $\mat{Y} \in \mathbb{S}_{\numalt}^*$, define $g(\mat{Y}) = (\mathcal{D}_{\numalt}(\mat{Y}), \mathcal{O}_{\numalt}(\mat{Y}))$ where $\mathcal{D}_{\numalt}(\mat{Y}) = \{\!\{ Y_{ii} : i = 1, \ldots, \numalt \}\!\}$ is the multiset of diagonal entries and $\mathcal{O}_{\numalt}(\mat{Y}) = \{\!\{ Y_{ij} : 1 \leq i < j \leq \numalt \}\!\}$ is the multiset of off-diagonal entries.
\end{definition}

\subsection{Generic Injectivity}\label{sec:generic}

We now define the genericity conditions under which $g$ separates orbits.

\vspace{6pt}
\begin{definition}[Genericity Conditions]\label{def:genericity}
Let $\mat{Y} \in \mathbb{S}_{\numalt}$. We say $\mat{Y}$ satisfies:
\begin{enumerate}
    \item[(a)] \emph{Condition (a)} if all entries are distinct: $Y_{ij} \neq Y_{kl}$ for $(i,j) \neq (k,l)$ with $i \leq j$, $k \leq l$.
    \item[(b)] \emph{Condition (b)} if for each $i$, there exists exactly one $(\numalt - 1)$-subset $\mathcal{S}$ of off-diagonal positions such that $\sum_{(k,l) \in \mathcal{S}} Y_{kl} = -Y_{ii}$.
\end{enumerate}
\end{definition}

Condition (b) exploits the row-sum constraint $\mat{Y}\ones_{\numalt} = \vec{0}$: the off-diagonal entries in row $i$ sum to $-Y_{ii}$, so the canonical subset $\mathcal{S}_i^* = \{(i,j) : j \neq i\}$ always satisfies the equation. Condition (b) requires this subset to be unique among the $\binom{\numalt(\numalt - 1)/2}{\numalt-1}$ possible $(\numalt - 1)$-subsets.

\vspace{6pt}
\begin{lemma}[Recovery Lemma]\label{lem:recovery}
If $\mat{Y} \in \mathbb{S}_{\numalt}^*$ satisfies conditions (a) and (b), then $g(\mat{Y})$ determines $\mat{Y}$ up to $S_{\numalt}$-orbit.
\end{lemma}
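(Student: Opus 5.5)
The plan is to give an explicit reconstruction procedure that takes $g(\mat{Y}) = (\mathcal{D}_{\numalt}(\mat{Y}), \mathcal{O}_{\numalt}(\mat{Y}))$ and outputs a matrix $\mat{Y}'$ with $\mat{Y}' = \mat{P}_\pi \mat{Y} \mat{P}_\pi^\top$ for some $\pi \in S_{\numalt}$. Well-definedness then gives the lemma: any other $\mat{Z} \in \mathbb{S}_{\numalt}^*$ satisfying (a) and (b) with $g(\mat{Z}) = g(\mat{Y})$ yields the same output, so $\mat{Y}$ and $\mat{Z}$ lie in a common $S_{\numalt}$-orbit.

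\textbf{Step 1: labelling the diagonal.} By condition (a), the $\numalt$ diagonal entries are distinct. Sort $\mathcal{D}_{\numalt}(\mat{Y})$ increasingly and call the values $d_1 < \cdots < d_{\numalt}$. Choosing $\pi$ so that $[\mat{P}_\pi \mat{Y} \mat{P}_\pi^\top]_{ii} = d_i$ fixes a canonical labelling of the indices. Since $g$ is $S_{\numalt}$-invariant, we may replace $\mat{Y}$ by this permuted matrix; it still lies in $\mathbb{S}_{\numalt}^*$ and still satisfies (a) and (b), because both conditions are permutation invariant.

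\textbf{Step 2: recovering the rows.} For each $i$, search over all $(\numalt-1)$-sub-multisets of $\mathcal{O}_{\numalt}(\mat{Y})$ for one whose sum is $-d_i$. By condition (a) the off-diagonal entries are distinct, so sub-multisets correspond bijectively to sets of off-diagonal positions. The row-sum constraint $\mat{Y}\ones_{\numalt} = \vec{0}$ shows that the canonical set $\mathcal{S}_i^*$ has sum $-Y_{ii} = -d_i$, and condition (b) says it is the only one. The search therefore returns exactly the value set $R_i = \{Y_{ij} : j \neq i\}$, computed from $g(\mat{Y})$ alone.

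\textbf{Step 3: assembling the matrix.} For $i \neq j$, the entry $Y_{ij}$ lies in $R_i \cap R_j$. Any other common value would be an off-diagonal entry at a position lying in both row $i$ and row $j$, and the only such position is $\{i,j\}$. Distinctness from (a) then gives $R_i \cap R_j = \{Y_{ij}\}$. Setting $Y'_{ii} = d_i$ and $Y'_{ij} =$ the unique element of $R_i \cap R_j$ defines $\mat{Y}'$ as a function of $g(\mat{Y})$ that equals the canonically permuted $\mat{Y}$.

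The main obstacle is Step 2: making sure condition (b), which is stated in terms of positions, transfers to statements about values in the multiset. This is exactly where condition (a) is needed. Without distinctness, two different position sets could give the same value multiset, or an identical value could appear in the wrong row, and the intersection in Step 3 would no longer be a singleton.
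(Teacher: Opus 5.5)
Your proposal is correct and follows essentially the same three-step reconstruction as the paper's proof: label the indices by sorting the distinct diagonal entries, identify each row's off-diagonal values as the unique $(K-1)$-subset of $\mathcal{O}_K$ summing to $-Y_{ii}$, and recover $Y_{ij}$ as the singleton $R_i \cap R_j$. Your explicit use of condition (a) to convert the position-based uniqueness in (b) into uniqueness among value subsets, and to justify that the intersection is a singleton, fills in details the paper's proof leaves implicit.
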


\begin{proof}
Given multisets $\mathcal{D}_{\numalt}$ and $\mathcal{O}_{\numalt}$, we reconstruct $\mat{Y}$ up to relabeling:

\textbf{Step 1:} Order the distinct diagonal elements as $Y_{(1)(1)} > \cdots > Y_{(\numalt)(\numalt)}$, assigning canonical labels to alternatives.

\textbf{Step 2:} For each $(i)$, identify $\mathcal{R}_{(i)} \subset \mathcal{O}_{\numalt}$ as the unique $(\numalt - 1)$-subset summing to $-Y_{(i)(i)}$ (exists by condition (b)).

\textbf{Step 3:} For $i \neq j$, recover $Y_{(i)(j)}$ as the unique element in $\mathcal{R}_{(i)} \cap \mathcal{R}_{(j)}$ (singleton by condition (a)).

Any $\mat{Y}'$ with $g(\mat{Y}') = g(\mat{Y})$ satisfying (a) and (b) yields the same reconstruction up to relabeling.
\end{proof}

\subsection{The Exceptional Set}\label{sec:exceptional}

\vspace{6pt}
\begin{definition}[Exceptional Set]\label{def:exceptional}
Define $\exceptset = \mathcal{U}^C \cup \{\mat{X} \in \mathcal{U} : f(\mat{X}) \text{ violates (a) or (b)}\}$.
\end{definition}

\vspace{6pt}
\begin{lemma}[Non-Triviality of Constraints]\label{lem:nontriviality}
The following hold:
\begin{enumerate}
    \item[(i)] For distinct positions $(i,j) \neq (k,l)$, the function $f(\mat{X})_{ij} - f(\mat{X})_{kl}$ is not identically zero on $\mathcal{U}$.
    \item[(ii)] For each index $i$ and $(\numalt - 1)$-subset $\mathcal{S} \neq \mathcal{S}_i^*$, the function $\sum_{(k,l) \in \mathcal{S}} f(\mat{X})_{kl} + f(\mat{X})_{ii}$ is not identically zero on $\mathcal{U}$.
\end{enumerate}
\end{lemma}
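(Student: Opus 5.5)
The plan is to reduce both claims to linear algebra on the image of $f$. The key observation is that $f$ maps $\mathcal{U}$ onto $\mathbb{S}_{\numalt}^*$. Indeed, if $\mat{Y} \in \mathbb{S}_{\numalt}^*$, then $\mat{Y}\ones_{\numalt} = \vec{0}$ gives $\centermat\mat{Y}\centermat = \mat{Y}$. Together with $\tr(\mat{Y}) = \numalt \neq 0$, this yields $\mat{Y} \in \mathcal{U}$ and $f(\mat{Y}) = \mat{Y}$. Each function in (i) and (ii) has the form $L \circ f$ with $L$ a linear functional on $\mathbb{S}_{\numalt}$. So it suffices to exhibit one $\mat{Y} \in \mathbb{S}_{\numalt}^*$ with $L(\mat{Y}) \neq 0$.

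Next I pass from the affine set $\mathbb{S}_{\numalt}^*$ to the linear space $V = \{\mat{Y} \in \mathbb{S}_{\numalt} : \mat{Y}\ones_{\numalt} = \vec{0}\}$. The set $\mathbb{S}_{\numalt}^*$ is the affine hyperplane $\{\tr = \numalt\}$ inside $V$, and it does not contain $\mat{0}$. Its linear span is therefore all of $V$. Hence if $L$ vanished on $\mathbb{S}_{\numalt}^*$, it would vanish on $V$. It is enough to find some $\mat{Y} \in V$ with $L(\mat{Y}) \neq 0$.

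For test matrices I will use the graph-Laplacian generators $\mat{E}^{(ab)} = (\vec{e}_a - \vec{e}_b)(\vec{e}_a - \vec{e}_b)^\top$ for $a \neq b$. Each lies in $V$. Its entries are $1$ at positions $(a,a)$ and $(b,b)$, $-1$ at positions $(a,b)$ and $(b,a)$, and $0$ elsewhere.

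For (i), I will split into three cases. Throughout I assume $\numalt \geq 3$; for $\numalt = 2$ every element of $V$ has $Y_{11} = Y_{22}$, so (i) is false there and the statement must be read with $\numalt \ge 3$, which I would note explicitly.
\begin{enumerate}
\item Two diagonal positions $(i,i) \neq (k,k)$. Take $\mat{E}^{(im)}$ with $m \notin \{i,k\}$. Then $Y_{ii} - Y_{kk} = 1$.
\item A diagonal position $(i,i)$ and an off-diagonal position $(k,l)$. Take $\mat{E}^{(kl)}$. Then $Y_{kl} = -1$ while $Y_{ii} \in \{0,1\}$, so the difference is nonzero.
\item Two distinct off-diagonal positions $(i,j)$ and $(k,l)$, with $\{i,j\} \neq \{k,l\}$. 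Take $\mat{E}^{(ij)}$. Then $Y_{ij} = -1$ and $Y_{kl} = 0$.
\end{enumerate}

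For (ii), evaluate $L(\mat{Y}) = \sum_{(k,l)\in\mathcal{S}} Y_{kl} + Y_{ii}$ at $\mat{E}^{(ab)}$. The result is $-\mathbf{1}\{\{a,b\}\in\mathcal{S}\} + \mathbf{1}\{i \in \{a,b\}\}$. The set $\mathcal{S}_i^*$ consists of exactly the $\numalt - 1$ off-diagonal pairs containing $i$. Any other $(\numalt-1)$-subset $\mathcal{S} \neq \mathcal{S}_i^*$ of the same size must therefore contain some pair $\{a,b\}$ with $i \notin \{a,b\}$. Choosing that $\mat{E}^{(ab)}$ gives $L = -1 \neq 0$.

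The only delicate point is the surjectivity/span reduction and the small-$\numalt$ caveat. The case checks are routine.
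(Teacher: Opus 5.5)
Your argument is correct, and it takes a different route from the paper's.

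\textbf{The paper's route.} It stays in the ambient domain $\mathcal{U}$ and uses the family $\mat{X} = \mathrm{diag}(\lambda_1,\ldots,\lambda_{\numalt})$. It computes $[\centermat\mat{X}\centermat]_{ii} = \lambda_i(1-2/\numalt) + \bar{\lambda}/\numalt$ and $[\centermat\mat{X}\centermat]_{kl} = (\bar{\lambda}-\lambda_k-\lambda_l)/\numalt$. It then asserts that generic $\lambda$ makes all entries distinct, which gives (i). For (ii) it notes a nonzero coefficient on $\lambda_i$; that coefficient equals $(\numalt-1-c)/\numalt$, where $c < \numalt-1$ counts the pairs of $\mathcal{S}$ containing $i$.

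\textbf{Your route.} You use that $f$ fixes $\mathbb{S}_{\numalt}^*$ pointwise, pass to the span $V$, and evaluate on the rank-one Laplacians $\mat{E}^{(ab)}$. This replaces the double-centering computation and the genericity step with explicit one-line witnesses. It also makes explicit the equivalence between nontriviality on $\mathcal{U}$ and on $\mathbb{S}_{\numalt}^*$, which the paper later invokes in Theorem~\ref{thm:restricted-main}. The paper's family does give a single point at which all entries are simultaneously distinct, but the lemma does not need that.

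\textbf{The $\numalt=2$ caveat.} You are right, and this exposes a genuine slip in the paper, which assumes only $\numalt \geq 2$. For $\numalt=2$, $\mathbb{S}_2^*$ is the single matrix with rows $(1,-1)$ and $(-1,1)$. So $f$ is constant and $f(\mat{X})_{11} - f(\mat{X})_{22} \equiv 0$. Correspondingly, the paper's diagonal coefficient $1-2/\numalt$ vanishes at $\numalt = 2$, which is exactly where its distinctness claim breaks.

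\textbf{One point to make explicit.} State that positions are unordered (upper-triangular) pairs, as the context of Definition~\ref{def:genericity} requires. Otherwise $(i,j)$ versus $(j,i)$ would violate (i) trivially, and $\mathcal{S}_i^*$ would not be unique in (ii).
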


\begin{proof}
Consider diagonal $\mat{X} = \mathrm{diag}(\lambda_1, \ldots, \lambda_{\numalt})$ with distinct $\lambda_j$ and $\sum_j \lambda_j \neq 0$. Then $[\centermat\mat{X}\centermat]_{ii} = \lambda_i(1 - 2/\numalt) + \bar{\lambda}/\numalt$ and $[\centermat\mat{X}\centermat]_{ij} = (\bar{\lambda} - \lambda_i - \lambda_j)/\numalt$ for $i \neq j$. For generic $\lambda_j$, all entries of $f(\mat{X})$ are distinct, proving (i). For (ii), the linear combination for $\mathcal{S} \neq \mathcal{S}_i^*$ has nonzero coefficient on $\lambda_i$, so it is nontrivial.
\end{proof}

\vspace{6pt}
\begin{proposition}[Properties of $\exceptset$]\label{prop:E-properties}
The set $\exceptset$ is closed, $G_{\numalt}$-invariant, and has Lebesgue measure zero.
\end{proposition}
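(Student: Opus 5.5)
I would write $\exceptset$ as a finite union of sets of the form $\mathcal{U}^C$ and $\{\mat{X}\in\mathcal{U} : \phi(\mat{X})=0\}$. Here each $\phi$ is one of the finitely many functions appearing in Lemma~\ref{lem:nontriviality}. Specifically, $\phi$ is either $f(\mat{X})_{ij}-f(\mat{X})_{kl}$ for distinct positions, or $\sum_{(k,l)\in\mathcal{S}} f(\mat{X})_{kl}+f(\mat{X})_{ii}$ for a $(\numalt-1)$-subset $\mathcal{S}\neq\mathcal{S}_i^*$. One point needs care. Condition (b) fails precisely when some such alternative subset $\mathcal{S}$ also attains the sum, because $\mathcal{S}_i^*$ always attains it on $\mathrm{Im}(f)$ by the row-sum identity $f(\mat{X})\ones_\numalt=\vec{0}$. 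Condition (a) fails precisely when one of the difference functions vanishes. Hence the description of $\exceptset$ as this finite union is exact. The three properties then follow in turn: closedness, invariance, and measure zero.

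\textbf{Closedness.} Multiplying each $\phi$ by $t(\mat{X})=\tr(\centermat\mat{X}\centermat)$ gives the function $\tilde\phi(\mat{X})=t(\mat{X})\phi(\mat{X})$. This is a linear function of $\mat{X}$ on all of $\mathbb{S}_\numalt$, since $\numalt\,\centermat\mat{X}\centermat$ is linear. On $\mathcal{U}$ we have $\phi=0$ if and only if $\tilde\phi=0$. Therefore
\[
\exceptset = \{t=0\}\cup\bigcup_{\phi}\{\tilde\phi=0\},
\]
where the inclusion of points of $\{t=0\}$ in the second union is harmless because they already lie in $\mathcal{U}^C$. This is a finite union of zero sets of continuous (indeed linear) functions on $\mathbb{S}_\numalt$, so $\exceptset$ is closed.

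\textbf{Measure zero.} Each $\tilde\phi$ is linear and, by Lemma~\ref{lem:nontriviality}, not identically zero. Its zero set is therefore a proper hyperplane and has Lebesgue measure zero. By Proposition~\ref{prop:domain-measure}, $\{t=0\}$ is also a measure-zero hyperplane. A finite union of null sets is null, so $\exceptset$ has measure zero.

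\textbf{$G_\numalt$-invariance.} By Proposition~\ref{prop:domain-measure}, $\mathcal{U}$ is $G_\numalt$-invariant, so $\mathcal{U}^C$ is invariant as well. For $\mat{X}\in\mathcal{U}$ and $(c,\vec{u},\pi)\in G_\numalt$, Proposition~\ref{prop:f-and-G} gives
\[
f\bigl((c,\vec{u},\pi)\cdot\mat{X}\bigr)=\mat{P}_\pi f(\mat{X})\mat{P}_\pi^\top,
\]
so it suffices to check that conditions (a) and (b) are preserved under simultaneous row and column permutation. For (a), the permutation maps the entries $\{Y_{ij}: i\le j\}$ bijectively onto themselves as a multiset, so distinctness is preserved. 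For (b), the permutation induces a bijection on off-diagonal positions that carries $(\numalt-1)$-subsets to $(\numalt-1)$-subsets and sends diagonal index $i$ to $\pi(i)$. Uniqueness of the solving subset for index $i$ thus transfers to uniqueness for index $\pi(i)$.

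The only step I expect to need attention is the exactness of the finite-union description, in particular that failure of (b) is equivalent to the vanishing of some $\phi$ with $\mathcal{S}\ne\mathcal{S}_i^*$. This relies on $f(\mat{X})\in\mathbb{S}_\numalt^*$. The remaining steps are routine.
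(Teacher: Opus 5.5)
Your proof is correct. The $G_\numalt$-invariance step is the same as the paper's: reduce via Proposition~\ref{prop:f-and-G} to $\mat{P}_\pi f(\mat{X})\mat{P}_\pi^\top$, and note that conjugation by a permutation preserves (a) and (b). For closedness and measure zero, however, you take a more elementary route.

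The paper argues closedness by asserting that (a) and (b) are open conditions. It gets measure zero by observing that the entries of $f$ are rational, hence analytic, on $\mathcal{U}$, and then invoking the result that zero sets of nontrivial analytic functions are null.

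You instead clear the common denominator $t(\mat{X})=\tr(\centermat\mat{X}\centermat)$. Every defining constraint then becomes a linear functional on $\mathbb{S}_\numalt$, and $\exceptset$ is exhibited exactly as a finite union of hyperplanes. Closedness and nullity follow at once, with no appeal to analytic-variety results. Your observation that the canonical row subset always attains the sum on $\mathrm{Im}(f)$ also fills in something the paper leaves implicit: why the ``exactly one'' requirement in (b) is an open condition. Uniqueness of a solution is not open in general; it is open here only because existence is automatic.

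What the paper's analytic formulation buys is generality. It does not rely on the constraints becoming linear after clearing a denominator, so the same template would survive if the invariants were built from nonlinear features such as correlations or square roots rather than raw entries.

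One caveat applies to both arguments: they rest on Lemma~\ref{lem:nontriviality}, which implicitly requires $\numalt\ge 3$. For $\numalt=2$, $\mathbb{S}_{2}^*$ consists of the single matrix with diagonal $(1,1)$ and off-diagonal entry $-1$. Condition (a) then fails everywhere and $\exceptset=\mathbb{S}_{2}$. This is a defect of the statement as posed for $\numalt\ge 2$, not of your reasoning.
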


\begin{proof}
\textbf{Closed:} Conditions (a) and (b) are open (strict inequalities and unique subset-sums are preserved under small perturbations), so $\exceptset^C$ is open.

\textbf{$G_{\numalt}$-invariant:} By Proposition~\ref{prop:f-and-G}, scaling and null-space translation leave $f(\mat{X})$ unchanged. Permutation conjugates $f(\mat{X})$, preserving distinctness and subset-sum structure.

\textbf{Measure zero:} Each entry $f(\mat{X})_{ij}$ is a rational (hence analytic) function on $\mathcal{U}$ by Proposition~\ref{prop:explicit-formula}. By Lemma~\ref{lem:nontriviality}, the constraints defining violations of (a) and (b) are nontrivial analytic functions. Their zero sets have measure zero by \citet{Mityagin2020}, and $\exceptset$ is a finite union of such sets.
\end{proof}

\subsection{Basic Result for the Full Space}\label{sec:basic}

\vspace{6pt}
\begin{theorem}[Main Theorem]\label{thm:main}
For any $\mat{X}, \mat{Y} \in \mathbb{S}_{\numalt} \setminus \exceptset$: $g\left\{f(\mat{X})\right\} = g\left\{f(\mat{Y})\right\}$ implies $\mat{X}$ and $\mat{Y}$ are in the same $G_{\numalt}$-orbit.
\end{theorem}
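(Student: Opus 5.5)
The plan is to chain together the three ingredients already established: the Recovery Lemma (Lemma~\ref{lem:recovery}) on the normalized subspace $\mathbb{S}_{\numalt}^*$, the equivariance and invariance properties of $f$ (Proposition~\ref{prop:f-and-G}), and the Preimage Characterization (Lemma~\ref{lem:preimage}).

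Let $\mat{X},\mat{Y}\in\mathbb{S}_{\numalt}\setminus\exceptset$ with $g\{f(\mat{X})\}=g\{f(\mat{Y})\}$. Since $\exceptset\supseteq\mathcal{U}^C$, both matrices lie in $\mathcal{U}$, so $f(\mat{X})$ and $f(\mat{Y})$ are defined. By Definition~\ref{def:exceptional}, both images satisfy conditions (a) and (b).

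First, I will check that $f$ maps $\mathcal{U}$ into $\mathbb{S}_{\numalt}^*$. This is routine: $\centermat\ones_{\numalt}=\vec{0}$ gives $f(\mat{X})\ones_{\numalt}=\vec{0}$, and $\tr f(\mat{X})=\numalt$ directly from the normalization. The Recovery Lemma then applies to both $f(\mat{X})$ and $f(\mat{Y})$. Since their $g$-values coincide, I obtain a permutation $\pi\in S_{\numalt}$ with
\[
f(\mat{X})=\mat{P}_\pi f(\mat{Y})\mat{P}_\pi^\top.
\]
The one subtle point is the closing sentence of the Recovery Lemma. I need the canonical reconstruction to be a function of $g$ alone. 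Concretely, Steps 1--3 produce a matrix $\mat{Z}$ depending only on the multisets, together with permutations $\sigma,\tau$ such that $f(\mat{X})=\mat{P}_\sigma\mat{Z}\mat{P}_\sigma^\top$ and $f(\mat{Y})=\mat{P}_\tau\mat{Z}\mat{P}_\tau^\top$. Then $\pi$ is obtained by composing $\sigma$ with $\tau^{-1}$. I will spell this out, since the lemma states it somewhat informally.

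Next, Proposition~\ref{prop:f-and-G}(i) gives $\mat{P}_\pi f(\mat{Y})\mat{P}_\pi^\top=f(\mat{P}_\pi\mat{Y}\mat{P}_\pi^\top)$, so $f(\mat{X})=f(\mat{Y}')$ with $\mat{Y}'=\mat{P}_\pi\mat{Y}\mat{P}_\pi^\top$. Here $\mat{Y}'\in\mathcal{U}$ by the $G_{\numalt}$-invariance of $\mathcal{U}$ (Proposition~\ref{prop:domain-measure}). Lemma~\ref{lem:preimage} then yields $c\in\R^*$ and $\vec{w}\in\R^{\numalt}$ with
\[
\mat{X}=c\,\mat{P}_\pi\mat{Y}\mat{P}_\pi^\top+\vec{w}\ones_{\numalt}^\top+\ones_{\numalt}\vec{w}^\top=(c,\vec{w},\pi)\cdot\mat{Y}
\]
in the notation of Definition~\ref{def:group-action}. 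This exhibits $\mat{X}$ and $\mat{Y}$ in the same $G_{\numalt}$-orbit.

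I expect the main obstacle to be rigor in the reconstruction step rather than the algebra. I need to confirm three things:
\begin{itemize}
\item Step 1 uses distinctness of the diagonal entries, which condition (a) supplies, so the canonical labeling is well defined.
\item The intersection $\mathcal{R}_{(i)}\cap\mathcal{R}_{(j)}$ is a singleton. This holds because, under (a), the multiset $\mathcal{O}_{\numalt}$ is a genuine set, so subsets are identified with position sets and two rows share exactly one position.
\item The identical reconstruction procedure applied to both matrices gives the same $\mat{Z}$.
\end{itemize}
Once this is made explicit, the rest is a direct composition of the cited results.
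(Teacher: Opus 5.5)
Your proposal is correct and follows the paper's own proof step for step. You use the Recovery Lemma to obtain a permutation conjugating $f(\mat{X})$ and $f(\mat{Y})$, then the equivariance in Proposition~\ref{prop:f-and-G}(i), then Lemma~\ref{lem:preimage} to produce $(c,\vec{w},\pi)$; your extra checks ($f(\mathcal{U})\subseteq\mathbb{S}_{\numalt}^*$, $\mat{P}_\pi\mat{Y}\mat{P}_\pi^\top\in\mathcal{U}$, and the reconstruction being a canonical function of $g$) only make explicit what the paper's three-line argument leaves implicit.
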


\begin{proof}
By Lemma~\ref{lem:recovery}, $g\left\{f(\mat{X})\right\} = g\left\{f(\mat{Y})\right\}$ implies $f(\mat{Y}) = \mat{P}_\pi f(\mat{X}) \mat{P}_\pi^\top$ for some $\pi \in S_{\numalt}$. By Proposition~\ref{prop:f-and-G}(i), $f(\mat{P}_\pi \mat{X} \mat{P}_\pi^\top) = f(\mat{Y})$. By Lemma~\ref{lem:preimage}, $\mat{Y} = c\,\mat{P}_\pi \mat{X} \mat{P}_\pi^\top + \vec{w}\ones_{\numalt}^\top + \ones_{\numalt}\vec{w}^\top$ for some $c \in \R^*$, $\vec{w} \in \R^{\numalt}$.
\end{proof}

\subsection{Restriction to Centered, Trace-Normalized Matrices}\label{sec:restriction}

\vspace{6pt}
\begin{proposition}[Properties of $\mathbb{S}_{\numalt}^*$]\label{prop:restricted-properties}
The subspace $\mathbb{S}_{\numalt}^*$ is a nonempty, closed affine subspace of dimension $(\numalt + 1)(\numalt - 2)/2$, satisfies $\mathbb{S}_{\numalt}^* = \mathrm{Im}(f)$, and is $S_{\numalt}$-invariant.
\end{proposition}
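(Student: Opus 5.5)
I would verify the four claims in turn: nonemptiness and closedness, the affine structure and its dimension, the identity $\mathbb{S}_{\numalt}^* = \mathrm{Im}(f)$, and $S_{\numalt}$-invariance.

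\textbf{Nonempty, closed, affine.} The set $\mathbb{S}_{\numalt}^*$ is cut out of the vector space $\mathbb{S}_{\numalt}$ by finitely many linear equations. These are the $\numalt$ row-sum equations $\mat{Y}\ones_{\numalt}=\vec{0}$ and the single trace equation $\tr(\mat{Y})=\numalt$. It is therefore an affine subspace (possibly empty) and closed.

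Nonemptiness follows from an explicit element. Take $\mat{Y}_0 = \frac{\numalt}{\numalt-1}\centermat$. It is symmetric, and $\centermat\ones_{\numalt}=\vec{0}$ by Proposition~\ref{prop:centering-properties}. Its trace is $\tr(\centermat)=\numalt-1$ times $\frac{\numalt}{\numalt-1}$, which equals $\numalt$.

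\textbf{Dimension.} The direction space is $V=\{\mat{Y}\in\mathbb{S}_{\numalt}: \mat{Y}\ones_{\numalt}=\vec{0},\ \tr(\mat{Y})=0\}$. Symmetric matrices with $\mat{Y}\ones_{\numalt}=\vec{0}$ are exactly those of the form $\centermat\mat{Z}\centermat$. Equivalently, they are symmetric operators on the $(\numalt-1)$-dimensional space $\ones_{\numalt}^\perp$ (extended by zero), so this space has dimension $\numalt(\numalt-1)/2$.

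The trace functional is nonzero on it, since $\tr(\centermat)\neq 0$. Hence $\dim V = \numalt(\numalt-1)/2 - 1 = (\numalt^2-\numalt-2)/2 = (\numalt+1)(\numalt-2)/2$.

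\textbf{Image of $f$.} For the inclusion $\mathrm{Im}(f)\subseteq\mathbb{S}_{\numalt}^*$, take $\mat{X}\in\mathcal{U}$. Then $f(\mat{X})$ is symmetric, and $f(\mat{X})\ones_{\numalt} = \numalt\,\centermat\mat{X}\centermat\ones_{\numalt}/t = \vec{0}$. Its trace is $\numalt\,t/t=\numalt$, where $t=\tr(\centermat\mat{X}\centermat)\neq 0$.

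For the reverse inclusion, take $\mat{Y}\in\mathbb{S}_{\numalt}^*$. Since $\mat{Y}\ones_{\numalt}=\vec{0}$ and $\mat{Y}$ is symmetric, $\centermat\mat{Y}=\mat{Y}=\mat{Y}\centermat$, so $\centermat\mat{Y}\centermat=\mat{Y}$. Then $\tr(\centermat\mat{Y}\centermat)=\numalt\neq0$, so $\mat{Y}\in\mathcal{U}$, and $f(\mat{Y})=\numalt\mat{Y}/\numalt=\mat{Y}$. So $f$ is the identity on $\mathbb{S}_{\numalt}^*$, which gives surjectivity onto it.

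\textbf{$S_{\numalt}$-invariance.} For a permutation $\pi$, $\mat{P}_\pi\mat{Y}\mat{P}_\pi^\top$ is symmetric. Using $\mat{P}_\pi^\top\ones_{\numalt}=\ones_{\numalt}$, we get $\mat{P}_\pi\mat{Y}\mat{P}_\pi^\top\ones_{\numalt}=\mat{P}_\pi\mat{Y}\ones_{\numalt}=\vec{0}$. The trace is preserved by cyclicity. Alternatively, invariance follows from Proposition~\ref{prop:f-and-G}(i) combined with $\mathbb{S}_{\numalt}^*=\mathrm{Im}(f)$.

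\textbf{Main obstacle.} The only step needing care is the dimension count, specifically the identification of the centered symmetric matrices with $\mathrm{Sym}(\ones_{\numalt}^\perp)$. I would justify it via an orthonormal basis $\mat{Q}$ of $\ones_{\numalt}^\perp$: the map $\mat{S}\mapsto \mat{Q}\mat{S}\mat{Q}^\top$ is a linear isomorphism from $(\numalt-1)\times(\numalt-1)$ symmetric matrices onto this space.
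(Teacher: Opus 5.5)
Your proof is correct and follows the paper's argument for $\mathbb{S}_{\numalt}^* = \mathrm{Im}(f)$ and for $S_{\numalt}$-invariance essentially line by line. The only difference is the dimension count: the paper subtracts $\numalt+1$ constraints from $\dim \mathbb{S}_{\numalt} = \numalt(\numalt+1)/2$ and tacitly assumes they are independent and consistent, whereas your identification of centered symmetric matrices with symmetric operators on $\ones_{\numalt}^\perp$ (via $\mat{S}\mapsto\mat{Q}\mat{S}\mat{Q}^\top$), together with the explicit witness $\frac{\numalt}{\numalt-1}\centermat$, actually justifies both that count and nonemptiness.
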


\begin{proof}
The dimension follows from $\dim(\mathbb{S}_{\numalt}) - (\numalt + 1) = \numalt(\numalt + 1)/2 - \numalt - 1 = (\numalt + 1)(\numalt - 2)/2$. For $\mathbb{S}_{\numalt}^* = \mathrm{Im}(f)$: if $\mat{X} \in \mathcal{U}$, then $f(\mat{X})\ones_{\numalt} = \vec{0}$ and $\tr(f(\mat{X})) = \numalt$, so $\mathrm{Im}(f) \subseteq \mathbb{S}_{\numalt}^*$; conversely, if $\mat{Y} \in \mathbb{S}_{\numalt}^*$, then $\centermat\mat{Y}\centermat = \mat{Y}$ and $f(\mat{Y}) = \mat{Y}$. For $S_{\numalt}$-invariance: $(\mat{P}_\pi\mat{Y}\mat{P}_\pi^\top)\ones_{\numalt} = \mat{P}_\pi\mat{Y}\ones_{\numalt} = \vec{0}$ and the trace is preserved.
\end{proof}
\vspace{12pt}

\begin{definition}[Restricted Exceptional Set]\label{def:restricted-exceptional}
Define $\exceptset^* = \{\mat{Y} \in \mathbb{S}_{\numalt}^* : \mat{Y} \text{ violates (a) or (b)}\}$.
\end{definition}
\vspace{6pt}

\begin{theorem}[Generic Separation on $\mathbb{S}_{\numalt}^*$]\label{thm:restricted-main}
The set $\exceptset^*$ is closed, $S_{\numalt}$-invariant, and has measure zero in $\mathbb{S}_{\numalt}^*$. For any $\mat{Y}_1, \mat{Y}_2 \in \mathbb{S}_{\numalt}^* \setminus \exceptset^*$: $g(\mat{Y}_1) = g(\mat{Y}_2)$ implies $\mat{Y}_1$ and $\mat{Y}_2$ are in the same $S_{\numalt}$-orbit.
\end{theorem}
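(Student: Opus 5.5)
The plan is to prove the three structural properties of $\exceptset^*$ first and then obtain separation directly from Lemma~\ref{lem:recovery}.

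\textbf{Closedness.} I will show that the complement $\mathbb{S}_{\numalt}^* \setminus \exceptset^*$ is open in $\mathbb{S}_{\numalt}^*$. Condition (a) is a finite conjunction of strict inequalities $Y_{ij} \neq Y_{kl}$ between continuous coordinate functions, so it is an open condition. Condition (b) requires that $\sum_{(k,l)\in\mathcal{S}} Y_{kl} + Y_{ii} \neq 0$ for every $(\numalt-1)$-subset $\mathcal{S} \neq \mathcal{S}_i^*$. On $\mathbb{S}_{\numalt}^*$ the canonical subset $\mathcal{S}_i^*$ always attains the required sum, because $\mat{Y}\ones_{\numalt} = \vec{0}$. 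Hence condition (b) is again a finite conjunction of nonvanishing linear functionals, and so it is open.

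\textbf{$S_{\numalt}$-invariance.} Conjugating by $\mat{P}_\pi$ permutes the entry positions. Such a relabeling preserves distinctness of the entries. It also maps $(\numalt-1)$-subsets bijectively to $(\numalt-1)$-subsets, sending $\mathcal{S}_i^*$ to $\mathcal{S}_{\pi(i)}^*$. Therefore conditions (a) and (b) are preserved, and $\mathbb{S}_{\numalt}^*$ itself is $S_{\numalt}$-invariant by Proposition~\ref{prop:restricted-properties}.

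\textbf{Measure zero.} This is the main point, since Proposition~\ref{prop:E-properties} concerns measure in $\mathbb{S}_{\numalt}$ and not in the lower-dimensional affine space $\mathbb{S}_{\numalt}^*$. The set $\exceptset^*$ is a finite union of zero sets of affine functionals restricted to $\mathbb{S}_{\numalt}^*$. It therefore suffices to show that none of these functionals is identically zero on $\mathbb{S}_{\numalt}^*$; each zero set is then a proper affine hyperplane of $\mathbb{S}_{\numalt}^*$ and has measure zero there.

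For nontriviality I will use $\mathbb{S}_{\numalt}^* = \mathrm{Im}(f)$ from Proposition~\ref{prop:restricted-properties}. If a functional $L$ vanished on all of $\mathbb{S}_{\numalt}^*$, then $L\circ f$ would vanish on all of $\mathcal{U}$. This contradicts Lemma~\ref{lem:nontriviality}, whose witnesses $f(\mathrm{diag}(\lambda))$ lie in $\mathbb{S}_{\numalt}^*$. I expect the delicate step to be part (ii) of that lemma, namely that the $\lambda_i$ coefficient really is nonzero for every $\mathcal{S}\neq\mathcal{S}_i^*$. If it is not clear, I would instead test directly on explicit elements of $\mathbb{S}_{\numalt}^*$, such as $\centermat\,\mathrm{diag}(\lambda)\centermat$ scaled to trace $\numalt$, and compute the subset-sum linearly in $\lambda$. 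Some subset distinct from $\mathcal{S}_i^*$ omits a pair $(i,m)$ and includes a pair $(k,l)$ not involving $i$, and this produces a nonconstant dependence on $\lambda$. The degenerate case $\numalt=2$ is handled separately: there $\mathbb{S}_{2}^*$ is a single point. Note that condition (a) then fails, because $Y_{11}=Y_{22}$, so $\exceptset^*$ is the whole (zero-dimensional) space. I would therefore state the theorem for $\numalt\ge3$, or remark that in this case the conclusion is trivially true, since any two points of a one-point space coincide.

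\textbf{Separation.} With the properties of $\exceptset^*$ in hand, separation is immediate from Lemma~\ref{lem:recovery}. Take $\mat{Y}_1,\mat{Y}_2 \in \mathbb{S}_{\numalt}^*\setminus\exceptset^*$ with $g(\mat{Y}_1)=g(\mat{Y}_2)$. Both matrices satisfy (a) and (b), so the canonical reconstruction produces the same labeled matrix $\mat{Y}_0$ from either one. Each $\mat{Y}_r$ equals $\mat{P}_{\pi_r}\mat{Y}_0\mat{P}_{\pi_r}^\top$ for the relabeling $\pi_r$ induced by its ordering of the diagonal. Hence $\mat{Y}_2 = \mat{P}_{\pi}\mat{Y}_1\mat{P}_\pi^\top$ with $\pi=\pi_2\pi_1^{-1}$, and the two matrices lie in the same $S_{\numalt}$-orbit.
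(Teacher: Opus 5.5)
Your proposal is correct and follows the paper's route. As there, closedness and $S_K$-invariance come as in Proposition~\ref{prop:E-properties}, measure zero comes from transporting the nontrivial constraints of Lemma~\ref{lem:nontriviality} to $\mathbb{S}_K^*$ via $\mathbb{S}_K^* = \mathrm{Im}(f)$, and separation comes from Lemma~\ref{lem:recovery}.

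You are also more careful than the paper's one-line proof in two places. You make explicit that measure is taken in the affine subspace $\mathbb{S}_K^*$, where the constraints are affine functionals that are not identically zero. You also correctly observe that for $K=2$ the space $\mathbb{S}_2^*$ is a single point violating condition (a), so the measure-zero claim genuinely needs $K\ge 3$.
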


\begin{proof}
The properties of $\exceptset^*$ follow from Proposition~\ref{prop:E-properties} restricted to $\mathbb{S}_{\numalt}^*$, using Lemma~\ref{lem:nontriviality} (which provides nontrivial constraints via $\mathbb{S}_{\numalt}^* = \mathrm{Im}(f)$). Separation follows from Lemma~\ref{lem:recovery}.
\end{proof}
\vspace{12pt}

\begin{corollary}[Universal Approximation on $\mathbb{S}_{\numalt}^*$]\label{cor:universal-approximation}
Any continuous, $S_{\numalt}$-invariant function on $\mathbb{S}_{\numalt}^* \setminus \exceptset^*$ can be uniformly approximated on compact subsets by an MLP taking the sorted diagonal and off-diagonal entries as input.
\end{corollary}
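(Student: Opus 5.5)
The plan follows the template sketched for Theorem~\ref{thm:universal}. First, I will encode the invariant $g$ by a continuous map $\Phi$. Concretely, let $\Phi(\mat{Y}) = (\mathrm{sort}(\mathcal{D}_{\numalt}(\mat{Y})), \mathrm{sort}(\mathcal{O}_{\numalt}(\mat{Y})))$, the vector of diagonal entries sorted in decreasing order concatenated with the sorted vector of off-diagonal entries. Sorting is a continuous, piecewise-linear map $\R^m \to \R^m$, so $\Phi$ is continuous on $\mathbb{S}_{\numalt}^*$. It is $S_{\numalt}$-invariant because conjugation by $\mat{P}_\pi$ only permutes the diagonal entries among themselves and the off-diagonal entries among themselves. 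Finally, $\Phi(\mat{Y}_1)=\Phi(\mat{Y}_2)$ holds iff $g(\mat{Y}_1)=g(\mat{Y}_2)$. By Theorem~\ref{thm:restricted-main}, $\Phi$ therefore separates $S_{\numalt}$-orbits on $\mathbb{S}_{\numalt}^* \setminus \exceptset^*$.

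Next, fix a compact $\mathcal{K} \subset \mathbb{S}_{\numalt}^*\setminus\exceptset^*$ and a continuous $S_{\numalt}$-invariant $F$. I will avoid working with quotient spaces directly. Since $F$ is constant on fibers of $\Phi|_{\mathcal{K}}$ (fibers are contained in orbits), it factors as $F = \tilde F \circ \Phi$ for some $\tilde F$ on the compact set $\Phi(\mathcal{K})\subset\R^m$. I expect this step to be the main obstacle, namely continuity of $\tilde F$. The argument is sequential compactness. Suppose $\Phi(\mat{Y}_n)\to\Phi(\mat{Y})$ with $\mat{Y}_n, \mat{Y}\in\mathcal{K}$ but $|F(\mat{Y}_n)-F(\mat{Y})|\ge\epsilon$. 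Pass to a subsequence $\mat{Y}_n\to\mat{Y}'\in\mathcal{K}$. Continuity of $\Phi$ gives $\Phi(\mat{Y}')=\Phi(\mat{Y})$, so $\mat{Y}'$ lies in the orbit of $\mat{Y}$ by separation. Invariance then gives $F(\mat{Y}')=F(\mat{Y})$, while continuity of $F$ gives $F(\mat{Y}_n)\to F(\mat{Y}')$, a contradiction. The crucial point is that the whole compact set avoids $\exceptset^*$, so separation applies at the limit point $\mat{Y}'$. This equivalently amounts to the Hausdorff-quotient/Stone--Weierstrass argument the paper cites: $\mathcal{K}/S_{\numalt}$ is homeomorphic to $\Phi(\mathcal{K})$ because a continuous bijection from a compact space to a Hausdorff space is a homeomorphism.

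Finally, I will extend $\tilde F$ by Tietze to a continuous function on a compact box containing $\Phi(\mathcal{K})$. Applying the classical MLP universal approximation theorem \citep{cybenko1989approximation, hornik1989multilayer} gives an MLP $N$ with $\sup_{\Phi(\mathcal{K})}|N-\tilde F|<\epsilon$. It follows that $\sup_{\mathcal{K}}|N\circ\Phi - F|<\epsilon$, and $N\circ\Phi$ is exactly an MLP taking the sorted diagonal and off-diagonal entries as input. One small check is needed: the orbit of a point in $\mathcal{K}$ need not lie in $\mathcal{K}$, but the argument never requires this, since separation is used only between points of $\mathcal{K}$, all of which lie outside $\exceptset^*$.
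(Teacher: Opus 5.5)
Your proof is correct, and it differs from the paper's in the middle step. The paper uses the same sorted diagonal and off-diagonal entries as invariants and gets orbit separation from Theorem~\ref{thm:restricted-main}. It then works on the orbit space $(\mathbb{S}_{\numalt}^*\setminus\exceptset^*)/S_{\numalt}$: it cites Hausdorffness of quotients by finite groups and applies Stone--Weierstrass to conclude that the algebra generated by the sorted-entry functions is dense. Only after that does it invoke the MLP universal approximation theorem to approximate those polynomial combinations.

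You bypass the orbit-space topology and Stone--Weierstrass. On a fixed compact $\mathcal{K}$ you factor $F=\tilde F\circ\Phi$ and prove $\tilde F$ is continuous on the compact image $\Phi(\mathcal{K})\subset\R^m$ by sequential compactness. This is equivalent to noting that a continuous surjection from a compact space onto a Hausdorff space is a quotient map. You then extend and apply the universal approximation theorem once.

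What each route buys:
\begin{itemize}
\item The paper's route gives the slightly stronger intermediate fact that polynomials in the sorted entries are already dense, which fits the Blum-Smith et al.\ framework for general separating families of invariants.
\item Your route is more elementary and self-contained. It needs only Hausdorffness of $\R^m$ rather than of the orbit space, and it uses a single approximation step.
\item Your route also makes explicit where $\mathcal{K}\cap\exceptset^*=\emptyset$ enters: separation must hold at the limit point $\mat{Y}'$. The paper's sketch leaves this implicit, and it also silently passes to the compact image of $\mathcal{K}$ in the quotient, which your argument never needs.
\end{itemize}
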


\begin{proof}
Following \citet{blumsmith2025machine}, encode $g$ via continuous functions: $g^d_k(\mat{Y}) =$ the $k$th largest diagonal entry, and $g^o_\ell(\mat{Y}) =$ the $\ell$th largest off-diagonal entry. These are $S_{\numalt}$-invariant and separate points in $(\mathbb{S}_{\numalt}^* \setminus \exceptset^*)/S_{\numalt}$ by Theorem~\ref{thm:restricted-main}. Since $S_{\numalt}$ is finite (hence compact), the quotient is Hausdorff \citep{bredon1972introduction}. By Stone--Weierstrass, the algebra generated by these functions is dense. By the universal approximation theorem \citep{cybenko1989approximation, hornik1989multilayer}, MLPs can approximate any continuous function of these inputs.
\end{proof}

\subsection{Restriction to Submatrices}\label{sec:submatrices}

For a distinguished index $j$, we develop an invariant separating $\permgroupsub$-orbits, where $\permgroupsub = \{\pi \in S_{\numalt} : \pi(j) = j\}$.
\vspace{20pt}

\begin{definition}[Submatrix Invariant $g_j$]\label{def:g-j}
For $\mat{Y} \in \mathbb{S}_{\numalt}^*$ and index $j$, define $g_j(\mat{Y}) = (\mathcal{P}_j(\mat{Y}), \mathcal{O}_j(\mat{Y}))$ where:
\begin{align*}
\mathcal{P}_j(\mat{Y}) &= \{\!\{ (Y_{kk}, Y_{jk}) : k \neq j \}\!\}, \\
\mathcal{O}_j(\mat{Y}) &= \{\!\{ Y_{kl} : 1 \leq k < l \leq \numalt, \, k \neq j, \, l \neq j \}\!\}.
\end{align*}
\end{definition}
\vspace{-9pt}
The multiset $\mathcal{P}_j$ couples each diagonal entry $Y_{kk}$ (for $k \neq j$) with the covariance $Y_{jk}$. The multiset $\mathcal{O}_j$ contains off-diagonal entries of $\mat{Y}_{-j,-j}$. Note that $Y_{jj} = \numalt - \sum_{k \neq j} Y_{kk}$ is implicitly recoverable.

\vspace{6pt}
\begin{definition}[Genericity Conditions for Submatrices]\label{def:genericity-submatrix}
We say $\mat{Y} \in \mathbb{S}_{\numalt}^*$ satisfies (a$_j$) if all entries except possibly $Y_{jj}$ are distinct, and (b$_j$) if for each $k \neq j$, there is a unique $(\numalt - 2)$-subset $\mathcal{S}$ of off-diagonal positions in $\mat{Y}_{-j,-j}$ with $\sum_{(k,m) \in \mathcal{S}} Y_{km} = -Y_{kk} - Y_{jk}$.
\end{definition}

The target sum $-Y_{kk} - Y_{jk}$ arises from the row-sum constraint: $Y_{kk} + Y_{kj} + \sum_{l \neq k, l \neq j} Y_{kl} = 0$.

\vspace{6pt}
\begin{theorem}[Generic Separation via $g_j$]\label{thm:submatrix-main}
Define $\exceptset_j = \{\mat{Y} \in \mathbb{S}_{\numalt}^* : \mat{Y} \text{ violates } (\text{a}_j) \text{ or } (\text{b}_j)\}$. Then $\exceptset_j$ is closed, $\permgroupsub$-invariant, and has measure zero. For $\mat{Y}_1, \mat{Y}_2 \in \mathbb{S}_{\numalt}^* \setminus \exceptset_j$: $g_j(\mat{Y}_1) = g_j(\mat{Y}_2)$ implies $\mat{Y}_1, \mat{Y}_2$ are in the same $\permgroupsub$-orbit.
\end{theorem}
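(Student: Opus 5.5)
The proof mirrors that of Theorem~\ref{thm:restricted-main}, with Lemma~\ref{lem:recovery} replaced by a submatrix version. It has three parts: separation via an invariant reconstruction, the topological and invariance properties of $\exceptset_j$, and the measure-zero claim.

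\textbf{Separation.} Suppose $\mat{Y} \in \mathbb{S}_{\numalt}^* \setminus \exceptset_j$. We reconstruct $\mat{Y}$ from $g_j(\mat{Y})$, up to a relabeling fixing $j$, in four steps.
\begin{enumerate}
\item By (a$_j$), the pairs $(Y_{kk}, Y_{jk})$ for $k \neq j$ are pairwise distinct (indeed, even their first coordinates are). Sorting them lexicographically assigns canonical labels to the rivals $k \neq j$, and simultaneously recovers $Y_{kk}$ and $Y_{jk}$ for each labeled rival.
\item For each rival $k$, condition (b$_j$) identifies a unique $(\numalt-2)$-subcollection $\mathcal{R}_k \subset \mathcal{O}_j$ whose sum is $-Y_{kk}-Y_{jk}$. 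The row-sum constraint $\mat{Y}\ones_{\numalt} = \vec{0}$ shows that the true row-$k$ entries $\{Y_{kl} : l \neq k, j\}$ have this sum. Uniqueness therefore forces $\mathcal{R}_k$ to equal that set of positions, and hence that set of values, since by (a$_j$) the values in $\mathcal{O}_j$ are distinct and sets of values correspond bijectively to sets of positions.
\item For rivals $k \neq l$, the intersection $\mathcal{R}_k \cap \mathcal{R}_l$ is exactly $\{Y_{kl}\}$, again by distinctness. This recovers every entry of $\mat{Y}_{-j,-j}$.
\item Finally, $Y_{jj} = \numalt - \sum_{k\neq j} Y_{kk}$, or equivalently $Y_{jj} = -\sum_{k \neq j} Y_{jk}$.
\end{enumerate}
If $g_j(\mat{Y}_1) = g_j(\mat{Y}_2)$ with both matrices outside $\exceptset_j$, the two reconstructions coincide. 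Composing the two canonical labelings then gives a $\pi \in \permgroupsub$ with $\mat{Y}_2 = \mat{P}_\pi \mat{Y}_1 \mat{P}_\pi^\top$.

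\textbf{Closedness and invariance.} Condition (a$_j$) consists of finitely many strict inequalities, so it is open. Condition (b$_j$) says that finitely many linear forms, one for each non-canonical subset, are nonzero; this is also an open condition. Hence $\exceptset_j$ is closed. For invariance, conjugation by $\mat{P}_\pi$ with $\pi \in \permgroupsub$ merely permutes the entries of $\mat{Y}$. It maps positions of $\mat{Y}_{-j,-j}$ to positions of $\mat{Y}_{-j,-j}$, maps the pairs $(Y_{kk}, Y_{jk})$ among themselves, and maps the canonical row subsets to canonical row subsets. Both conditions are therefore preserved.

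\textbf{Measure zero (main obstacle).} The set $\exceptset_j$ is a finite union of zero sets of linear functionals restricted to the affine space $\mathbb{S}_{\numalt}^*$. Since a linear functional that does not vanish identically on an affine space has a zero set of measure zero there, it suffices to show that none of these functionals is identically zero on $\mathbb{S}_{\numalt}^*$. This is the delicate step, because the row-sum constraints make some entries linearly dependent on $\mathbb{S}_{\numalt}^*$.

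For (a$_j$), since $\mathbb{S}_{\numalt}^* = \mathrm{Im}(f)$, the witness from the proof of Lemma~\ref{lem:nontriviality} applies directly: take $f(\mathrm{diag}(\lambda))$ with generic $\lambda$, whose entries are all distinct.

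For (b$_j$), fix a rival $k$ and a non-canonical subset $\mathcal{S}$. Using the row-sum constraint, rewrite $-Y_{kk}-Y_{jk}$ as the canonical sum $\sum_{l \neq k,j} Y_{kl}$. The functional to check is then the difference of the indicator sums over $\mathcal{S}$ and over the canonical row-$k$ positions, which is a nonzero signed combination of off-diagonal positions in $\mat{Y}_{-j,-j}$. I would test it on the same witness: with $\mat{Y} \propto \centermat\,\mathrm{diag}(\lambda)\,\centermat$, the entry $Y_{ab}$ for $a \neq b$ equals, up to a common positive factor, $\bar\lambda - \lambda_a - \lambda_b$. The functional becomes the $\lambda$-linear form $\sum_{(a,b)\in\mathcal{S}}(\bar\lambda-\lambda_a-\lambda_b) - \sum_{l\neq k,j}(\bar\lambda-\lambda_k-\lambda_l)$. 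The $\bar\lambda$ terms cancel because both sums contain $\numalt-2$ terms. What remains is the difference of two degree vectors: the canonical subset contributes degree $\numalt-2$ at $k$, whereas $\mathcal{S}$ contains at most $\numalt-3$ positions incident to $k$, since it differs from the canonical subset. Hence the coefficient of $\lambda_k$ is nonzero. If the resulting linear form is still constant on the trace-normalized slice, for example because it is proportional to $\sum\lambda$, then perturb the witness to a generic $\mat{Y}$ in a neighbourhood, which shows the functional is nonconstant. Having established nontriviality, the measure-zero conclusion follows as in Proposition~\ref{prop:E-properties}, here by \citet{Mityagin2020} or directly by linearity.
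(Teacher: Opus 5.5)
Your proposal is correct and follows the paper's route. The paper's proof simply defers to Theorem~\ref{thm:restricted-main}, which supplies the open-condition, invariance and nontrivial-linear-constraint argument via the diagonal witness in $\mathrm{Im}(f)$, and to Lemma~\ref{lem:recovery}, which supplies the lexicographic labeling, unique subset sums and singleton intersections; your degree-vector computation fills in the (b$_j$) nontriviality check that the paper leaves implicit, and your closing hedge about the form being constant is unnecessary, since $\lambda_j$ has coefficient zero while $\lambda_k$ does not, and in any case only ``not identically zero'' is needed.
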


\begin{proof}
The proof follows Theorem~\ref{thm:restricted-main}. For separation, the recovery procedure parallels Lemma~\ref{lem:recovery}: order pairs in $\mathcal{P}_i$ lexicographically, identify each row's off-diagonal entries via the unique subset-sum, and recover shared entries as singleton intersections.
\end{proof}

\vspace{6pt}
\begin{corollary}[Universal Approximation via $g_j$]\label{cor:universal-approximation-submatrix}
Any continuous, $\permgroupsub$-invariant function on $\mathbb{S}_{\numalt}^* \setminus \exceptset_j$ can be uniformly approximated on compact subsets by an MLP taking the sorted pairs from $\mathcal{P}_j$ and sorted entries from $\mathcal{O}_j$ as input.
\end{corollary}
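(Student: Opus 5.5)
The plan is to mirror the proof of Corollary~\ref{cor:universal-approximation}, replacing Theorem~\ref{thm:restricted-main} with Theorem~\ref{thm:submatrix-main} and the group $S_{\numalt}$ with $\permgroupsub$. Fix $j$ and a compact set $\mathcal{C} \subset \mathbb{S}_{\numalt}^* \setminus \exceptset_j$. Let $F$ be continuous and $\permgroupsub$-invariant on $\mathbb{S}_{\numalt}^* \setminus \exceptset_j$. We may replace $\mathcal{C}$ by its $\permgroupsub$-saturation $\bigcup_{\pi} \mat{P}_\pi \mathcal{C} \mat{P}_\pi^\top$. This set is still compact, since it is a finite union of compact images, and it still avoids $\exceptset_j$, since $\exceptset_j$ is $\permgroupsub$-invariant by Theorem~\ref{thm:submatrix-main}.

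\textbf{Encoding the invariant.} We encode $g_j$ by continuous, $\permgroupsub$-invariant coordinate functions. For the off-diagonal part, let $g^o_\ell(\mat{Y})$ be the $\ell$th largest entry of $\mathcal{O}_j(\mat{Y})$. Sorting is continuous, so each $g^o_\ell$ is continuous. The pairs in $\mathcal{P}_j$ need more care, because lexicographic sorting of pairs is not continuous in general: it jumps when two first coordinates cross. Condition (a$_j$) saves us here. Off $\exceptset_j$, all diagonal entries $Y_{kk}$ with $k\neq j$ are distinct. So on $\mathbb{S}_{\numalt}^*\setminus\exceptset_j$ the lexicographic order reduces to ordering by $Y_{kk}$ alone, and that order is locally constant. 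Hence the sorted pairs $(Y_{(k)(k)}, Y_{j(k)})$ are continuous there. Alternatively, we can avoid the issue entirely by using the $\permgroupsub$-invariant polynomials $p_{a,b}(\mat{Y})=\sum_{k\neq j} Y_{kk}^a Y_{jk}^b$ for $a+b\le 2(\numalt-1)$. Multisymmetric power sums determine a multiset of points in $\R^2$, so they carry the same information.

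\textbf{Separation and Stone--Weierstrass.} By Theorem~\ref{thm:submatrix-main}, two points with equal encoding lie in the same $\permgroupsub$-orbit. Therefore the encoding functions separate points of the quotient $\mathcal{C}/\permgroupsub$. This quotient is compact, as the continuous image of $\mathcal{C}$. It is Hausdorff because $\permgroupsub$ is a finite group acting by homeomorphisms \citep{bredon1972introduction}. The functions, together with the constants, generate a point-separating subalgebra of $C(\mathcal{C}/\permgroupsub)$. By Stone--Weierstrass this subalgebra is dense. The function $F$ descends to a continuous $\tilde F$ on the quotient, so $\tilde F$ is uniformly approximated by polynomials in the encoding coordinates.

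\textbf{MLP step.} Let $\Gamma$ be the map taking $\mat{Y}$ to its encoding vector. The image $\Gamma(\mathcal{C})$ is compact in a Euclidean space. A polynomial $q$ with $\sup_{\mathcal{C}}|F-q\circ\Gamma|<\varepsilon/2$ is continuous on $\Gamma(\mathcal{C})$. By \citet{cybenko1989approximation, hornik1989multilayer}, an MLP $N$ satisfies $\sup_{\Gamma(\mathcal{C})}|q-N|<\varepsilon/2$, which gives the result.

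I expect the main obstacle to be the continuity of the sorted-pair encoding. Condition (a$_j$) resolves it on the complement of $\exceptset_j$, with the power-sum encoding as a fallback.
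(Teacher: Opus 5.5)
Your proposal is correct and follows the same route as the paper. The paper's proof simply defers to that of Corollary~\ref{cor:universal-approximation}: continuous sorted encodings of the invariant, orbit separation (here via Theorem~\ref{thm:submatrix-main}), a Hausdorff quotient by a finite group, Stone--Weierstrass, and then MLP universal approximation, with your remark that condition (a$_j$) makes the lexicographically sorted pairs continuous off $\exceptset_j$ filling in a detail the paper leaves implicit.
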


\begin{proof}
The proof is analogous to Corollary~\ref{cor:universal-approximation}, with pairs replacing diagonal entries.
\end{proof}

\subsection{Extension to Utility-Covariance Pairs: Global Invariants}\label{sec:utility-covariance-global}

We extend the theory to pairs $(\vec{v}, \mat{Y})$ consisting of a utility vector and covariance matrix.

\vspace{6pt}
\begin{definition}[Centered Utility-Covariance Space]\label{def:global-joint-space}
Define $\utilcovspace = \{(\vec{v}, \mat{Y}) \in \R^{\numalt} \times \mathbb{S}_{\numalt} : \sum_{i=1}^{\numalt} v_i = 0, \, \mat{Y} \ones_{\numalt} = \vec{0}, \, \tr(\mat{Y}) = \numalt\}$. Compare to \eqref{eq:utilcovspace} in the main text.\footnote{The text imposes $\covstar \succeq \mat{0}$ which is required for a valid covariance matrix. But the restriction $\covstar \succeq \mat{0}$ is not needed here as it does not provide a useful architectural simplification. It restricts the admissible input domain, but it does not create algebraic equalities that allow the network to ignore any particular part of $\covstar$.
The reason is that positive semidefiniteness is an inequality restriction ($\vec{a}^{\top}\covstar \vec{a} \geq 0$
for all $\vec{a}$) and does not impose additional
algebraic equalities among the entries of $\covstar$ beyond the centering and trace normalizations. 
Positive semidefiniteness does not imply, for example, that one row is determined by another row, or that some off-diagonal entries can be omitted from the architecture.}
\end{definition}

This space has dimension $(\numalt - 1) + (\numalt + 1)(\numalt - 2)/2 = (\numalt^2 + \numalt - 4)/2$. It is closed, Hausdorff, and $S_{\numalt}$-invariant under $\pi \cdot (\vec{v}, \mat{Y}) = (\mat{P}_\pi \vec{v}, \mat{P}_\pi \mat{Y} \mat{P}_\pi^\top)$.

\vspace{6pt}
\begin{definition}[Global Joint Invariant]\label{def:global-joint-g}
For $(\vec{v}, \mat{Y}) \in \utilcovspace$, define $\tilde{g}(\vec{v}, \mat{Y}) = (\mathcal{P}(\vec{v}, \mat{Y}), \mathcal{O}(\mat{Y}))$ where $\mathcal{P}(\vec{v}, \mat{Y}) = \{\!\{ (v_i, Y_{ii}) : i = 1, \ldots, \numalt \}\!\}$ and $\mathcal{O}(\mat{Y}) = \{\!\{ Y_{ij} : 1 \leq i < j \leq \numalt \}\!\}$.
\end{definition}

The key extension is that $\mathcal{P}$ contains \emph{pairs} $(v_i, Y_{ii})$ coupling utilities with variances.

\vspace{6pt}
\begin{definition}[Genericity Conditions]\label{def:global-genericity-joint}
We say $(\vec{v}, \mat{Y}) \in \utilcovspace$ satisfies ($\tilde{\text{a}}$) if the pairs $(v_i, Y_{ii})$ are pairwise distinct and the off-diagonal entries are pairwise distinct, and ($\tilde{\text{b}}$) if condition (b) holds for $\mat{Y}$.
\end{definition}

\vspace{6pt}
\begin{theorem}[Generic Separation on $\utilcovspace$]\label{thm:global-joint-main}
Define $\tilde{\exceptsetutil} = \{(\vec{v}, \mat{Y}) \in \utilcovspace : (\tilde{\text{a}}) \text{ or } (\tilde{\text{b}}) \text{ fails}\}$. Then $\tilde{\exceptsetutil}$ is closed, $S_{\numalt}$-invariant, and has measure zero. For $(\vec{v}_1, \mat{Y}_1), (\vec{v}_2, \mat{Y}_2) \in \utilcovspace \setminus \tilde{\exceptsetutil}$: $\tilde{g}(\vec{v}_1, \mat{Y}_1) = \tilde{g}(\vec{v}_2, \mat{Y}_2)$ implies the pairs are in the same $S_{\numalt}$-orbit.
\end{theorem}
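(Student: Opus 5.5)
The plan is to mirror the proof of Theorem~\ref{thm:restricted-main}, handling the three claims (closedness, invariance and measure zero of $\tilde{\exceptsetutil}$; then separation) in turn. The only genuinely new ingredient is that the diagonal multiset is replaced by the multiset of pairs $(v_i, Y_{ii})$. Condition (b) and the off-diagonal multiset $\mathcal{O}(\mat{Y})$ are unchanged.

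\textbf{Exceptional set.} Write $\tilde{\exceptsetutil}$ as a finite union of zero sets of polynomial (hence analytic) functions on the affine space $\utilcovspace$.
\begin{enumerate}
\item Failure of the pair-distinctness part of ($\tilde{\text{a}}$) for $i\neq k$ means $(v_i-v_k)^2+(Y_{ii}-Y_{kk})^2=0$. This contains the zero set of the linear function $v_i-v_k$.
\item Failure of off-diagonal distinctness means $Y_{ij}-Y_{kl}=0$ for some pair of distinct off-diagonal positions.
\item Failure of (b) means $\sum_{(k,l)\in\mathcal{S}}Y_{kl}+Y_{ii}=0$ for some $\mathcal{S}\neq\mathcal{S}_i^*$.
\end{enumerate}
Each defining function must be shown not to vanish identically on $\utilcovspace$.
\begin{itemize}
\item For the utility constraints, take $\vec v$ with distinct centered entries; these exist since $\numalt\ge2$.
\item For the matrix constraints, use Lemma~\ref{lem:nontriviality} together with $\mathbb{S}_{\numalt}^*=\mathrm{Im}(f)$ (Proposition~\ref{prop:restricted-properties}). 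Because $\utilcovspace$ is a product $\{\vec v:\sum v_i=0\}\times\mathbb{S}_{\numalt}^*$, a function of $\mat{Y}$ alone that is nontrivial on $\mathbb{S}_{\numalt}^*$ is nontrivial on $\utilcovspace$.
\end{itemize}
Measure zero then follows from \citet{Mityagin2020}. Closedness follows because the failure conditions are closed (equalities), or equivalently because the genericity conditions are open. $S_{\numalt}$-invariance holds because permutation maps pairs to pairs and conjugates $\mat{Y}$, preserving both distinctness and subset-sum uniqueness.

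\textbf{Separation.} Follow Lemma~\ref{lem:recovery}.
\begin{enumerate}
\item Since the pairs $(v_i,Y_{ii})$ are distinct, order them lexicographically. This assigns canonical labels $(1),\dots,(\numalt)$, and hence determines $\vec v$ and $\mathrm{diag}(\mat{Y})$ up to a common relabeling.
\item For each $(i)$, condition (b) gives the unique $(\numalt-1)$-subset $\mathcal{R}_{(i)}\subset\mathcal{O}$ summing to $-Y_{(i)(i)}$. Because of the row-sum constraint $\mat{Y}\ones_\numalt=\vec0$, this subset must be the multiset of row-$(i)$ off-diagonal entries.
\item Because off-diagonal entries are distinct, $\mathcal{R}_{(i)}\cap\mathcal{R}_{(j)}=\{Y_{(i)(j)}\}$ is a singleton, which recovers every entry.
\end{enumerate}
If both $(\vec v_1,\mat Y_1)$ and $(\vec v_2,\mat Y_2)$ are generic with equal $\tilde g$, the reconstruction yields the same labeled object $(\vec v^\circ,\mat Y^\circ)$. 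Hence there are permutations $\pi_1,\pi_2$ with $(\mat P_{\pi_r}\vec v_r,\mat P_{\pi_r}\mat Y_r\mat P_{\pi_r}^\top)=(\vec v^\circ,\mat Y^\circ)$, so the two pairs lie in the same $S_\numalt$-orbit.

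\textbf{Main obstacle.} I expect the delicate step to be the justification that the subset found in step 2 is the true row. Uniqueness in (b) is posited for the actual $\mat{Y}$, and the reconstruction must use only $\tilde g$. The argument is that the true row subset always attains the sum, and uniqueness is measured over positions. With distinct off-diagonal values, positions correspond bijectively to elements of $\mathcal{O}$, so "unique subset of positions" equals "unique submultiset", and both inputs pin down the same row subsets. I will state this bijection explicitly. One more point needs care: for the original degenerate case $\numalt=2$, $\mathcal{O}$ has one element and the checks are trivial, while for $\numalt=3$ Lemma~\ref{lem:nontriviality}(ii) still supplies a nontrivial $\mathcal S$.
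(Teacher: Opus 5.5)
Your proposal is correct and follows the paper's proof essentially step for step: lexicographic ordering of the pairs $(v_i,Y_{ii})$, recovery of each row as the unique subset of $\mathcal{O}$ with the prescribed sum, singleton intersections, and measure zero via nontriviality of $v_i-v_k$ together with Lemma~\ref{lem:nontriviality}. Your explicit position-to-value bijection spells out a detail the paper leaves implicit; one wording fix is that the pair-coincidence set $\{v_i=v_k,\ Y_{ii}=Y_{kk}\}$ is \emph{contained in} the hyperplane $\{v_i=v_k\}$ (which is what gives measure zero), not the other way around as you wrote.
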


\begin{proof}
The proof follows Theorem~\ref{thm:restricted-main}. The recovery procedure orders pairs lexicographically to assign labels, then recovers off-diagonal entries via subset-sum intersections. The measure-zero argument extends since $v_i - v_j$ is nontrivial on $\utilcovspace$.
\end{proof}

\subsection{Extension to Utility-Covariance Pairs: Per-Alternative Invariants}\label{sec:utility-covariance}

 This section establishes Theorem~\ref{thm:separation} (Generic Separation) and Theorem~\ref{thm:universal} (MLP Universal Approximation) from the main text.

\vspace{6pt}
\begin{definition}[Per-Alternative Invariant $\invarutil_j$]\label{def:joint-g-j}
For $(\vec{v}, \mat{Y}) \in \utilcovspace$ and index $j$, define $\invarutil_j(\vec{v}, \mat{Y}) = (\mathcal{T}_j(\vec{v}, \mat{Y}), \mathcal{O}_j(\mat{Y}))$ where:
\begin{align*}
\mathcal{T}_j(\vec{v}, \mat{Y}) &= \{\!\{ (v_k, Y_{kk}, Y_{jk}) : k \neq j \}\!\}, \\
\mathcal{O}_j(\mat{Y}) &= \{\!\{ Y_{kl} : 1 \leq k < l \leq \numalt, \, k \neq j, \, l \neq j \}\!\}.
\end{align*}
\end{definition}

This combines extensions from Sections~\ref{sec:submatrices} and~\ref{sec:utility-covariance-global}: $\mathcal{T}_j$ contains \emph{triples} $(v_k, Y_{kk}, Y_{jk})$ coupling each alternative $k$'s utility and variance with its covariance with $j$, while $\mathcal{O}_j$ contains off-diagonal entries of $\mat{Y}_{-j,-j}$. The values $v_j = -\sum_{k \neq j} v_k$ and $Y_{jj} = \numalt - \sum_{k \neq j} Y_{kk}$ are implicitly recoverable.

\vspace{6pt}
\begin{definition}[Genericity Conditions]\label{def:genericity-joint}
We say $(\vec{v}, \mat{Y}) \in \utilcovspace$ satisfies ($\tilde{\text{a}}_j$) if the triples $(v_k, Y_{kk}, Y_{jk})$ for $k \neq j$ are pairwise distinct and the off-diagonal entries of $\mat{Y}_{-j,-j}$ are pairwise distinct, and ($\tilde{\text{b}}_j$) if condition (b$_j$) holds.
\end{definition}

\vspace{6pt}
\begin{definition}[Exceptional Set]\label{def:exceptional-joint}
Define $\exceptsetutil_j = \{(\vec{v}, \mat{Y}) \in \utilcovspace : (\tilde{\text{a}}_j) \text{ or } (\tilde{\text{b}}_j) \text{ fails}\}$.
\end{definition}

\vspace{6pt}
\begin{proposition}[Properties of $\exceptsetutil_j$]\label{prop:joint-E-j-properties}
The set $\exceptsetutil_j$ is closed, $\permgroupsub$-invariant, and has measure zero in $\utilcovspace$.
\end{proposition}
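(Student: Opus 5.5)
We prove the three properties separately. Throughout we view $\utilcovspace$ as an affine subspace of $\R^{\numalt}\times\mathbb{S}_{\numalt}$ of dimension $(\numalt^2+\numalt-4)/2$ and use affine coordinates on it, so that every entry $v_k$ and $Y_{kl}$ is an affine (hence analytic) function on $\utilcovspace$.

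\textbf{Closedness.} We show the complement of $\exceptsetutil_j$ is open. Condition ($\tilde{\text{a}}_j$) is a finite conjunction of strict inequalities between continuous functions: triples differ in at least one coordinate, and off-diagonal entries of $\mat{Y}_{-j,-j}$ differ. Each such inequality persists under small perturbations. For ($\tilde{\text{b}}_j$), fix $k\neq j$. There are finitely many $(\numalt-2)$-subsets $\mathcal{S}$ of off-diagonal positions of $\mat{Y}_{-j,-j}$, and each gives a continuous function
\[
\Phi_{k,\mathcal{S}}(\vec v,\mat Y)=\sum_{(a,b)\in\mathcal{S}}Y_{ab}+Y_{kk}+Y_{jk}.
\]
The canonical subset $\mathcal{S}_k^*=\{(k,l):l\neq k,j\}$ satisfies $\Phi_{k,\mathcal{S}_k^*}\equiv 0$ on $\utilcovspace$ by the row-sum constraint. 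Uniqueness therefore means $\Phi_{k,\mathcal{S}}\neq 0$ for every $\mathcal{S}\neq\mathcal{S}_k^*$. This is again a finite set of strict nonvanishing conditions, which is open.

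\textbf{$\permgroupsub$-invariance.} Let $\pi$ fix $j$. The action $(\vec v,\mat Y)\mapsto(\mat P_\pi\vec v,\mat P_\pi\mat Y\mat P_\pi^\top)$ permutes the index set $\{k\neq j\}$. It therefore maps the triple indexed by $k$ to the triple indexed by $\pi(k)$, and it maps $\mat Y_{-j,-j}$ to a simultaneously permuted copy of itself. As a result, the multisets of triples and of off-diagonal entries are unchanged, so ($\tilde{\text{a}}_j$) is preserved. The family $\{\Phi_{k,\mathcal{S}}\}$ is also carried bijectively onto itself, with $\mathcal{S}_k^*\mapsto\mathcal{S}_{\pi(k)}^*$, so ($\tilde{\text{b}}_j$) is preserved. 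Since $\pi^{-1}$ also fixes $j$, the complement is invariant, and hence so is $\exceptsetutil_j$.

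\textbf{Measure zero.} We write $\exceptsetutil_j$ as a finite union of zero sets of the following affine functions on $\utilcovspace$:
\begin{itemize}
\item for each pair $k\neq k'$ of rivals, the triple coincidence set $\{v_k=v_{k'}\}\cap\{Y_{kk}=Y_{k'k'}\}\cap\{Y_{jk}=Y_{jk'}\}$, which is contained in $\{v_k-v_{k'}=0\}$;
\item the zero sets of $Y_{ab}-Y_{cd}$ for distinct off-diagonal positions in $\mat Y_{-j,-j}$;
\item the zero sets of $\Phi_{k,\mathcal{S}}$ for $\mathcal{S}\neq\mathcal{S}_k^*$.
\end{itemize}
By \citet{Mityagin2020}, each such set has measure zero provided its defining function is not identically zero on the connected set $\utilcovspace$. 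For an affine function, that measure-zero claim is even elementary: its zero set is a proper hyperplane.

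It remains to exhibit, for each function, one point of $\utilcovspace$ where it is nonzero. For $v_k-v_{k'}$, take $\vec v$ with $v_k=1$, $v_{k'}=-1$ and zeros elsewhere, together with any admissible $\mat Y$ (for instance $\frac{\numalt}{\numalt-1}\centermat$). For the $\mat Y$-only functions, we reuse the diagonal construction from Lemma~\ref{lem:nontriviality}. We take $\mat Y=f(\mathrm{diag}(\lambda))\in\mathbb{S}_{\numalt}^*$, using Proposition~\ref{prop:restricted-properties}, and pair it with $\vec v=\vec 0$. Its entries are $Y_{ab}\propto(\bar\lambda-\lambda_a-\lambda_b)/\numalt$ for $a\neq b$ and $Y_{kk}\propto\lambda_k(1-2/\numalt)+\bar\lambda/\numalt$. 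These are affine in $\lambda$ with a common positive normalizing factor, so we may argue about the coefficients of $\lambda$ directly.

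\textbf{Main obstacle.} The main obstacle is verifying that each $\Phi_{k,\mathcal{S}}$ with $\mathcal{S}\neq\mathcal{S}_k^*$ is nontrivial. We plan to compare the coefficients of $\lambda$. In $Y_{kk}+Y_{jk}$, the coefficient of $\lambda_k$ is $(1-2/\numalt)-1/\numalt+(\text{terms from }\bar\lambda)$. In $\sum_{\mathcal{S}}Y_{ab}$, the coefficient of $\lambda_k$ is $-m/\numalt+(\text{terms from }\bar\lambda)$, where $m$ is the number of pairs in $\mathcal{S}$ that contain $k$. Likewise, the coefficient of each $\lambda_l$ with $l\neq k,j$ is determined by the number of pairs in $\mathcal{S}$ that contain $l$. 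Since $\mathcal{S}_k^*$ makes the total vanish identically, $\Phi_{k,\mathcal{S}}=\Phi_{k,\mathcal{S}}-\Phi_{k,\mathcal{S}_k^*}$ is a signed difference of two $(\numalt-2)$-sums. In this difference the $\bar\lambda$ contributions cancel, leaving $-\frac1\numalt\sum_{l}(m_l(\mathcal{S})-m_l(\mathcal{S}_k^*))\lambda_l$, where $m_l(\cdot)$ counts the pairs containing $l$. If this vector of incidence-count differences is nonzero, we are done.

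The vector can vanish, however, when $\mathcal{S}$ has the same degree sequence as $\mathcal{S}_k^*$. Since $\mathcal{S}_k^*$ is a star, every pair in it contains $k$, so $m_k(\mathcal{S}_k^*)=\numalt-2$. If some pair of $\mathcal{S}$ avoids $k$, then $m_k(\mathcal{S})<\numalt-2$ and the difference is nonzero. If instead every pair of $\mathcal{S}$ contains $k$, then $\mathcal{S}$ consists of $\numalt-2$ distinct pairs $(k,l)$ with $l\neq k,j$, which forces $\mathcal{S}=\mathcal{S}_k^*$. Hence the difference is nonzero whenever $\mathcal{S}\neq\mathcal{S}_k^*$.

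For the distinctness functions $Y_{ab}-Y_{cd}$, the coefficient vectors $e_a+e_b$ and $e_c+e_d$ differ whenever $\{a,b\}\neq\{c,d\}$, so these are nontrivial as well. Having checked all cases, the finite-union argument completes the proof.
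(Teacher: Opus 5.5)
Your proof is correct and follows the same route as the paper. Closedness comes from the openness of ($\tilde{\text{a}}_j$) and ($\tilde{\text{b}}_j$), invariance holds because permutations fixing $j$ preserve the distinctness and subset-sum structure, and measure zero follows by covering the set with finitely many zero sets of affine functions, each shown nontrivial on the diagonal construction of Lemma~\ref{lem:nontriviality}. Your incidence-count argument supplies the detail the paper leaves implicit when it calls the subset-sum constraints nontrivial ``by arguments analogous to Lemma~\ref{lem:nontriviality}(ii)'': the coefficient of $\lambda_k$ in $\Phi_{k,\mathcal{S}}$ is proportional to $\numalt-2-m_k(\mathcal{S})$, which vanishes only when $\mathcal{S}=\mathcal{S}_k^*$.
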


\begin{proof}
The proof follows Proposition~\ref{prop:E-properties}. Closedness holds because ($\tilde{\text{a}}_j$) and ($\tilde{\text{b}}_j$) are open conditions. Invariance under $\permgroupsub$ holds because permutations fixing $j$ preserve distinctness and subset-sum structure. For measure zero: coincidence of two triples defines the intersection of three hyperplanes $\{v_k = v_l\} \cap \{Y_{kk} = Y_{ll}\} \cap \{Y_{jk} = Y_{jl}\}$, each nontrivial; the subset-sum constraints are similarly nontrivial by arguments analogous to Lemma~\ref{lem:nontriviality}(ii).
\end{proof}

\vspace{6pt}
\begin{lemma}[Recovery Lemma]\label{lem:recovery-joint}
If $(\vec{v}, \mat{Y}) \in \utilcovspace$ satisfies ($\tilde{\text{a}}_j$) and ($\tilde{\text{b}}_j$), then $\invarutil_j(\vec{v}, \mat{Y})$ determines $(\vec{v}, \mat{Y})$ up to $\permgroupsub$-orbit.
\end{lemma}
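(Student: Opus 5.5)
The plan is an explicit reconstruction argument modeled on Lemma~\ref{lem:recovery} and the recovery procedure sketched in Theorem~\ref{thm:submatrix-main}. We show that from $\invarutil_j(\vec{v},\mat{Y})$ alone one can build a canonically labeled pair $(\vec{v}^{\mathrm{can}},\mat{Y}^{\mathrm{can}})$ that is a relabeling of the rivals of $j$. Then two inputs with equal invariants and satisfying ($\tilde{\text{a}}_j$) and ($\tilde{\text{b}}_j$) produce the same canonical pair. Each input equals this pair up to some $\pi\in\permgroupsub$, so they share a $\permgroupsub$-orbit, because $\permgroupsub$ is a group.

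The reconstruction has four steps.

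\textbf{Step 1 (labeling).} By ($\tilde{\text{a}}_j$), the $\numalt-1$ triples in $\mathcal{T}_j$ are pairwise distinct, so lexicographic order is strict. Label the rivals $(1),\ldots,(\numalt-1)$ in this order. This gives $v_{(k)}$, $Y_{(k)(k)}$ and $Y_{j(k)}$, and fixes a bijection $\sigma$ from $\{k\neq j\}$ to the canonical labels. Set $j$'s canonical position to $j$ itself.

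\textbf{Step 2 (row identification).} For each $(k)$, compute the target $-Y_{(k)(k)}-Y_{j(k)}$ and let $\mathcal{R}_{(k)}\subset\mathcal{O}_j$ be the unique $(\numalt-2)$-sub-multiset with this sum. Existence comes from the row-sum constraint $\mat{Y}\ones_\numalt=\vec{0}$, applied to row $k$ with the $j$ entry removed. Uniqueness is condition (b$_j$).

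\textbf{Step 3 (entries).} For $(k)\neq(l)$, set $Y_{(k)(l)}$ to be the single element of $\mathcal{R}_{(k)}\cap\mathcal{R}_{(l)}$. It is a singleton because the off-diagonal entries of $\mat{Y}_{-j,-j}$ are distinct by ($\tilde{\text{a}}_j$), so $\mathcal{R}_{(k)}$ and $\mathcal{R}_{(l)}$ are genuine sets whose only shared position is $\{k,l\}$.

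\textbf{Step 4 (focal entries).} Recover $v_j=-\sum_{k\neq j}v_k$ and $Y_{jj}=\numalt-\sum_{k\neq j}Y_{kk}$ from the constraints defining $\utilcovspace$.

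This yields $(\vec{v}^{\mathrm{can}},\mat{Y}^{\mathrm{can}})=(\mat{P}_\pi\vec{v},\mat{P}_\pi\mat{Y}\mat{P}_\pi^\top)$ with $\pi$ extending $\sigma$ by $\pi(j)=j$. Hence $\pi\in\permgroupsub$.

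\textbf{Main obstacle (Step 2).} The delicate point is well-definedness of the choice in Step 2. Condition (b$_j$) is phrased in terms of subsets of \emph{positions}, while the decoder only sees the multiset $\mathcal{O}_j$ of values. Because entries of $\mat{Y}_{-j,-j}$ are distinct, positions and values correspond bijectively, so a subset of values is the same thing as a subset of positions. The decoder therefore selects exactly the true row-$k$ set, with no extra structural information needed.

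A second check is that the procedure is a function of $\invarutil_j$ alone. It uses only the sorted triples, the multiset $\mathcal{O}_j$, and the fixed constant $\numalt$. So two inputs with equal invariants that both satisfy the genericity conditions give identical outputs. Finally, the genericity conditions are themselves $\permgroupsub$-invariant by Proposition~\ref{prop:joint-E-j-properties}, so applying $\pi$ does not leave the good set.
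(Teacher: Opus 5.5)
Your proposal is correct and follows the paper's proof step for step. Both arguments label the rivals by ordering the triples lexicographically, and both identify each rival's row as the unique $(\numalt-2)$-subset of $\mathcal{O}_j$ summing to $-Y_{(k)(k)}-Y_{j(k)}$. Both then recover $Y_{(k)(l)}$ as the singleton intersection and obtain $v_j$ and $Y_{jj}$ from the normalization constraints. Your extra checks (that distinct values make value-subsets and position-subsets interchangeable, and that the reconstruction depends on $\invarutil_j$ alone) make explicit what the paper's terse proof leaves implicit.
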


\begin{proof}
The proof extends Lemma~\ref{lem:recovery} to triples. Order triples in $\mathcal{T}_j$ lexicographically to assign labels $(1), \ldots, (\numalt - 1)$. For each $(k)$, identify $\mathcal{R}_{(k)} \subset \mathcal{O}_j$ as the unique $(\numalt - 2)$-subset summing to $-Y_{(k)(k)} - Y_{j(k)}$. Recover $Y_{(k)(l)}$ as the singleton $\mathcal{R}_{(k)} \cap \mathcal{R}_{(l)}$. Recover $v_j$ and $Y_{jj}$ from the constraints.
\end{proof}

We now prove the main results from Section~\ref{sec:theory}.

\begin{proof}[Proof of Theorem~\ref{thm:separation}]
By Proposition~\ref{prop:joint-E-j-properties}, $\exceptsetutil_j$ is closed, $\permgroupsub$-invariant, and has measure zero.
Note that these properties continue to hold when we add the restriction $\covstar \succeq \mat{0}$ as in \eqref{eq:utilcovspace}.
For $(\vec{v}_1, \mat{Y}_1), (\vec{v}_2, \mat{Y}_2) \in \utilcovspace \setminus \exceptsetutil_j$ with $\invarutil_j(\vec{v}_1, \mat{Y}_1) = \invarutil_j(\vec{v}_2, \mat{Y}_2)$, Lemma~\ref{lem:recovery-joint} implies both pairs are determined up to $\permgroupsub$-orbit by this common value.
\end{proof}

\begin{proof}[Proof of Theorem~\ref{thm:universal}]

If $j=1$, then recover $v_1$ and $Y_{11} = Y_{j1}$ from the constraints.
Otherwise, there is exactly one element of $\overline{\mathcal{T}}_j(\util, \mat{Y})$ with $1_{\{k=1\}}=1$.
In this case, identify this element to recover $v_1$, $Y_{11}$, and $Y_{j1}$.
Encode this information via continuous functions:
\[
g_{0,1}^{\text{diag}} = v_1,\qquad g_{0,2}^{\text{diag}} = Y_{11},\qquad g_{0,3}^{\text{diag}} = Y_{j1}.
\]
Next, encode the remaining information in $\overline{\mathcal{T}}_j(\util, \mat{Y})$ by sorting the remaining tuples lexicographically and setting
\[
g_{k,1}^{\text{diag}} = v_{(k)},\qquad g_{k,2}^{\text{diag}} = Y_{(k)(k)},\qquad g_{k,3}^{\text{diag}} = Y_{j(k)}
\]
for $k=1,\ldots,\numalt-1-1_{\{j\neq1\}}$.
Next, sort the values in $\mathcal{O}_j(\mathcal{Y})$ and encode them such that $g_{m}^{\text{off}} = Y_{k[m],l[m]}$, where $k[m]$ and $l[m]$ are the indices $k$ and $l$ of the $m$-th element in this ordering.
This produces $(\numalt - 1)(\numalt - 2)/2$ such functions.
Finally, set $g_C = \Cstar$.
These functions are continuous and $\permgroupsubsub$-invariant, and they are easily seen to separate points on $(\utilcovspaceex \setminus \exceptsetutilex_j)/\permgroupsubsub$ by Theorem~\ref{thm:separation}.

Since $\permgroupsubsub$ is finite, $\utilcovspaceex/\permgroupsubsub$ is Hausdorff \citep{bredon1972introduction}. By Stone--Weierstrass, the algebra generated by these functions is dense on compact subsets. By the universal approximation theorem \citep{cybenko1989approximation, hornik1989multilayer}, MLPs can approximate any continuous function of these inputs. Because the choice probabilities are defined according to the integral given in \eqref{eq:choice-integral} under Assumption \ref{assumption:epsilon-star}, they are continuous functions; thus, the result holds.
\end{proof}

\section{Estimation Theory}
\label{sec:estimation-proofs}

This appendix contains additional information on the estimation theory results in Sections \ref{sec:estimation-theory} and \ref{sec:misspecification}.
Section~\ref{sec:approx-discussion} provides additional discussion of the approximation conditions: Assumptions~\ref{ass:approx} and~\ref{ass:grad-approx}.
Section \ref{sec:normality-proof} provides a proof of Theorem~\ref{thm:normality}.
Section \ref{sec:regularity-misspec-appendix} contains the regularity conditions for Section \ref{sec:misspecification}.
Section \ref{sec:proof-sandwich} provides a proof of Theorem~\ref{thm:sandwich}.

\subsection{Emulator Approximation Assumptions}
\label{sec:approx-discussion}

This section provides additional discussion of Assumptions~\ref{ass:approx} and~\ref{ass:grad-approx}, showing how they can be derived from more primitive conditions on the emulator and justifying these conditions through related theoretical results.

\textbf{Primitive assumptions.} We consider a sieve regime in which a sequence of emulators $\hat{P}_n$ is trained with increasing precision as the sample size $n$ grows. This is achieved by progressively (a) increasing the number of simulated training examples $s = s_n$ and (b) increasing the neural network complexity $c = c_n$ (e.g., the number of layers or hidden units). We impose the following primitive conditions:

\begin{enumerate}
    \item[(P1)] \emph{Compact covariate support}: The covariate space $\mathcal{C}$ is compact.
    \item[(P2)] \emph{Smooth utility and covariance mappings}: The mappings $\param \mapsto \Cstar_i(\mat{X}_i, \param)$, $\param \mapsto \utilstar_i(\mat{X}_i, \param)$, and $\param \mapsto \covstar(\mat{X}_i, \param)$ are continuously differentiable for all $\mat{X}_i \in \mathcal{C}$.
    \item[(P3)] \emph{Interior probability condition}: There exist a compact set $\utilcovspaceexsubset \subset \utilcovspaceex$ and $\underline{p} > 0$ such that $\inf_{(\Cstar, \utilstar, \covstar) \in \utilcovspaceexsubset} \min_j P_j(\Cstar, \utilstar, \covstar) \geq \underline{p}$ and the image of $\mathcal{C} \times \paramspace$ under the mapping $(\mat{X}, \param) \mapsto (\Cstar(\mat{X}, \param), \utilstar(\mat{X}, \param), \covstar(\mat{X}, \param))$ is contained in $\utilcovspaceexsubset$.
    \item[(P4)] \emph{Probability-scale approximation}: For some $\alpha \geq 0$, the emulator satisfies
    \[
    \sup_{(\Cstar, \utilstar, \covstar) \in \utilcovspaceexsubset} \max_{j=1,\ldots,\numalt} |\hat{P}_{n,j}(\Cstar, \utilstar, \covstar) - P_j(\Cstar, \utilstar, \covstar)| = o_p(n^{-\alpha/2}).
    \]
    \item[(P5)] \emph{Gradient approximation}: The emulator gradients satisfy
    \[
    \sup_{(\Cstar, \utilstar, \covstar) \in \utilcovspaceexsubset} \max_{j=1,\ldots,\numalt} \left\| \nabla_{\Cstar, \utilstar, \covstar} \log \hat{P}_{n,j} - \nabla_{\Cstar, \utilstar, \covstar} \log P_j \right\| = o_p(1).
    \]
\end{enumerate}

We next show how these assumptions imply Assumptions \ref{ass:approx} and \ref{ass:grad-approx}. We then provide justification of assumptions (P4) and (P5) using related results in approximation theory.

\textbf{From probability-scale errors to log-likelihood errors.} We now show that conditions (P1)--(P4) imply Assumption~\ref{ass:approx}. Write the log-likelihood error as
\[
\loglikhat(\param) - \loglik(\param) = \frac{1}{n} \sum_{i=1}^n \xi_i, \quad \text{where} \quad \xi_i = \log \hat{P}_{n,y_i}(\param) - \log P_{y_i}(\param),
\]
and $y_i \in \{1, \ldots, \numalt\}$ denotes the observed choice. Conditional on covariates and the trained emulator, the $\xi_i$ are independent. We decompose into bias and centered components:
\[
\loglikhat(\param) - \loglik(\param) = \underbrace{\frac{1}{n} \sum_{i=1}^n \E(\xi_i \mid \mat{X}_i)}_{\text{bias}} + \underbrace{\frac{1}{n} \sum_{i=1}^n \left\{\xi_i - \E(\xi_i \mid \mat{X}_i)\right\}}_{\text{centered term}}.
\]

Let $\delta_n = \sup_{(\Cstar, \utilstar, \covstar) \in \utilcovspaceexsubset} \max_j |\hat{P}_{n,j} - P_j| = o_p(n^{-\alpha/2})$ denote the probability-scale error from (P4). Taylor expanding $\log \hat{P}_{n,j}$ around $P_j$ gives
\[
\log \hat{P}_{n,j} - \log P_j = \frac{\hat{P}_{n,j} - P_j}{P_j} - \frac{(\hat{P}_{n,j} - P_j)^2}{2P_j^2} + O\left(\frac{\delta_n^3}{\underline{p}^3}\right).
\]
The conditional expectation of $\xi_i$ is
\[
\E(\xi_i \mid \mat{X}_i) = \sum_{j=1}^{\numalt} P_j (\log \hat{P}_{n,j} - \log P_j).
\]
The first-order contribution is $\sum_j P_j \cdot (\hat{P}_{n,j} - P_j)/P_j = \sum_j (\hat{P}_{n,j} - P_j) = 0$, since both probability vectors sum to one. Therefore,
\[
\E(\xi_i \mid \mat{X}_i) = -\sum_{j=1}^{\numalt} \frac{(\hat{P}_{n,j} - P_j)^2}{2P_j} + O\left(\frac{\delta_n^3}{\underline{p}^3}\right) = O\left(\frac{\delta_n^2}{\underline{p}}\right).
\]

For the centered term, the mean value theorem gives $|\log \hat{P}_{n,j} - \log P_j| \leq |\hat{P}_{n,j} - P_j| / \underline{p} \leq \delta_n / \underline{p}$, and hence $|\xi_i| \leq \delta_n / \underline{p}$. The conditional variance satisfies $\Var(\xi_i \mid \mat{X}_i) \leq \E(\xi_i^2 \mid \mat{X}_i) \leq \delta_n^2 / \underline{p}^2$. Since the centered terms $\xi_i - \E(\xi_i \mid \mat{X}_i)$ are conditionally independent with mean zero, the variance of their sample average is
\[
\Var\left[ \frac{1}{n} \sum_{i=1}^n \left\{\xi_i - \E(\xi_i \mid \mat{X}_i)\right\} \;\Big|\; \mat{X}_1, \ldots, \mat{X}_n \right] = \frac{1}{n^2} \sum_{i=1}^n \Var(\xi_i \mid \mat{X}_i) \leq \frac{\delta_n^2}{n \underline{p}^2}.
\]
By Chebyshev's inequality, the centered term is $O_p\left(\frac{\delta_n}{\underline{p} \sqrt{n}}\right)$.

Combining via the triangle inequality:
\[
|\loglikhat(\param) - \loglik(\param)| = O\left(\frac{\delta_n^2}{\underline{p}}\right) + O_p\left(\frac{\delta_n}{\sqrt{n}\underline{p}}\right).
\]
For $\alpha \in [0, 1]$, the bias term $O(\delta_n^2/\underline{p}) = o(n^{-\alpha})$ dominates the centered term $O_p\left(\frac{\delta_n}{\underline{p} \sqrt{n}}\right) = o_p\left(n^{-(\alpha+1)/2}\right)$. Since conditions (P1) and (P2) ensure that the mapping from $(\mat{X}, \param)$ to $(\Cstar, \utilstar, \covstar)$ is continuous on the compact set $\mathcal{C} \times \paramspace$, the bound holds uniformly over $\param \in \paramspace$:
\[
\sup_{\param \in \paramspace} |\loglikhat(\param) - \loglik(\param)| = o_p(n^{-\alpha}),
\]
which is Assumption~\ref{ass:approx}.

\textbf{From gradient approximation to Fisher information consistency.} We now show that conditions (P1), (P2), (P3), and (P5) imply Assumption~\ref{ass:grad-approx}. By the chain rule,
\[
\nabla_{\param} \log \hat{P}_{n,j}(\param) = \left\{\nabla_{\param} (\Cstar, \utilstar, \covstar)\right\}^\top \nabla_{(\Cstar, \utilstar, \covstar)} \log \hat{P}_{n,j},
\]
and similarly for $\nabla_{\param} \log P_j(\param)$. By condition (P2), the Jacobian $\nabla_{\param} (\Cstar, \utilstar, \covstar)$ is continuous on the compact set $\mathcal{C} \times \paramspace$ and hence uniformly bounded: $\sup_{(\mat{X}, \param) \in \mathcal{C} \times \paramspace} \|\nabla_{\param} (\Cstar, \utilstar, \covstar)\| \leq D$ for some constant $D < \infty$. Let
\[
\varepsilon_n = \sup_{(\Cstar, \utilstar, \covstar) \in \utilcovspaceexsubset} \max_j \left\| \nabla_{(\Cstar, \utilstar, \covstar)} \log \hat{P}_{n,j} - \nabla_{(\Cstar, \utilstar, \covstar)} \log P_j \right\| = o_p(1)
\]
denote the gradient approximation error from (P5). Then
\[
\sup_{\param \in \paramspace} \left\| \nabla_{\param} \log \hat{P}_{n,j}(\param) - \nabla_{\param} \log P_j(\param) \right\| \leq D \varepsilon_n = o_p(1).
\]

Recall the emulator and true outer-product-of-scores estimators are
\begin{align*}
\hat{\fisher}_n(\param) &= \frac{1}{n} \sum_{i=1}^n \nabla_{\param} \log \hat{P}_{n,y_i}(\param) \nabla_{\param} \log \hat{P}_{n,y_i}(\param)^\top, \\
\fisher_n(\param) &= \frac{1}{n} \sum_{i=1}^n \nabla_{\param} \log P_{y_i}(\param) \nabla_{\param} \log P_{y_i}(\param)^\top.
\end{align*}
Let $\hat{\vec{s}}_i = \nabla_{\param} \log \hat{P}_{n,y_i}(\param)$ and $\vec{s}_i = \nabla_{\param} \log P_{y_i}(\param)$. The identity
\[
\hat{\vec{s}}_i \hat{\vec{s}}_i^\top - \vec{s}_i \vec{s}_i^\top = (\hat{\vec{s}}_i - \vec{s}_i) \hat{\vec{s}}_i^\top + \vec{s}_i (\hat{\vec{s}}_i - \vec{s}_i)^\top
\]
implies
\[
\|\hat{\vec{s}}_i \hat{\vec{s}}_i^\top - \vec{s}_i \vec{s}_i^\top\| \leq \|\hat{\vec{s}}_i - \vec{s}_i\| \cdot (\|\hat{\vec{s}}_i\| + \|\vec{s}_i\|).
\]

The true scores $\vec{s}_i$ are uniformly bounded: (P3) ensures that all relevant inputs $(\Cstar, \utilstar, \covstar)$ lie in $\utilcovspaceexsubset$ where $P_j \geq \underline{p} > 0$, guaranteeing bounded gradients $\nabla_{(\Cstar, \utilstar, \covstar)} \log P_j$, and (P2) ensures the Jacobian $\nabla_{\param}(\Cstar, \utilstar, \covstar)$ is continuous and hence bounded on $\mathcal{C} \times \paramspace$. Let $M = \sup_{\param \in \paramspace} \max_j \|\nabla_{\param} \log P_j(\param)\| < \infty$, so that $\|\vec{s}_i\| \leq M$ for all $i$. Since $\|\hat{\vec{s}}_i - \vec{s}_i\| \leq D\varepsilon_n = o_p(1)$, the emulator scores satisfy $\|\hat{\vec{s}}_i\| \leq \|\vec{s}_i\| + D\varepsilon_n \leq M + o_p(1)$. Therefore,
\begin{align*}
\sup_{\param \in \paramspace} \|\hat{\fisher}_n(\param) - \fisher_n(\param)\| &\leq \frac{1}{n} \sum_{i=1}^n \|\hat{\vec{s}}_i - \vec{s}_i\| \cdot (\|\hat{\vec{s}}_i\| + \|\vec{s}_i\|) \\
&\leq \frac{1}{n} \sum_{i=1}^n D\varepsilon_n \cdot \left\{2M + o_p(1)\right\} = o_p(1),
\end{align*}
which is Assumption~\ref{ass:grad-approx}.

\textbf{Justification from approximation theory.} The primitive conditions (P4) and (P5) are supported by related theoretical results on neural network approximation and estimation, though a complete theory tailored to our specific setting remains an area for future work.

The total emulator error can be decomposed into approximation error and estimation error:
\[
\sup_{(\Cstar, \utilstar, \covstar) \in \utilcovspaceexsubset} \max_{j=1,\ldots,\numalt} |\hat{P}_{n,j}(\Cstar, \utilstar, \covstar) - P_j(\Cstar, \utilstar, \covstar)| \leq \underbrace{a(c_n)}_{\text{approximation}} + \underbrace{e(s_n, c_n)}_{\text{estimation}},
\]
where $a(c)$ is the distance from the true choice probability function to the best approximator in the network class of complexity $c$, and $e(s, c)$ is the deviation of the trained network from this best-in-class approximator due to training on $s$ simulated examples.

For the approximation error, provided the choice probabilities are real-analytic functions of $(\Cstar, \utilstar, \covstar)$ on the interior of the parameter space, neural networks with smooth activations can approximate them with exponentially small error. \citet{de2021approximation} establish that tanh networks with two hidden layers achieve approximation error decaying as $O\left\{N^{k/(d+1)} \exp(-C \cdot N^{1/(d+1)} \log N)\right\}$ in Sobolev norms $W^{k,\infty}$, where $N$ is the network width, $d$ is the input dimension, and $C > 0$ depends on the analyticity parameters of the target function. Earlier work established similar exponential rates for deep ReLU networks \citep{e2018exponential}, though under the stronger assumption that the target function admits an absolutely convergent power series expansion. Given these exponential convergence results, achieving approximation error $\varepsilon$ requires network complexity scaling only polylogarithmically in $1/\varepsilon$.

For the estimation error, standard results from empirical process theory yield bounds of order $O\left\{\sqrt{c \log(sc)/s}\right\}$ in $L^2$ norms, where $c$ is a complexity parameter \citep{schmidthieber2020, farrell2021, shen2023asymptotic}. For our setting, however, we require an $L^{\infty}$ bound. The results below assume minimally that $e(s, c) = O\left\{c^k\, \polylog(s)/\sqrt{s}\right\}$ for some $k < \infty$. Then, with network complexity growing polylogarithmically in the target accuracy, the dependence on $c$ is negligible, and the estimation error is effectively of order $O\left(1/\sqrt{s}\right)$ up to logarithmic factors.

Achieving the rate given in (P4) requires that both $a(c_n) = o(n^{-\alpha/2})$ and $e(s_n, c_n) = o_p(n^{-\alpha/2})$; note that the approximation error is nonstochastic because it concerns pure function approximation. The exponential approximation rates for analytic functions imply that the first condition is satisfied with polylogarithmic growth in $c_n$. The second condition requires $s_n$ to grow faster than $n^\alpha$ up to logarithmic factors. In practice, since simulation from the MNP model is computationally inexpensive, generating a training set with $s_n \gg n$ is feasible.

The cases $\alpha = 0$ and $\alpha = 1$ correspond to different inferential goals. When $\alpha = 0$, Assumption~\ref{ass:approx} requires only that the log-likelihood error vanish in probability, which suffices for consistency of the emulator-based estimator (Theorem~\ref{thm:consistency}). When $\alpha = 1$, the stronger requirement that the error vanish faster than $n^{-1}$ ensures that the emulator-based estimator is asymptotically equivalent to the true MLE, inheriting its $\sqrt{n}$-consistency and asymptotic normality (Theorem~\ref{thm:normality}). The latter requires a larger training set: $s_n$ growing faster than $n$ versus $s_n$ growing faster than $n^0 = 1$.

For the gradient approximation in (P5), \citet{czarnecki2017sobolev} show that neural networks are universal approximators in Sobolev norms, and the bounds of \citet{de2021approximation} hold in $W^{k,\infty}$ norms, establishing that smooth networks can simultaneously approximate both function values and derivatives. Additionally, \citet{cocola2020global} prove global convergence of gradient flow for Sobolev training with overparameterized two-layer networks, ensuring that the trained network achieves small Sobolev loss under appropriate conditions. These results support the plausibility of (P5), though the precise rates and conditions for our setting merit further investigation.
\vspace{12pt}

\subsection{Proof of Theorem \ref{thm:normality}}
\label{sec:normality-proof}

\begin{proof}
We prove the result for approximate maximizers satisfying $\loglikhat(\paramest) \geq \sup_{\param \in \paramspace} \loglikhat(\param) - \eta_n$ where $\eta_n = o_p(n^{-1})$. Exact maximizers correspond to $\eta_n = 0$. We proceed in three steps: first bounding the difference in log likelihoods, then showing that the estimators are asymptotically equivalent, and finally appealing to the asymptotic normality of the exact MLE.

\textbf{Step 1: Bounding the difference in log likelihoods.}

Define $\delta_n = \sup_{\param \in \paramspace} |\loglikhat(\param) - \loglik(\param)| = o_p(n^{-1})$ by Assumption~\ref{ass:approx} with $\alpha \geq 1$. By the triangle inequality,
\[
|\loglik(\paramest) - \loglik(\parammle)| \leq |\loglik(\paramest) - \loglikhat(\paramest)| + |\loglikhat(\paramest) - \loglik(\parammle)|.
\]
By Assumption~\ref{ass:approx}, the first term is bounded by $\delta_n$. For the second term, we establish upper and lower bounds on $\loglikhat(\paramest)$. Since $\parammle$ maximizes $\loglik$:
\[
\loglikhat(\paramest) \leq \loglik(\paramest) + \delta_n \leq \loglik(\parammle) + \delta_n.
\]
For the lower bound, by the approximate maximizer property and the fact that $\parammle \in \paramspace$:
\[
\loglikhat(\paramest) \geq \sup_{\param \in \paramspace} \loglikhat(\param) - \eta_n \geq \loglikhat(\parammle) - \eta_n \geq \loglik(\parammle) - \delta_n - \eta_n.
\]
Combining these bounds, $|\loglikhat(\paramest) - \loglik(\parammle)| \leq \delta_n + \eta_n$. Therefore,
\[
|\loglik(\paramest) - \loglik(\parammle)| \leq 2\delta_n + \eta_n = o_p(n^{-1}),
\]
which implies $n|\loglik(\paramest) - \loglik(\parammle)| = o_p(1)$.

\textbf{Step 2: Asymptotic equivalence of the estimators.}

We show that $\sqrt{n}(\paramest - \parammle) \convp \vec{0}$. Fix $\epsilon > 0$ and define the neighborhood
\[
B_{n,\epsilon} = \left\{\param \in \paramspace : \sqrt{n}\|\param - \parammle\| \leq \epsilon\right\}.
\]
We show that $\paramest \in B_{n,\epsilon}$ with probability approaching one.

For any $\param \notin B_{n,\epsilon}$, we have $\|\param - \parammle\| > \epsilon/\sqrt{n}$. Since $\parammle$ maximizes $\loglik$, the score vanishes: $\frac{1}{n}\sum_{i=1}^n \score(y_i, \mat{X}_i; \parammle) = \vec{0}$. By Taylor expansion around $\parammle$,
\[
\loglik(\param) = \loglik(\parammle) + \frac{1}{2}(\param - \parammle)^\top \gradparam^2 \loglik(\bar{\param})(\param - \parammle),
\]
where $\bar{\param}$ lies on the segment between $\param$ and $\parammle$. By consistency, both $\paramest \convp \paramtrue$ and $\parammle \convp \paramtrue$, so $\bar{\param} \convp \paramtrue$. By the uniform law of large numbers and Assumption~\ref{ass:regularity} (ii) and (iv),
\[
\gradparam^2 \loglik(\bar{\param}) \convp -\fisher(\paramtrue).
\]
Let $\lambda_{\min} > 0$ denote the minimum eigenvalue of $\fisher(\paramtrue)$, which is positive by Assumption~\ref{ass:regularity}(v). For sufficiently large $n$, with probability approaching one, the Hessian $\gradparam^2 \loglik(\bar{\param})$ has all eigenvalues bounded above by $-\lambda_{\min}/2$. Thus, for $\param \notin B_{n,\epsilon}$,
\[
n\left\{\loglik(\parammle) - \loglik(\param)\right\} \geq \frac{\lambda_{\min}}{4} n\|\param - \parammle\|^2 > \frac{\lambda_{\min}}{4} \cdot \epsilon^2 = \kappa > 0
\]
with probability approaching one as $n \to \infty$.

Now suppose $\Prob(\paramest \notin B_{n,\epsilon}) \not\to 0$. On the event that both $\paramest \notin B_{n,\epsilon}$ and the Hessian bound holds, we have $n(\loglik(\parammle) - \loglik(\paramest)) > \kappa$. Since the Hessian bound holds with probability approaching one, this event has probability bounded away from zero, which implies that $\Prob\left[\left\{n(\loglik(\parammle) - \loglik(\paramest)\right\} > \kappa\right] \not\to 0$. But this contradicts $n|\loglik(\paramest) - \loglik(\parammle)| = o_p(1)$ from Step 1. Therefore, $\Prob(\paramest \in B_{n,\epsilon}) \to 1$.

Since $\epsilon > 0$ was arbitrary, we conclude that $\sqrt{n}(\paramest - \parammle) \convp \vec{0}$.
\vspace{12pt}

\textbf{Step 3: Asymptotic normality.}

By standard maximum likelihood theory (Theorem 5.39 and Lemma 7.6 of \citet{vaart1998asymptotic}), the regularity conditions in Assumption~\ref{ass:regularity} ensure that the model is differentiable in quadratic mean at $\paramtrue$ with nonsingular Fisher information. Since $\parammle$ is consistent,
$\sqrt{n}(\parammle - \paramtrue) \convd \mathcal{N}\left\{\vec{0}, \fisher(\paramtrue)^{-1}\right\}.$
Since $\sqrt{n}(\paramest - \parammle) \convp \vec{0}$, Slutsky's theorem gives
\[
\sqrt{n}(\paramest - \paramtrue) = \sqrt{n}(\paramest - \parammle) + \sqrt{n}(\parammle - \paramtrue) \convd \mathcal{N}\left\{\vec{0}, \fisher(\paramtrue)^{-1}\right\}. \qedhere
\]
\end{proof}

\subsection{Regularity Conditions for Inference Under Misspecification}
\label{sec:regularity-misspec-appendix}

This appendix states the regularity conditions required for the results in Section~\ref{sec:misspecification}. These conditions parallel standard assumptions for M-estimators \citep[Chapter 5]{vaart1998asymptotic} and are mild for neural network emulators with smooth activation functions. Recall that in Section~\ref{sec:misspecification}, we treat the neural network as fixed (not varying with $n$), so we use $\loglikobshat$ and $\scorehat$ without $n$ subscripts.

\begin{assumption}[Identification and Uniform Convergence]
\label{ass:pseudo-identification}
The following conditions hold:
\begin{enumerate}
    \item[(i)] The parameter space $\paramspace$ is compact.
    \item[(ii)] The pseudo-true parameter $\parampseudo = \argmax_{\param \in \paramspace} \E\left\{\loglikobshat(y, \mat{X}; {\param})\right\}$ exists uniquely and lies in the interior of $\paramspace$.
    \item[(iii)] The pseudo-true parameter is well-separated: for every $\epsilon > 0$,
    \[
    \sup_{\param: \|\param - \parampseudo\| \geq \epsilon} \E\left\{\loglikobshat(y, \mat{X}; {\param})\right\} < \E\left\{\loglikobshat(y, \mat{X}; \parampseudo)\right\}.
    \]
    \item[(iv)] The sample criterion function converges uniformly to its expectation:
    \[
    \sup_{\param \in \paramspace} \left| \frac{1}{n} \sum_{i=1}^n \loglikobshat(y_i, \mat{X}_i; \param) - \E[\loglikobshat(y, \mat{X}; {\param})] \right| \convp 0.
    \]
    \item[(v)] The estimator $\paramest$ is an approximate maximizer of the sample criterion function:
    \[
    \frac{1}{n} \sum_{i=1}^n \loglikobshat(y_i, \mat{X}_i; \paramest) \geq \sup_{\param \in \paramspace} \frac{1}{n} \sum_{i=1}^n \loglikobshat(y_i, \mat{X}_i; \param) - o_p(n^{-1}).
    \]
\end{enumerate}
\end{assumption}

\begin{assumption}[Regularity for Asymptotic Normality]
\label{ass:regularity-misspec}
The following conditions hold:
\begin{enumerate}
    \item[(i)] (Differentiability) For each $(y, \mat{X})$, the map $\param \mapsto \loglikobshat(y, \mat{X}; \param)$ is differentiable on $\paramspace$ with gradient $\scorehat(y, \mat{X}; \param) = \nabla_{\param} \loglikobshat(y, \mat{X}; \param)$.
    \item[(ii)] (Local dominance) There exists $\delta > 0$ and a measurable function $\dot{m}(y, \mat{X})$ having $\E[\dot{m}(y, \mat{X})^2] < \infty$ such that
    \[
    \sup_{\param: \|\param - \parampseudo\| \leq \delta} \|\scorehat(y, \mat{X}; \param)\| \leq \dot{m}(y, \mat{X}).
    \]
    \item[(iii)] (Second-order expansion) The map $\param \mapsto \E\left\{\loglikobshat(y, \mat{X}; \param)\right\}$ admits a second-order Taylor expansion at $\parampseudo$:
    \[
    \E\left\{\loglikobshat(y, \mat{X}; \param)\right\} = \E\left\{\loglikobshat(y, \mat{X}; \parampseudo)\right\} - \frac{1}{2}(\param - \parampseudo)^\top \Amat(\parampseudo) (\param - \parampseudo) + o(\|\param - \parampseudo\|^2),
    \]
    where $\Amat(\parampseudo) = -\E\left\{\nabla_{\param}^2 \loglikobshat(y, \mat{X}; \parampseudo)\right\}$ is positive definite.
\end{enumerate}
\end{assumption}

\subsection{Proof of Theorems~\ref{thm:consistency-misspec} and~\ref{thm:sandwich}}
\label{sec:proof-sandwich}

\begin{proof}[Proof of Theorem~\ref{thm:consistency-misspec}]
Define the sample and population criterion functions
\[
M_n(\param) = \frac{1}{n} \sum_{i=1}^n \loglikobshat(y_i, \mat{X}_i; \param), \qquad M(\param) = \E\left\{\loglikobshat(y, \mat{X}; \param)\right\}.
\]
By Assumption~\ref{ass:pseudo-identification}(iv), $\sup_{\param \in \paramspace} |M_n(\param) - M(\param)| \convp 0$. By Assumption~\ref{ass:pseudo-identification}(ii)--(iii), $\parampseudo$ is the unique maximizer of $M(\param)$ and is well-separated in the sense that $\sup_{\param: \|\param - \parampseudo\| \geq \epsilon} M(\param) < M(\parampseudo)$ for every $\epsilon > 0$. By Assumption~\ref{ass:pseudo-identification}(v), $M_n(\paramest) \geq \sup_{\param \in \paramspace} M_n(\param) - o_p(n^{-1})$, which implies $M_n(\paramest) \geq \sup_{\param \in \paramspace} M_n(\param) - o_p(1)$. The result $\paramest \convp \parampseudo$ now follows from Theorem 5.7 of \citet{vaart1998asymptotic}.
\end{proof}

\begin{proof}[Proof of Theorem~\ref{thm:sandwich}]
We verify the conditions of Theorem 5.23 of \citet{vaart1998asymptotic} and apply it to obtain the result.

By Assumption~\ref{ass:regularity-misspec}(i), the map $\param \mapsto \loglikobshat(y, \mat{X}; \param)$ is differentiable at $\parampseudo$ for every $(y, \mat{X})$, with derivative $\scorehat(y, \mat{X}; \parampseudo)$. By Assumption~\ref{ass:regularity-misspec}(ii), the score is locally dominated by a square-integrable function, which implies the Lipschitz condition
\[
|\loglikobshat(y, \mat{X}; \param_1) - \loglikobshat(y, \mat{X}; \param_2)| \leq \dot{m}(y, \mat{X}) \|\param_1 - \param_2\|
\]
for $\param_1, \param_2$ in a neighborhood of $\parampseudo$ via the mean value theorem. Assumption~\ref{ass:regularity-misspec}(iii) provides the required second-order Taylor expansion of $\param \mapsto \E\left\{\loglikobshat(y, \mat{X}; \param)\right\}$ at the point of maximum $\parampseudo$, with nonsingular second derivative matrix $-\Amat(\parampseudo)$. By Theorem~\ref{thm:consistency-misspec}, $\paramest \convp \parampseudo$. By Assumption~\ref{ass:pseudo-identification}(v), the near-maximization condition $M_n(\paramest) \geq \sup_{\param \in \paramspace} M_n(\param) - o_p(n^{-1})$ is satisfied.

Applying Theorem 5.23 of \citet{vaart1998asymptotic}:
\[
\sqrt{n}(\paramest - \parampseudo) = \Amat(\parampseudo)^{-1} \frac{1}{\sqrt{n}} \sum_{i=1}^n \scorehat(y_i, \mat{X}_i; \parampseudo) + o_p(1).
\]
By Assumption~\ref{ass:regularity-misspec}(ii), $\E\left\{\|\scorehat(y, \mat{X}; \parampseudo)\|^2\right\} \leq \E\left\{\dot{m}(y, \mat{X})^2\right\} < \infty$. The central limit theorem and Slutsky's lemma yield
\[
\sqrt{n}(\paramest - \parampseudo) \convd \mathcal{N}\left\{\vec{0}, \Amat(\parampseudo)^{-1} \Bmat(\parampseudo) \Amat(\parampseudo)^{-\top}\right\},
\]
where $\Bmat(\parampseudo) = \E\left\{\scorehat(y, \mat{X}; \parampseudo) \scorehat(y, \mat{X}; \parampseudo)^\top\right\}$.
\end{proof}

\section{Simulation Study Details}
\label{sec:simulation-details}

This appendix provides additional details on the factor model parameterization, emulator training procedure, and the scaled Wishart distribution used in the simulation study.

\subsection{Input Feature Details}
\label{sec:input-feature-details}

This appendix documents the input features for each component of the per-alternative encoder described in Section~\ref{sec:architecture}. Table~\ref{tab:input-features} summarizes which features are passed to each component. We also describe transformations to reduce sensitivity to outliers.

\begin{table}[ptbh]
\centering
\begin{tabular}{lccc}
\toprule
Feature & Diagonal & Off-diagonal & Pass-through \\
\midrule
$v_j^*$ & \checkmark & & \checkmark \\
$v_k^*$ & \checkmark & & \\
$\Sigma_{jj}^*$ & \checkmark & & \checkmark \\
$\Sigma_{kk}^*$ & \checkmark & & \\
$\sigma_j$ & \checkmark & & \checkmark \\
$\sigma_k$ & \checkmark & & \\
$\Sigma_{jk}^*$ & \checkmark & & \\
$\rho_{jk}$ & \checkmark & & \\
$\ssign(z_{jk})$ & \checkmark & & \\
\midrule
$\Sigma_{kl}^*$ & & \checkmark & \\
$\rho_{kl}$ & & \checkmark & \\
$(v_k^* - v_l^*)^2$ & & \checkmark & \\
$\ssign(z_{kl})^2$ & & \checkmark & \\
$\Sigma_{kk}^* + \Sigma_{ll}^*$ & & \checkmark & \\
$\sigma_k + \sigma_l$ & & \checkmark & \\
\midrule
$\bar{\rho}_j$ & & & \checkmark \\
$\overline{\rho^2}_j$ & & & \checkmark \\
$\bar{\Sigma}_j$ & & & \checkmark \\
$\overline{\Sigma^2}_j$ & & & \checkmark \\
$\ssign(v_j^*/\sigma_j)$ & & & \checkmark \\ [5pt]
\midrule 
\multicolumn{4}{l}{\textit{Optional features}} \\[5pt]
$1_{\{k=1\}}$ & \checkmark & & \\
$\ssign(\log \Cstar)$ & & & \checkmark \\
$1_{\{j=1\}}$ & & & \checkmark \\
$\numalt$ & & & \checkmark \\
\bottomrule
\end{tabular}
\caption{Input features for each component of the per-alternative encoder. Diagonal features are indexed by rival $k \neq j$; off-diagonal features are indexed by pairs $(k, l)$ with $k < l$ and $k, l \neq j$; pass-through features are indexed by the focal alternative $j$. Optional features are included only for non-factorization-invariant models or when training across variable $\numalt$.}
\label{tab:input-features}
\end{table}

\subsubsection*{Smooth Soft-Sign Function}

Several features are unbounded (e.g., z-scores, signal-to-noise ratios). We apply a smooth soft-sign function to map these to $(-1, 1)$:
\begin{equation}\label{eq:ssign}
\ssign(x) = \frac{0.1 x}{\sqrt{1 + (0.1 x)^2}}.
\end{equation}
This function is a $C^\infty$ bijection from $\R$ to $(-1,1)$, so applying it preserves the universal approximation property (Theorem~\ref{thm:universal}).

\subsubsection*{Feature Definitions}

Let $j$ denote the focal alternative and $k, l \neq j$ denote rival alternatives. We define:
\begin{align*}
\sigma_j &= \sqrt{\Sigma_{jj}^*}, &
\rho_{jk} &= \frac{\Sigma_{jk}^*}{\sigma_j \sigma_k}, &
z_{jk} &= \frac{v_j^* - v_k^*}{\sqrt{\Sigma_{jj}^* + \Sigma_{kk}^* - 2\Sigma_{jk}^*}}.
\end{align*}
In the implementation, we add a small constant ($10^{-8}$) to denominators before division and to arguments before taking square roots to ensure numerical stability.

\textbf{Diagonal DeepSet.} For each pair $(j, k)$ with $k \neq j$, the feature vector $\vec{d}_{jk}$ contains:
\begin{align*}
v_k^*, \quad \Sigma_{kk}^*, \quad \sigma_k, \quad \Sigma_{jk}^*, \quad \rho_{jk}, \quad v_j^*, \quad \Sigma_{jj}^*, \quad \sigma_j, \quad \ssign(z_{jk}).
\end{align*}

\textbf{Off-diagonal DeepSet.} For each pair $(k, l)$ with $k < l$ and $k, l \neq j$, the feature vector $\vec{o}_{kl}$ contains:
\begin{align*}
\Sigma_{kl}^*, \quad \rho_{kl}, \quad (v_k^* - v_l^*)^2, \quad \ssign(z_{kl})^2, \quad \Sigma_{kk}^* + \Sigma_{ll}^*, \quad \sigma_k + \sigma_l.
\end{align*}
These features are symmetric in $k$ and $l$, as required since the pairs have no natural ordering.

\textbf{Pass-through features.} For each alternative $j$, the feature vector $\vec{s}_j$ contains:
\begin{align*}
v_j^*, \quad \Sigma_{jj}^*, \quad \sigma_j, \quad \bar{\rho}_j, \quad \overline{\rho^2}_j, \quad \bar{\Sigma}_j, \quad \overline{\Sigma^2}_j, \quad \ssign(v_j^*/\sigma_j),
\end{align*}
where the summary statistics are
\begin{align*}
\bar{\rho}_j &= \frac{1}{\numalt-1}\sum_{k \neq j} \rho_{jk}, &
\overline{\rho^2}_j &= \frac{1}{\numalt-1}\sum_{k \neq j} \rho_{jk}^2, \\
\bar{\Sigma}_j &= \frac{1}{\numalt-1}\sum_{k \neq j} \Sigma_{jk}^*, &
\overline{\Sigma^2}_j &= \frac{1}{\numalt-1}\sum_{k \neq j} (\Sigma_{jk}^*)^2.
\end{align*}

\textbf{Optional features.} For models that are not factorization-invariant (Definition~\ref{def:factorization-invariant}), we include additional features to prevent information loss from the preprocessing transformation (Section~\ref{sec:preprocessing}): the normalizing constant $\ssign(\log \Cstar)$ and index indicators $1_{\{j=1\}}$ and $1_{\{k=1\}}$. When training a single emulator across different numbers of alternatives, we also include $\numalt$ as a feature.

\subsection{Emulator Training}
\label{sec:emulator-training}

We train the emulator using the Adam optimizer \citep{kingma2015adam} with the AdamW weight decay variant \citep{loshchilov2019decoupled}. The base learning rate is $0.003$ with momentum parameters $\beta_1 = 0.9$ and $\beta_2 = 0.999$, numerical stability constant $\epsilon = 10^{-8}$, and weight decay $3 \times 10^{-6}$.

\textbf{Learning rate schedule.} We use a learning rate schedule with linear warmup followed by inverse-square-root decay. During the first $5{,}000$ steps, the learning rate increases linearly from zero to the base rate. After warmup, the learning rate decays as $\sqrt{t_{\text{warmup}}/t}$, where $t$ is the current step and $t_{\text{warmup}} = 5{,}000$.

\textbf{Dropout schedule.} We apply dropout during training to encourage diversity among learned representations. We set the dropout rate to $1/(10t)$, where $t$ indexes the training episodes from $1$ to $400{,}000$.

\textbf{Training procedure.} We pregenerate $400{,}000$ training examples, each consisting of latent inputs $(\util, \mat{L})$, simulated choice frequencies based on $10^6$ Monte Carlo draws, and target gradients computed using the soft relaxation described in Section~\ref{sec:design}. During training, we randomly generate directions in this latent space and compute directional derivatives as described in Section \ref{sec:training-procedure}.
Gradients are clipped to a maximum norm of $0.01$ and non-finite gradient penalties are filtered from the loss function.
We train for $400{,}000$ optimizer steps with a batch size of $10{,}000$.

\textbf{Network architecture.} Table~\ref{tab:emulator-architecture} summarizes the neural network architecture for each value of $\numalt$. All networks use the Swish activation function. Larger values of $\numalt$ require larger networks to maintain approximation accuracy.

\begin{table}[htb]
\centering
\begin{tabular}{lccc}
\toprule
Component & $\numalt = 3$ & $\numalt = 5$ & $\numalt = 10$ \\
\midrule
\textit{Diagonal DeepSet} & & & \\[3pt]
\quad $\phi$ output dim & 8 & 24 & 32 \\
\quad $\phi$ hidden layers & 0 & 1 & 1 \\
\quad $\phi$ hidden dim & N/A & 24 & 32 \\
\quad $\rho$ output dim & 8 & 24 & 32 \\
\quad $\rho$ hidden layers & 0 & 0 & 0 \\
\quad Parameters & 152 & 1{,}440 & 2{,}432 \\[6pt]
\textit{Off-diagonal DeepSet} & & & \\[3pt]
\quad $\phi$ output dim & 8 & 24 & 32 \\
\quad $\phi$ hidden layers & 0 & 1 & 1 \\
\quad $\phi$ hidden dim & N/A & 24 & 32 \\
\quad $\rho$ output dim & 8 & 24 & 32 \\
\quad $\rho$ hidden layers & 0 & 0 & 0 \\
\quad Parameters & 128 & 1{,}368 & 2{,}336 \\[6pt]
\textit{Combining MLP} & & & \\[3pt]
\quad Output dim & 8 & 24 & 32 \\
\quad Hidden dim & 12 & 48 & 64 \\
\quad Hidden layers & 2 & 2 & 2 \\
\quad Parameters & 572 & 6{,}312 & 10{,}976 \\[6pt]
\textit{Equivariant layers} & & & \\[3pt]
\quad Hidden dim & 4 & 12 & 16 \\
\quad Hidden layers & 1 & 1 & 1 \\
\quad Parameters & 76 & 612 & 1{,}072 \\[6pt]
\textit{Total parameters} & 928 & 9{,}732 & 16{,}816 \\
\bottomrule
\end{tabular}
\caption{Neural network architecture by number of alternatives $\numalt$. All configurations use an initial dropout rate of $0.1$ and Swish activations. When the number of hidden layers is 0, the hidden dimension is not applicable (N/A).}
\label{tab:emulator-architecture}
\end{table}

\subsection{Derivation of Scaled Wishart PDF}
\label{sec:scaled-wishart}

This section derives the PDF of a scaled Wishart random matrix as described in Section \ref{sec:simulation-design}.

Let $\mat{W} \sim W_d(\mat{V}, n)$ denote a $d \times d$ Wishart random matrix with scale matrix $\mat{V}$ and $n \geq d$ degrees of freedom. We derive the distribution of $\cov = \mat{W} / W_{jj}$, where $W_{jj}$ is the $(j,j)$ element of $\mat{W}$ for some $j \in \{1, \ldots, d\}$.

The Wishart density is
\[
f_{\mat{W}}(\mat{W}) = \frac{|\mat{W}|^{(n-d-1)/2} \exp\left\{-\frac{1}{2}\mathrm{tr}(\mat{V}^{-1}\mat{W})\right\}}{2^{nd/2} |\mat{V}|^{n/2} \Gamma_d(n/2)}
\]
where $\Gamma_d(\cdot)$ is the multivariate gamma function.

The inverse transformation is $\mat{W} = W_{jj} \cov$. Note that $\cov$ is positive definite with $\Sigma_{jj} = 1$, so it has $d(d+1)/2 - 1$ free parameters. Together with $W_{jj}$, this matches the $d(d+1)/2$ unique elements of $\mat{W}$. The Jacobian determinant of this transformation is $w_{jj}^{d(d+1)/2 - 1}$, where $w_{jj}$ denotes the value of $W_{jj}$.

Substituting $\mat{W} = W_{jj} \cov$ and applying the Jacobian yields the joint density
\[
f_{W_{jj}, \cov}(w_{jj}, \cov) = \frac{|\cov|^{(n-d-1)/2}}{2^{nd/2} |\mat{V}|^{n/2} \Gamma_d(n/2)} \cdot w_{jj}^{nd/2 - 1} \exp\left\{-\frac{w_{jj}}{2}\mathrm{tr}(\mat{V}^{-1} \cov)\right\}.
\]

Integrating over $w_{jj} > 0$ using the gamma integral gives the marginal density of $\cov$:
\[
f_{\cov}(\cov) = \frac{\Gamma(nd/2)}{2^{nd/2} |\mat{V}|^{n/2} \Gamma_d(n/2)} \cdot \frac{|\cov|^{(n-d-1)/2}}{\left\{\frac{1}{2}\mathrm{tr}(\mat{V}^{-1} \cov)\right\}^{nd/2}}
\]
for symmetric, positive definite $\cov$ with $\Sigma_{jj} = 1$ for some fixed $j \in \{1, \ldots, d\}$.

In the simulation study, we use $d = \numalt - 1$, $n = \numalt + 10$, $\mat{V} = \mat{I}_{\numalt-1}$, and $j = 1$. Substituting these values and simplifying yields the density given in Section~\ref{sec:simulation-design}.

\end{document}